\newcommand{\cA}{{\mathcal A}}
\newcommand{\cC}{{\mathcal C}}
\newcommand{\cD}{{\mathcal D}}
\newcommand{\cF}{{\mathcal F}}
\newcommand{\cG}{{\mathcal G}}
\newcommand{\cH}{{\mathcal H}}
\newcommand{\cJ}{{\mathcal J}}
\newcommand{\cK}{{\mathcal K}}
\newcommand{\cL}{{\mathcal L}}
\newcommand{\cM}{{\mathcal M}}
\newcommand{\cN}{{\mathcal N}}
\newcommand{\cP}{{\mathcal P}}
\newcommand{\cR}{{\mathcal R}}
\newcommand{\cS}{{\mathcal S}}
\newcommand{\cT}{{\mathcal T}}
\newcommand{\cU}{{\mathcal U}}
\newcommand{\cX}{{\mathcal X}}
\newcommand{\cY}{{\mathcal Y}}
\newcommand{\fI}{{\mathfrak I}}
\newcommand{\bbmE}{{\mathbbm E}}
\newcommand{\bbmN}{{\mathbbm N}}
\newcommand{\bbmeins}{{\mathbbm 1}}
\newcommand{\id}{\mathrm{id}}
\newcommand{\tr}{\mathrm{tr}}
\newcommand{\cl}{\mathrm{cl}}
\newcommand{\conv}{\mathrm{conv}}
\newtheorem{theorem}{Theorem}
\newtheorem{definition}[theorem]{Definition}
\newtheorem{lemma}[theorem]{Lemma}
\newtheorem{proposition}[theorem]{Proposition}
\newtheorem{remark}[theorem]{Remark}
\newtheorem{fact}[theorem]{Fact}
\begin{document}
	\title{Simultaneous transmission of classical and quantum information under channel uncertainty and jamming attacks}

	\author{Holger Boche, Gisbert Jan\ss en, Sajad Saeedinaeeni\\
		\scriptsize{Electronic addresses: \{boche, gisbert.janssen, sajad.saeedinaeeni\}@tum.de}
		\vspace{0.2cm}\\
		{\footnotesize Lehrstuhl f\"ur Theoretische Informationstechnik, Technische Universit\"at M\"unchen,}\\
		{\footnotesize 80290 M\"unchen, Germany }}
	
	\maketitle
	\section{Abstract}
	We derive universal codes for simultaneous transmission of classical messages and entanglement through quantum channels, possibly under attack of a malignant third party. These codes are robust to different kinds of channel uncertainty. To construct such universal codes, we invoke and generalize properties of random codes for classical and quantum message transmission through quantum channels. We show these codes to be optimal by giving a multi-letter characterization of regions corresponding to capacity of compound quantum channels for simultaneously transmitting and generating entanglement with classical messages. Also, we give dichotomy statements in which we characterize the capacity of arbitrarily varying quantum channels for simultaneous transmission of classical messages and entanglement. These include cases where the malignant jammer present in the arbitrarily varying channel model is classical (chooses channel states of product form) and fully quantum (is capable of general attacks not necessarily of product form). 
	\section{Introduction}
   In real world communication using quantum or classical systems, the parameter determining the channel in use may belong to an uncertainty set, rendering the protocols that assume the channel to be perfectly known practically obsolete. Given such uncertainty, when using the channel many times, as done in Shannon theoretic information processing tasks, assuming the channel to be memoryless or fully stationary is not realistic. In this paper, we consider three models that include channel uncertainty without attempting to reduce it via techniques such as channel identification or tomography. We refer to these models as the compound, arbitrarily varying and fully quantum arbitrarily varying channel models. Each of these models are considered here for transmission of entanglement and classical messages simultaneously between a sender and receiver.\\
   Informally, the first two channel models consist of a set of quantum channels $\{\cN_{s}\}_{s\in S}$ known to the communicating parties. In the compound model, communication is done under the assumption that asymptotically, one of the channels from this set (unknown to the parties) is used in a memoryless fashion. The codes used in this model therefore have to be reliable for the whole family $\{\cN_{s}^{\otimes l}\}_{s\in S}$ of memoryless channels for large enough values of $l\in\mathbb{N}$.\\
    In the arbitrarily varying model, given a number of channel uses $l$, an adversarial party chooses the sequence $s^{l}=(s_{1},\dots,s_{l})\in S^{l}$ unknown to the communication parties, to yield the channel $\cN_{s^{l}}=\bigotimes_{i=1}^{l}\cN_{s_{i}}$. The adversary may choose this sequence knowing the encoding procedure used by the sender. The code in use therefore has to be reliable for the whole family $\{\cN_{s^{l}}\}_{s^{l}\in S^{l}}$ of memoryless channels. Finally, in the third channel model, namely that of the fully quantum arbitrarily varying, the assumption of memoryless communication is dropped. Here, the adversary may choose channel states that are not necessarily of the product form mentioned in the previous model. \\
   The quantum channel has different capacities for information transmission. One may consider the capacity of the channel for public (\cite{holevo, p2}) or private (\cite{idev, caietal}) classical message transmission, entanglement transmission or entanglement generation (\cite{idev}) to name a few. These communication scenarios have been considered subsequently under channel uncertainty (\cite{boche17, mosonyi, c1, c2, ahls, advnoise}). Simultaneous transmission of classical and quantum messages, the subject of this work, has also been of interest(\cite{devetak}). This includes scenarios where the communication parties would like to enhance their classical message transmission by sharing quantum information primarily at their disposal or vice versa(\cite{e1, e2, e3}). The body of research in this area is clearly interesting, when regions beyond those achieved by simple time-sharing between established classical message and quantum information transmission codes are reached. \\
   Simultaneous transmission of classical messages and entanglement is a nontrivial problem even if capacity achieving codes for the corresponding univariate transmission goals are at hand. It was already observed in \cite{devetak} for perfectly known quantum channels that the naive time sharing strategy is generally insufficient to achieve the full capacity region. Examples of channels where coding beyond time-sharing is indispensable does not depend on constructing pathologies. They are readily found even within the standard arsenal of qubit quantum channels, e.g. the dephasing qubit channels \cite{devetak}.\\
    We derive codes for simultaneous transmission of classical messages and entanglement that are robust to the three types of uncertainty mentioned above. The codes used here for the compound model, are different from those used for the point to point communication in \cite{devetak} when considering the special case of $|S|=1$. Given that the input state  approximation techniques used therein prove insufficient in presence of channel state uncertainty, in the present work we use the decoupling approach first established in \cite{klesse}. We combine robust random codes for classical message transmission from \cite{mosonyi} and a generalization of (decoupling based) entanglement transmission codes from \cite{boche17} to construct appropriate simultaneous codes for compound quantum channels under the maximal error criterion. We show that these codes are optimal by giving a multi-letter characterization of the capacity of compound quantum channels with no assumption on, the size of the underlying uncertainty set. We use the asymptotic equivalence of the two tasks of entanglement transmission and entanglement generation to include the capacity region corresponding to simultaneous transmission of classical messages and generation of entanglement between the two parties. \\
	Next, we convert the codes derived for the compound channel, using Ahlswede's robustification and elimination techniques (\cite{ahls}) to derive suitable codes for arbitrarily varying quantum channels. This is possible given that the error functions associated with codes corresponding to the compound model decay to zero exponentially. We derive a dichotomy statement (\cite{ahls}), for the simultaneous classical message and entanglement transmission through AVQCs under the average error criterion. This dichotomy is observed when considering two scenarios where the communicating parties do and do not have access to unlimited common randomness, yielding the common-randomness and deterministic capacity regions of the channel model respectively. Therefore, we show that firstly, the common-randomness capacity region of the arbitrarily varying channel is equal to that of the compound channel $\conv(\cJ)$, namely the compound channel generated by the convex hull of the uncertainty set of channels $\cJ$. Secondly, if the deterministic capacity of the arbitrarily varying channel is not the point $(0,0)$, it is equal to the common-randomness capacity of the channel. \\
	We give a necessary and sufficient condition for the deterministic capacity region to be be the point $(0,0)$. This condition is known as symmetrizablity of the channel (see \cite{advnoise} and \cite{bochenozel}). Finally, we show that the codes derived here, can be used for fully quantum AVCs where the jammer is not restricted to product states, but can use general quantum states to parametrize the channel used many times. This model has been introduced in Section \ref{classicallyenhanced} along with the main result and related work for fully quantum AVCs and hence here, we avoid further explanation of the techniques used there.\\
	The task of simultaneous transmission of classical messages and entanglement was first considered by Devetak and Shor in \cite{devetak} in case of a memoryless quantum channel under assumption that the channels state is perfectly known to its users. The authors derived a multi-letter characterization of the capacity region in this setting which also classified the na\"ive time-sharing approach as being suboptimal for simultaneous transmission. A code construction sufficient to achieve also the rate pairs lying outside the time-sharing region was derived using a "piggy-backing" technique. A specialized construction introduced in \cite{idev} allows to encode the identity of the classical message into the coding states of an underlying entanglement transmission code. The mentioned strategy to optimally combine different communication tasks in quantum channel coding was afterwards used and further developed in different directions. We explicitly mention subsequent research activity by Hsieh and Wilde \cite{e2,e3} where the idea of "piggy backing" classical messages onto quantum codes was extended to include entanglement assistance. The resulting code construction being sufficient to achieve each point in the three-dimensional rate region for entanglement-assisted classical/quantum simultaneous transmission leads to a full
	(multi-letter) characterization of the "Quantum dynamic capacity" of a (perfectly known) quantum channel \cite{hsieh11} (see the textbook \cite{wilde13} for an up-to-date pedagocial presentation of the mentioned results). \newline In order to derive classically enhanced quantum codes being robust against channel uncertainty, we refine the construction entanglement transmission codes for compound quantum channels from \cite{boche17,boche18} instead of elaborating on the usual approach building up on codes from \cite{idev}. In fact, it was noticed earlier that deriving entanglement generation codes from secure classical message transmission codes (the strategy which the arguments in \cite{idev} follow) seems to be not suitable when the channel is a compound quantum channel.\\
	In the first section following this introduction, we introduce the notation used in this work. Precise definitions of the channel models, codes used in different scenarios along with capacity regions and finally the main results in form of Theorem \ref{mainresult} and Theorem \ref{avcqmainresult}, are given in Section \ref{basicdef}. In Section \ref{preliminaryresults}, we present preliminary coding results for entanglement transmission (Section \ref{etcsection}) and classical message transmission (Section \ref{ctresult}). The entanglement transmission codes introduced in this section are a generalization of the random codes in \cite{boche17} and \cite{boche18} to accommodate conditional typicality of the input on words from many copies of an alphabet. The classical message transmission codes are those from \cite{mosonyi} that prove sufficient for our simultaneous coding purposes. \\
	Equipped with these results, we move on to Section \ref{proofsforcompound}, to prove the coding results for the compound channel model. In this section, after proving a converse for the capacity region in Theorem \ref{mainresult}, we prove the direct part in two steps. In the first step, we show that capacity regions that correspond to the case where the sender is restricted to inputting maximally entangled pure states are achieved. In the second step, we prove achievablity of capacity regions corresponding to general inputs, using elementary methods that are less involved that the usual BSST type results used for this generalization in \cite{boche17} and \cite{boche18}. \\
	In Section \ref{avqcsection}, after proving a converse for the capacity region under the arbitrarily varying channel model, we prove coding results in this model by converting the compound channel model codes using Ahlswede's robustification method. This, assumes unlimited common randomness available to the legal parties. We then use an instance of elimination to show that if the deterministic capacity region is not the point $(0,0)$, negligible amount of common randomness per use of the channel is sufficient to achieve the same capacity region. Also in this section, we prove necessity and sufficiency of symmetrizablity condition for the case where the deterministic capacity region is the point $(0,0)$. Finally, in Section \ref{classicallyenhanced}, we generalize these results to the case of quantum jammer by proving Theorem \ref{thm:full_avqc_rand_cap}. \\
	\section{Notations and conventions}
	All Hilbert spaces are assumed to have finite dimensions and are over the field $\mathbb{C}$. All alphabets are also assumed to have finite dimensions. We denote the set of states by $\cS(\cH):=\{\rho\in\cL(\cH):\rho\geq 0,\text{tr}(\rho)=1\}$. Pure states are given by projections onto one-dimensional subspaces. To each subspace $\cF\subset\cH$, we can associate unique projection $q_{\cF}$
	whose range is the subspace $\cF$ and we write $\pi_{\cF}$ for the maximally mixed state on $\cF$, i.e. 
	\begin{equation*}
	\pi_{\cF} := \frac{q_{\cF}}{\text{tr}(q_{F} )}.
	\end{equation*}
	The set of completely positive trace preserving (CPTP) maps between the operator spaces $\cL(\cH_{A})$ and $\cL(\cH_{B})$
	is denoted by $\cC(\cH_{A}, \cH_{B})$. Thus $\cH_{A}$, plays the role of the input Hilbert space to the channel (traditionally
	owned by Alice) and $\cH_{B}$ is channel's output Hilbert space (usually in Bob's possession). $\cC^{\downarrow}(\cH_{A}, \cH_{B})$ stands for the set of completely positive trace decreasing maps between $\cL(\cH_{A})$ and $\cL(\cH_{B})$. $\cU(\cH)$ will denote in what follows, the group of unitary operators acting on $\cH$. For a Hilbert space $\cG\subset\cH$, we will always identify $\cU(\cG)$ with a subgroup of $\cU(\cH)$. For any projection $q\in\cL(\cH)$ we set $q^{\perp}:= 1_{\cH}-q$.\\
	Each projection $q \in \cL(\cH)$ defines a completely positive trace decreasing map $Q$ given by $Q(a):= qaq$ for
	all $a\in\cL(\cH)$. In a similar fashion, any $U \in \cU(\cH)$ defines a $\cU \in\cC(\cH,\cH)$ by $\cU(a) := UaU^{\dagger}$ for $a\in\cL(\cH)$. 
	We use the base two logarithm which is denoted by $\log$. The von Neumann entropy of a state $\rho \in \cS(\cH)$ is given by
	\begin{equation*}
	S(\rho):=-\text{tr}(\rho\log\rho).
	\end{equation*}
	The coherent information for $\cN \in \cC(\cH_{A}, \cH_{B})$ and $\rho \in \cS(\cH_{A})$ is defined by
	\begin{equation*}
	I_{c}(\rho, \cN ) := S(\cN (\rho)) -S((\text{id}_{\cH_{A}} \otimes\cN )(\ket{\psi}\bra{\psi}))
	\end{equation*}
	where $\psi\in\cH_{A} \otimes\cH_{A}$ is an arbitrary purification of the state $\rho$. We also use $S_{e}(\rho,\cN):=S((\text{id}_{\cH_{A}} \otimes\cN )(\ket{\psi}\bra{\psi}))$ to denote entropy exchange. A useful equivalent definition of $I_{c}(\rho, \cN )$ is given in terms of $\cN \in \cC(\cH_{A}, \cH_{B})$ and any complementary channel $\hat{\cN } \in \cC(\cH_{A}, \cH_{e})$ where $\cH_{e}$ denotes the Hilbert space of the environment. Due to Stinespring's dilation theorem, $\cN $ can be represented as
	\begin{equation*}
	\cN (\rho) = \text{tr}_{\cH_{e}}(v\rho v^{*})
	\end{equation*}
	for $\rho \in \cS(\cH_{A})$ where $v :\cH_{A} \to \cH_{B} \otimes\cH_{e}$ is a linear isometry. The complementary channel $\hat{\cN}  \in \cC(\cH_{A}, \cH_{e})$ of $\cN$ is given by
	\begin{equation*}
	\hat{\cN }(\rho) := \text{tr}_{\cH_{B}}(v\rho v^{*}) .
	\end{equation*}
	The coherent information can then be written as
	\begin{equation}\label{def2cinf}
	I_{c}(\rho,\cN)=S(\cN(\rho))-S(\hat{\cN}(\rho)).
	\end{equation}
	This quantity can also be defined in terms of the bipartite state $\sigma\in\cS(\cH_{A}\otimes\cH_{B})$ with 
	\begin{equation*}
	\sigma:=\text{id}_{\cH_{A}}\otimes\cN(\ket{\psi}\bra{\psi})
	\end{equation*}
	as
	\begin{equation*}
	I(A\rangle B,\sigma):=S(\sigma^{B})-S(\sigma)
	\end{equation*}
	where $\sigma^{B}$ is the marginal state given by $\sigma^{B}:=\tr_{A}(\sigma)$ and we have the identity
	\begin{equation*}
	I_{c}(\rho,\cN)=I(A\rangle B,\sigma).
	\end{equation*}
	As a measure of closeness between two states $\rho, \sigma \in \cS(\cH)$, we may use the fidelity $F(\rho, \sigma) := \parallel\sqrt{\rho}
	\sqrt{\sigma}\parallel_{1}^{2}$. The
	fidelity is symmetric in the input and for a pure state $\rho = \ket{\phi}\bra{\phi}$, we have $F(\ket{\phi}\bra{\phi}, \sigma)=\braket{\phi, \sigma\phi}$.
	A closely related quantity is the entanglement fidelity, which for $\rho \in \cS(\cH_{A})$ and $\cN\in\cC^{\downarrow}(\cH_{A}, \cH_{B})$, is given by
	\begin{equation*}
	F_{e}(\rho, \cN ) := \braket{\psi,(\text{id}_{\cH} \otimes \cN )(\ket{\psi}\bra{\psi})\psi}
	\end{equation*}
	with $\psi \in \cH_{A}\otimes\cH_{A}$ an arbitrary purification of the state $\rho$. 
	\\
	Another quantity that will be significant in the present work is the quantum mutual information (see e.g \cite{wilde13}). For a state $\rho\in\cS(X\otimes\cH_{B})$, the quantum mutual information is defined as
	\begin{equation*}
	I(X;B,\rho):=S(\rho^{X})+S(\rho^{B})-S(\rho)
	\end{equation*}
	where $\rho^{X}$ and $\rho^{B}$ are marginal states of $\rho$. \\
	For the approximation of arbitrary compound channels (introduced in the next section) by finite ones we use the diamond norm $\parallel\cdot\parallel_{\diamond}$, given for any $\cN:\cL(\cH_{A})\to\cL(\cH_{B})$ by
	\begin{equation*}
	\parallel\cN\parallel_{\diamond} := \sup_
	{n\in\mathbb{N}}
	\max_{
		a\in\cL(\mathbb{C}^{n}\otimes\cH),\parallel a\parallel_{1}=1}
	\parallel(\text{id}_{n} \otimes \cN )(a)\parallel_{1},
	\end{equation*}
	where $\text{id}_{n}: \cL(\mathbb{C}^{n}) \to \cL(\mathbb{C}^{n})$ is the identity channel. We state the following facts about $|| \cdot ||_{\diamond}$ (see e.g \cite{dnomr}). First, $||\cN ||_{\diamond} = 1$ for all $\cN \in\cC(\cH_{A}, \cH_{B})$. Thus, $\cC(\cH_{A}, \cH_{B})\subset S_{\diamond}$, where $S_{\diamond}$ denotes the unit sphere of the normed
	space $(\cL(\cH_{A}), \cL(\cH_{B}), || \cdot ||_{\diamond})$. Moreover, $||\cN_{1} \otimes \cN_{2}||_{\diamond} = ||\cN_{1}||_{\diamond}||\cN_{2}||_{\diamond}$ for arbitrary linear maps $\cN_{1}, \cN_{2} :\cL(\cH_{A}) \to \cL(\cH_{B})$. Throughout this work we have made use of the idea of nets to approximate arbitrary compound quantum channels using ones with finite uncertainty sets. This idea is presented in Appendix \ref{mosonyicodes} by Definition \ref{netdef} and proceeding two lemmas.\\
	 We use $\epsilon_{n} \to 0$ exponentially as $n\to\infty$  or we say $\epsilon_{n}$ approaches (goes to) zero exponentially, if $-\frac{1}{n}\log\epsilon_{n}$ is a strictly positive constant. For $\epsilon_{1,n}$ and $\epsilon_{2,n}$ both approaching zero exponentially, we use $\epsilon_{1,n}\geq\epsilon_{2,n}$ if $-\frac{1}{n}\log\epsilon_{1,n}\leq-\frac{1}{n}\log\epsilon_{2,n}$. We use $\cl(A)$ to denote the closure of set $A$ and finally, we use $\mathfrak{S}_{n}$ to denote the group of permutations on $n$ elements such that $\alpha(s^{n})=(s_{\alpha(1)},\dots,s_{\alpha(n)})$ for each $\alpha\in\mathfrak{S}_{n}$ and $s^{n}=(s_{1},\dots,s_{n})\in S^{n}$. 
	
	\section{Basic definitions and main results}\label{basicdef}
	We consider two channel models of compound and arbitrarily varying quantum channels. They are both generated by an uncertainty set of CPTP maps. For the purposes of the present work, when considering the arbitrarily varying channel model, we assume finiteness of the generating uncertainty set. This assumption is absent in the case of the compound channel model. 
	\subsection{The compound quantum channel}\label{basicdefcompound}
	Here, we consider quantum compound channels. Let  $\cJ:=\{\cN_{s}\}_{s\in S}\subset\cC(\cH_{A},\cH_{B})$ be a set of CPTP maps. The compound quantum channel generated by $\cJ$ is given by family  $\{\cN^{\otimes n}:\cN\in \cJ\}_{n=1}^{\infty}$. 
	In other words, using $n$ instances of the compound channel is equivalent to using $n$ instances of one of the channels from the uncertainty set. The users of this channel may or may not have access to the Channel State Information (CSI). We will often use the set $S$ to index members of $\cJ$. A compound channel is used $n\in\mathbb{N}$ times by the sender Alice, to convey classical messages from a set $[M_{1,n}]:=\{1,...,M_{1,n}\}$ to a receiver Bob. At the same time, the parties would like to communicate quantum information. Here, we consider two scenarios in which quantum information can be communicated between the parties. \\
	\textbf{Classically Enhanced Entanglement Transmission (CET)}: While transmitting classical messages using $n\in\mathbb{N}$ instances of the compound channel, the sender wishes to transmit the maximally entangled state in her control to the receiver. The subspace $\cF_{A,n}$ with $\cF_{A,n}\subset\cH_{A}^{\otimes n}$ and $M_{2,n}:=\dim(\cF_{A,n})$, quantifies the amount of quantum information transmitted. More precisely:
	\begin{definition}\label{scodes}
		An $(n,M_{1,n},M_{2,n})$ CET code for $\mathcal{J}\subset\cC(\cH_{A},\cH_{B})$, is a family $\mathcal{C}_{CET}:=(\mathcal{P}_{m},\mathcal{R}_{m})_{m\in[M_{1,n}]}$ with
		\begin{itemize}
			\item{$\mathcal{P}_{m}\in \cC(\mathcal{F}_{A,n},\mathcal{H}_{A}^{\otimes n})$},
			\item{ $\mathcal{R}_{m}\in \cC^{\downarrow}(\mathcal{H}_{A}^{\otimes n},\mathcal{F}_{B,n})$} with $\cF_{A,n}\subset\cF_{B,n}$ and
			\item $\sum_{m\in[M_{1,n}]}\cR_{m}\in\cC(\cH_{B}^{\otimes n},\cF_{B,n})$.
		\end{itemize}
		
	\end{definition}
	\begin{remark} 
		We remark that as defined above, for each $m\in[M_{1,n}]$ we have a $(n,M_{2,n})$ entanglement transmission code for $\cJ$.
	\end{remark}
	
	For every $m\in M_{1,n}$ and $s\in S$, we define the following performance function for this communication scenario when $n\in\mathbb{N}$ instances of the channel have been used,
	\begin{equation*}
P(\cC_{CET},\cN_{s}^{\otimes n},m):=	F(\ket{m}\bra{m}\otimes \Phi^{AB},\id_{\cF_{A,n}}\otimes\cR\circ\cN_{s}^{\otimes n}\circ\cP_{m}(\Phi^{AA})),
	\end{equation*}
	where $\Phi^{XY}$ is a maximally entangled state on $\cF_{X,n}\otimes\cF_{Y,n}$ and 
	\begin{equation*}
	\cR:=\sum_{m\in[M_{1,n}]}\ket{m}\bra{m}\otimes\cR_{m}.
	\end{equation*}
	\textbf{Classically Enhanced Entanglement Generation (CEG)}:
	In this scenario, while transmitting classical messages, Alice wishes to establish a pure state shared between her and Bob. As the maximally entangled pure state shared between the parties is an instance of such a pure state, it can be proven that the previous task achieved in CET, achieves the task laid out by this one, but the opposite is not necessarily true. More precisely:
	\begin{definition}\label{sscodes}
		An $(n,M_{1,n},M_{2,n})$ CEG code for $\mathcal{J}\subset\cC(\cH_{A},\cH_{B})$, is a family $\cC_{CEG}:=(\Psi_{m},\mathcal{R}_{m})_{m=1}^{M_{1,n}}$, where $\Psi_{m}$ is a pure state on $\mathcal{F}_{A,n}\otimes\mathcal{H}_{A}^{\otimes n}$ and
		\begin{itemize}
			\item $\mathcal{R}_{m}\in \cC^{\downarrow}(\mathcal{H}_{B}^{\otimes n},\mathcal{F}_{B,n})$ with $\cF_{A,n}\subset\cF_{B,n}$ and
			\item $\sum_{m\in[M_{1,n}]}\cR_{m}\in\cC(\cH_{B}^{\otimes n},\cF_{B,n})$.
		\end{itemize}
	\end{definition}
	The relevant performance functions for this task, for every $m\in [M_{1,n}]$ and $s\in S$, are
	\begin{equation}
		P(\cC_{CEG},\cN_{s}^{\otimes n},m):=F(\ket{m}\bra{m}\otimes \Phi,\id_{\cF_{A,n}}\otimes\cR\circ\cN_{s}^{\otimes n}(\Psi_{m})),
	\end{equation}
	with $\Phi$ maximally entangled on $\mathcal{F}_{A,n}\otimes\mathcal{F}_{B,n}$.\\
	Averaging over the message set $[M_{1,n}]$, will give us the corresponding average performance functions for each $s\in S$,
	\begin{equation*}
\overline{P}(\cC_{X},\cN_{s}^{\otimes n}):=\frac{1}{M_{1,n}}\sum_{m\in [M_{1,n}]}	P(\cC_{X},\cN_{s}^{\otimes n},m),
	\end{equation*}
	for $X\in\{CET,CEG\}$. For each scenario, we define the achievable rates.
	\begin{definition}
		Let $X\in\{CET, CEG\}.$ A pair $(R_{1},R_{2})$ of non-negative numbers is called an achievable
		X rate for the compound channel $\cJ$, if for each $\epsilon,\delta>0$ exists a number $n_{0}=n_{0}(\epsilon,\delta)$, such that for each
		$n > n_{0}$ we find and $(n,M_{1,n},M_{2,n})$ X code $\cC_{X}$ such that
		\begin{enumerate}
			\item $\frac{1}{n}\log M_{i,n}\geq R_{i}-\delta$ for $i\in\{1,2\}$,
			\item $\inf_{s\in S}\min_{m\in M_{1,n}} P(\cC_{X},\cN_{s}^{\otimes n},m)\geq 1-\epsilon$
		\end{enumerate}
		are simultaneously fulfilled. We also define X "average-error-rates" by averaging the performance functions in the last condition over $m\in [M_{1,n}]$. We define the X capacity region of $\cJ$ by
		\begin{equation}\label{capacityregions}
		C_{X}(\cJ):=\{(R_{1},R_{2})\in\mathbb{R}^{+}_{0}\times\mathbb{R}^{+}_{0}:(R_{1},R_{2})\text{ is achievable X rate for }\cJ\}.
		\end{equation}
		Also the capacity region corresponding to average error criteria is defined as
		\begin{equation}\label{averageregions}
		\overline{C}_{X}(\cJ):=\{(R_{1},R_{2})\in\mathbb{R}^{+}_{0}\times\mathbb{R}^{+}_{0}:(R_{1},R_{2})\text{ is achievable X average-error-rate for }\cJ\}.
		\end{equation}
	\end{definition}

	Moreover, let $\mathcal{X}$ be an alphabet, $\mathcal{M}\in \cC(\mathcal{H}_{A},\mathcal{H}_{B})$ $\forall s\in S$, $p\in\mathcal{P}(\mathcal{X})$ and $\Psi_{x}$ be a pure state for all $x\in\cX$. Given the state
	\begin{equation}\label{evaluationstate}
	\omega(\mathcal{M},p,\Psi):=\sum_{x\in\cX}p(x)\ket{x}\bra{x}\otimes\id_{\cH_{A}}\otimes\cM(\Psi_{x}),
	\end{equation}
	we introduce the following set,
	\begin{equation*}
	\hat{C}(\mathcal{N}_{s},p,\Psi):=\{(R_{1},R_{2})\in\mathbb{R}^{+}_{0}\times\mathbb{R}^{+}_{0}:R_{1}\leq I(X;B,\omega(\mathcal{N}_{s},p,\Psi))\wedge
	R_{2}\leq I(A\rangle BX,\omega(\mathcal{N}_{s},p,\Psi))\}
	\end{equation*}
	with $\Psi$ denoting $(\Psi_{x}:x\in\cX)$ collectively. We will also use
	\begin{equation*}
	\frac{1}{l}A:=\{(\frac{1}{l}x_{1},\frac{1}{l}x_{2}):(x_{1},x_{2})\in A\}.
	\end{equation*}
	The following statement is the first main result of this paper. 
	\begin{theorem}\label{mainresult}
		Let $\mathcal{J}:=\{\cN_{s}\}_{s\in S}\subset \cC(\mathcal{H}_{A},\mathcal{H}_{B})$ be any compound quantum channel. Then
		\begin{equation*}
		C_{CET}(\mathcal{J})=\overline{C}_{CET}(\mathcal{J})= C_{CEG}(\mathcal{J})=\overline{C}_{CEG}(\mathcal{J})=\cl\bigg(\bigcup_{l=1}^{\infty}\frac{1}{l}\bigcup_{p,\Psi}\bigcap_{s\in S}\hat{C}(\mathcal{N}_{s}^{\otimes l},p,\Psi)\bigg)
		\end{equation*}
		holds.
	\end{theorem}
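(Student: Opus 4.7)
The plan is to prove the chain of four equalities by reducing everything to one converse and one direct coding theorem and linking them by elementary implications. A CET code immediately yields a CEG code (the maximally entangled state is a pure state) and the maximal-error criterion dominates the average-error criterion, so
\begin{equation*}
C_{CET}(\cJ)\subseteq \overline{C}_{CET}(\cJ)\subseteq \overline{C}_{CEG}(\cJ),\qquad C_{CET}(\cJ)\subseteq C_{CEG}(\cJ)\subseteq \overline{C}_{CEG}(\cJ).
\end{equation*}
It therefore suffices to establish the two inclusions
\begin{equation*}
\overline{C}_{CEG}(\cJ)\;\subseteq\; \cl\!\Bigg(\bigcup_{l=1}^{\infty}\frac{1}{l}\bigcup_{p,\Psi}\bigcap_{s\in S}\hat{C}(\cN_{s}^{\otimes l},p,\Psi)\Bigg)\;\subseteq\; C_{CET}(\cJ).
\end{equation*}

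\textbf{Converse.} Fix an average-error-achievable CEG pair $(R_1,R_2)$ and, for each large $n$, an $(n,M_{1,n},M_{2,n})$ CEG code $(\Psi_m,\cR_m)_{m\in[M_{1,n}]}$ with $\sup_{s}\overline{P}(\cC_{CEG},\cN_s^{\otimes n})\geq 1-\epsilon_n$. Taking $p$ uniform on $[M_{1,n}]$ and $\Psi=(\Psi_m)_m$, the tripartite evaluation state $\omega(\cN_s^{\otimes n},p,\Psi)$ from \eqref{evaluationstate} is exactly the average state created by the encoder on classical register, reference, and Bob. A Fano-type estimate applied to the classical decoder obtained from $\cR=\sum_m|m\rangle\langle m|\otimes\cR_m$ gives $\log M_{1,n}\le I(X;B,\omega(\cN_s^{\otimes n},p,\Psi))+n\,\eta_1(\epsilon_n)$, while an entanglement-fidelity/coherent-information estimate applied messagewise, together with concavity of $I_c(\cdot,\cN_s^{\otimes n})$ in the input, yields $\log M_{2,n}\le I(A\rangle BX,\omega(\cN_s^{\otimes n},p,\Psi))+n\,\eta_2(\epsilon_n)$, both valid uniformly in $s$. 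Dividing by $n$, intersecting over $s\in S$, and then letting $n\to\infty$ and $\delta\to 0$ places $(R_1,R_2)$ in the closed set on the right after a $1/n$ rescaling.

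\textbf{Direct part.} I would carry this out in the two stages announced for Section \ref{proofsforcompound}. \emph{Stage one} fixes $l\in\mathbb{N}$, $p\in\cP(\cX)$, and a family $\Psi=(\Psi_x)_{x\in\cX}$ of \emph{maximally entangled} pure states, and builds an $(lk,M_1,M_2)$ CET code for $\{\cN_s^{\otimes l}\}_{s\in S}$ used $k$ times whose maximal error tends to zero and whose rate pair converges to $\frac{1}{l}(I(X;B,\omega(\cN_s^{\otimes l},p,\Psi)),I(A\rangle BX,\omega(\cN_s^{\otimes l},p,\Psi)))$ uniformly in $s$. The two ingredients are the Mosonyi-type universal random classical codes from Section \ref{ctresult} and the generalized decoupling-based random entanglement transmission codes from Section \ref{etcsection}, which are engineered to handle conditional typicality on classical words $x^k\in\cX^k$. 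Piggybacking them works as follows: the outer classical codebook selects codewords $x^k(m)$; conditioned on $x^k(m)$, the sender applies the entanglement encoder keyed to its type; the receiver first decodes $m$ by the universal classical decoder and then runs the $m$-th entanglement decoder. An expurgation step converts random average-error guarantees into deterministic maximal-error guarantees, which is legitimate because both families of error functions decay exponentially in $k$. \emph{Stage two} removes the restriction that $\Psi_x$ be maximally entangled. In place of the BSST-type argument of \cite{boche17,boche18} I would absorb a general pure state $\Psi_x$ into a modified channel obtained by prepending to each $\cN_s^{\otimes l}$ an isometry that prepares the local Schmidt basis of $\Psi_x$; applying stage one on the enlarged model with a maximally entangled input gives the same information quantities and thus transfers achievability to arbitrary $\Psi$.

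\textbf{Main obstacle.} The load-bearing step is stage one of the direct part: producing a code that is simultaneously universal over $s\in S$ and respects the conditional structure of $(R_1,R_2)$, so that the coherent information rate conditioned on the classical message is attained \emph{uniformly in} $s$. The usual construction via secure classical codes from \cite{idev} is ill-suited to compound channels, so the decoupling approach must be adapted to the conditional setting: the random entanglement encoder from Section \ref{etcsection} must decouple Bob's output from the environment for every $\cN_s^{\otimes l}$ simultaneously \emph{and} do so inside the type class produced by the outer classical code, after which a single deterministic code must be extracted that meets the maximal-error criterion for both tasks and all $s$ at once.
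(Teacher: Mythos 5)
Your overall architecture matches the paper's exactly: the operational inclusions $C_{CET}\subset C_{CEG}$, $C_{X}\subset\overline{C}_{X}$, a converse for $\overline{C}_{CEG}$ (Lemma \ref{lemmaconv}) and a two-stage direct part for $C_{CET}$ (Lemmas \ref{piinput} and \ref{generalinput0}, combined in Lemma \ref{dp}). Your converse is essentially the paper's: Fano plus the Holevo bound for the classical rate, and data processing plus the fidelity-continuity bound (Lemma \ref{lemma3}) for the quantum rate. One caveat: you should not invoke ``concavity of $I_{c}(\cdot,\cN)$ in the input'', which fails for general channels and is not needed -- since the conditioning register is classical, $I(A\rangle BX,\sigma)$ \emph{is} the average of the messagewise coherent informations, so averaging your per-message bounds and then applying data processing suffices. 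Your stage one likewise coincides with the paper's Lemma \ref{piinput}: piggybacking Mosonyi-type universal classical codes (Lemma \ref{corollary22}) onto the conditionally typical decoupling codes of Lemma \ref{prop}, with a gentle-measurement argument and a per-message selection of a good design element playing the role of your expurgation; the $\tau$-net handles infinite $S$.

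Stage two, however, has a genuine gap. Prepending an isometry $V_{x}$ to $\cN_{s}^{\otimes l}$ on the $A$ side cannot reproduce a general input: isometric conjugation preserves spectra, so feeding in a maximally entangled state yields the coherent information $I_{c}(V_{x}\pi V_{x}^{\dagger},\cN_{s}^{\otimes l})$ evaluated at a \emph{flat} state on a subspace, which for $\Psi_{x}$ with non-uniform Schmidt coefficients differs in general from $I_{c}(\rho_{x},\cN_{s}^{\otimes l})$, where $\rho_{x}$ is the reduced state of $\Psi_{x}$ -- and the non-flat case is the entire content of stage two. If instead you prepend a genuine CPTP preparation $\cE_{x}$ with $\cE_{x}(\pi)=\rho_{x}$, its environment leaks correlations and data processing gives only $I_{c}(\pi,\cN_{s}^{\otimes l}\circ\cE_{x})\leq I_{c}(\rho_{x},\cN_{s}^{\otimes l})$, typically strictly, so the ``enlarged model'' does not give the same information quantities. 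The paper's Lemma \ref{generalinput0} closes this elementarily and differently: on blocks of length $l$ the spectral weights $q_{x^{l}}(y^{l})$ of $V^{\otimes l}(x^{l})$ are constant on joint type classes, so $V^{\otimes l}(x^{l})=\sum_{\lambda}q_{x^{l}}(\lambda)\,\pi_{x^{l}}^{\lambda}$ is an \emph{exact} convex combination of maximally mixed states on subspaces indexed by the polynomially many joint types $\lambda\in\cT(\cX\times\cY,l)$; one then applies stage one to the enlarged classical alphabet $(x^{l},\lambda)$ and pays for the quantum rate only the entropy of the type register, $S(T)\leq\dim(\cH_{A}\otimes\cH_{B})\log(l+1)$, which vanishes per channel use as $l\to\infty$ (the classical rate even improves, by concavity of entropy). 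Without this decomposition -- or a BSST-type typical-subspace argument as in \cite{boche17} -- your proof establishes the theorem only for families $\Psi$ of maximally entangled states.
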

	This theorem is proven in the following steps. In Section \ref{conversesection}, we prove that  $\overline{C}_{CEG}(\mathcal{J})$ is a subset of the set on the rightmost set in the above equalities. In Section \ref{directsection}, we prove that the rightmost set is a subset of $C_{CET}(\mathcal{J})$. Together with the operational inclusions
	\begin{equation*}
C_{CET}(\mathcal{J})\subset C_{CEG}(\mathcal{J})
	\end{equation*}
	and
	\begin{equation*}
C_{X}(\mathcal{J})\subset \overline{C}_{X}(\mathcal{J})
	\end{equation*}
	for $X\in\{CEG,CET\}$, we conclude the equalities in the statement of the theorem. 
	\subsection{The arbitrarily varying quantum channel}\label{codedefinitionavcq}
	The arbitrarily varying quantum channel generated by a set $\cJ:=\{\cN_{s}\}_{s\in S}$ of CPTP maps with input Hilbert space $\cH_{A}$ and output Hilbert space $\cH_{B}$, is given by family of CPTP maps $\{\cN_{s^{l}}:\cL(\cH_{A}^{\otimes l})\to\cL(\cH_{B}^{\otimes l}),s^{l}\in S^{l}, l\in\mathbb{N}\}_{l=1}^{\infty}$, where 
	\begin{equation*}
	\cN_{s^{l}}:=\cN_{s_{1}}\otimes\dots\cN_{s_{l}}\ \ (s^{l}\in S^{l}).
	\end{equation*}
	We use $\cJ$ to denote the AVQC generated by $\cJ$. To avoid further technicalities, we always assume $|S|<\infty$ for the AVQC generating sets appearing in this paper. Most of the results in this paper may be generalized to the case of general sets by clever use of approximation techniques from convex analysis together with continuity properties of the entropic quantities which appear in the capacity characterizations (see \cite{advnoise}).
	\begin{definition}
		An $(l,M_{1,l},M_{2,l})$ random CET code for $\cJ$ is a probability measure $\mu_{l}$ on $(\cC(\cF_{A,l},\cH_{A}^{\otimes l})^{M_{1,l}}\times\Omega_{l},\sigma_{l})$, where 
		\begin{itemize}
			\item $\Omega_{l}:=\{(\cR^{(1)},\dots,\cR^{(M_{1,l})}),\sum_{m\in[M_{1,l}]}\cR^{(m)}\in\cC(\cH_{B}^{\otimes l},\cF_{B,l})\}$,
			\item $\dim(\cF_{A,l})=M_{2,l},\cF_{X,l}\subset\cH_{X}^{\otimes l}$,$\ \ \ (X\in\{A,B\})$.
			
			\item The sigma-algebra $\sigma_{l}$ is chosen such that the function
			\begin{equation}\label{fidfunc}
			g_{s^{l}}(\cP^{(m)},\cR^{(m)}):=F(\ket{m}\bra{m}\otimes \Phi^{AB},\id_{\cH_{A}^{\otimes l}}\otimes\cR\circ\cN_{s^{l}}\circ\cP^{(m)}(\Phi^{AA}))
			\end{equation}
			is measurable with respect to $\mu_{l}$, for all $m\in[M_{1,l}],s^{l}\in S^{l}$. In (\ref{fidfunc}), $\Phi^{XY}$ is a maximally entangled state on $\cF_{X,l}\otimes\cF_{Y,l}$ and $\cR:=\sum_{m\in[M_{1,l}]}\ket{m}\bra{m}\otimes \cR^{(m)}$. 
			\item We further require that $\sigma_{l}$ contains all the singleton sets. The case where $\mu_{l}$ is deterministic, namely is equal to unity on a singleton set and zero otherwise, gives us a deterministic $(l,M_{1,l},M_{2,l})$ CET codes for $\cJ$. Abusing the terminology, we also refer to the singleton sets as deterministic codes.
		\end{itemize}
	\end{definition}
	\begin{definition}
		A non-negative pair of real numbers $(R_{1},R_{2})$ is called an achievable CET rate pair for $\cJ:=\{\cN_{s}\}_{s\in S}$ with random codes and average error criterion, if there exists a random CET code $\mu_{l}$ for $\cJ$ with members of singleton sets notified by $(\cP^{(m)},\cR^{(m)})_{m\in [M_{1,l}]}$ such that 
		\begin{enumerate}
			\item $\liminf_{l\to\infty}\frac{1}{l}\log M_{i,l}\geq R_{i}\ (i\in\{1,2\})$,
			\item $\lim_{l\to\infty}\inf_{s^{l}\in S^{l}}\int\frac{1}{M_{1,l}}\sum_{m\in[M_{1,l}]}g_{s^{l}}(\cP^{(m)},\cR^{(m)}) \ d\mu_{l}(\cP^{(m)},\cR^{(m)})_{m=1}^{M_{1,l}}=1.$
		\end{enumerate}
	\end{definition}
	The random CET capacity region with average error criterion of $\cJ$ is defined by
	\begin{align*}
	\overline{\cA}_{r,CET}&(\cJ):=\{(R_{1},R_{2}): (R_{1},R_{2})\ is \ achievable \ CET\  rate\  pair\  for \ \cJ\\& with\  random\  codes\  and \ average\  error\  criterion\}.
	\end{align*}
	\begin{definition}
		A non-negative pair of real numbers $(R_{1},R_{2})$ is called an achievable deterministic CET rate for $\cJ$ with average error criterion, if there exists a deterministic $(l,M_{1,l},M_{2,l})$ CET code $(\cP^{(m)},\cR^{(m)})_{m\in[M_{1,l}]}$ for $\cJ$ with
		\begin{enumerate}
			\item $\liminf_{l\to\infty}\frac{1}{l}\log M_{i,l}\geq R_{i}\ (i\in\{1,2\})$,
			\item $\lim_{l\to\infty}\inf_{s^{l}\in S^{l}}\frac{1}{M_{1,l}}\sum_{m\in[M_{1,l}]}g_{s^{l}}(\cP^{(m)},\cR^{(m)})=1$
		\end{enumerate}
	\end{definition}
	Correspondingly we define the following capacity region,
	\begin{align*}
	\overline{\cA}_{d,CET}&(\cJ):=\{(R_{1},R_{2}): (R_{1},R_{2})\ is \ achievable \ deterministic\\&
	 \ CET\  rate\  pair\  for \ \cJ\ with\ average\  error\  criterion\}.
	\end{align*}
	The deterministic CET codes defined here, are entanglement transmission codes for each $m\in[M_{1,l}]$. More precisely we have the following definition.
	\begin{definition}
		An $(n,M)$, $n,M\in\mathbb{N}$, entanglement transmission code for AVQC $\cJ\subset\cC(\cH_{A},\cH_{B})$ is a pair $(\cP,\cR)$ with $\cP\in\cC(\cF_{A,n},\cH_{A}^{\otimes n}),\cR\in\cC(\cH_{B}^{\otimes n},\cF_{B,n})$ with $\cF_{A,n}\subset\cF_{B,n}\subset\cH_{A}^{\otimes n}$ and $\dim(\cF_{A,n})=M$. The corresponding performance function for this task is 
		\begin{equation*}
		F(\Phi^{AB},\id_{\cH_{A}^{\otimes n}}\otimes\cR\circ\cN_{s^{n}}\circ\cP(\Phi^{AA})),\ \ s^{n}\in S^{n}.
		\end{equation*}
	\end{definition}
Essential to the statement of our results is the concept of symmetrizablity defined in the following.
\begin{definition}
Let $\cJ:=\{\cN_{s}\}_{s\in S}\subset\cC(\cH_{A},\cH_{B})$ with $|S|<\infty$ be an AVQC. 
\begin{enumerate}
	\item $\cJ$ is called $l$-symmetrizable for $l\in\mathbb{N}$, if for each finite set $\{\rho_{1},\dots,\rho_{K}\}\subset\cS(\cH_{A}^{\otimes l})$ with $K\in\mathbb{N}$, there is a map $p:\{\rho_{1},\dots,\rho_{K}\}\to\cP(S^{l})$ such that for all $i,j\in\{1,\dots,K\}$
	\begin{equation}\label{condsymm}
	\sum_{s^{l}\in S^{l}}p(\rho_{i})(s^{l})\cN_{s^{l}}(\rho_{j})=\sum_{s^{l}\in S^{l}}p(\rho_{j})(s^{l})\cN_{s^{l}}(\rho_{i}).
	\end{equation}
	\item We call $\cJ$ symmetrizable if it is $l$-symmetrizable for all $l\in\mathbb{N}$. 
\end{enumerate}
\end{definition}
\begin{remark}
	The above definition for symmetrizablity was first established in \cite{advnoise}, generalizing the concept of symmetrization for classical AVQCs from \cite{Ericson}. This definition for symmetrizablity was meaningfully simplified in \cite{bochenozel}, to require checking of the condition (\ref{condsymm}) for two input states only (K=2).
\end{remark}  
We prove the following result to be the second main result of this paper.
\begin{theorem}\label{avcqmainresult}
	Let $\cJ:=\{\cN_{s}\}_{s\in S}\subset\cC(\cH_{A},\cH_{B})$ with $|S|<\infty$ be an AVQC. The following hold.
	\begin{enumerate}
		\item $\overline{\cA}_{d,CET}(\cJ)\neq\{(0,0)\}$ implies 
	\begin{equation}
	\overline{\cA}_{d,CET}(\cJ)=\overline{\cA}_{r,CET}(\cJ)=\overline{C}_{CET}(\conv(\cJ)),
	\end{equation}
	where $\overline{C}_{CET}(\cM)$ is the CET capacity of compound channel $\cM$ with average error criterion defined in the previous section and 
	\begin{equation*}
	\conv(\cJ):=\{\cN_{q}:\cN_{q}:=\sum_{s\in S}q(s)\cN_{s},q\in\cP(S)\}.
	\end{equation*}
	\item $\overline{\cA}_{d,CET}(\cJ)=\{(0,0)\}$ if and only if $\cJ$ is symmetrizable.
	\end{enumerate}
\end{theorem}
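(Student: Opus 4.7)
The plan is to prove Theorem \ref{avcqmainresult} via the classical Ahlswede program. For part 1, I would establish the circular chain
\begin{equation*}
\overline{C}_{CET}(\conv(\cJ)) \subseteq \overline{\cA}_{r,CET}(\cJ) \subseteq \overline{\cA}_{d,CET}(\cJ) \subseteq \overline{C}_{CET}(\conv(\cJ)),
\end{equation*}
where only the middle inclusion uses the non-triviality hypothesis. The rightmost (converse) inclusion is the familiar averaging argument: a code (random or deterministic) with worst-case fidelity error at most $\epsilon$ over $S^l$ automatically has error at most $\epsilon$ when $s^l$ is drawn from $q^{\otimes l}$, and so serves as a code for the memoryless channel $\cN_q^{\otimes l}$ with $\cN_q:=\sum_s q(s)\cN_s\in\conv(\cJ)$; since $q\in\cP(S)$ is arbitrary, each rate pair lies in $\overline{C}_{CET}(\conv(\cJ))$.

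For the leftmost inclusion I would invoke Ahlswede's robustification. Theorem \ref{mainresult} supplies deterministic CET codes for $\conv(\cJ)$ of any achievable rate; the first order of business is to confirm that the direct-part construction in Section \ref{proofsforcompound} gives exponential error decay in the block length $l$, uniformly over $q\in\cP(S)$ (this should emerge from the decoupling estimates together with standard typicality tail bounds). Conjugating each deterministic code by a uniformly distributed permutation $\alpha\in\mathfrak{S}_l$ acting simultaneously on $\cH_A^{\otimes l}$ and $\cH_B^{\otimes l}$ produces a random CET code whose error functional is permutation invariant in $s^l$; the robustification lemma then promotes the uniform exponential bound over product distributions to a uniform bound over all $s^l\in S^l$. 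The measurability requirement is automatic because the resulting random code is supported on the finite set $\mathfrak{S}_l$.

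The middle inclusion uses Ahlswede's elimination. A Chernoff bound together with a union bound over the $|S|^l$ jammer sequences shows that one can derandomize the random code down to $K=O(l^2)$ deterministic codes whose uniform mixture still has uniformly small error over $S^l$. Sender and receiver then prepend $O(\log l)$ channel uses transmitting the index $k\in[K]$ via a deterministic classical code, and use the $k$-th derandomized code on the remaining block. The prefix code exists precisely by the non-triviality hypothesis: any $(R_1,R_2)\in\overline{\cA}_{d,CET}(\cJ)\setminus\{(0,0)\}$ supplies either a positive-rate deterministic classical code (if $R_1>0$) or, if $R_2>0$, an entanglement transmission code of dimension $M_{2,l}\geq 2$ from which classical bits are extracted by restricting the input to a computational basis. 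The prefix costs $o(1)$ in rate, giving $(R_1,R_2)\in\overline{\cA}_{d,CET}(\cJ)$ as desired.

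Part 2 reduces to known facts. The direction symmetrizable $\Rightarrow\overline{\cA}_{d,CET}(\cJ)=\{(0,0)\}$ splits in two: restricting the symmetrizability condition to pairs of orthogonal pure states recovers the classical symmetrizability that forces $R_1=0$, while the full quantum symmetrizability forces $R_2=0$ by the results of \cite{advnoise,bochenozel}. The converse follows from the same references: non-symmetrizability is shown there to imply positive deterministic entanglement transmission capacity, which provides a rate pair $(0,R_2)$ with $R_2>0$ in $\overline{\cA}_{d,CET}(\cJ)$. The main technical obstacle is keeping the joint classical-quantum CET structure intact through robustification and elimination: under the permutation conjugation the decoding map $\cR=\sum_m\ket{m}\bra{m}\otimes\cR_m$ and the constraint $\sum_m\cR_m\in\cC(\cH_B^{\otimes l},\cF_{B,l})$ must be preserved coherently, and one needs to verify that the exponential decay in Theorem \ref{mainresult} holds uniformly over $\conv(\cJ)$ rather than merely pointwise.
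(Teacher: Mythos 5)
Your treatment of part 1 is essentially the paper's own proof: the same circular chain, with the averaging argument for the converse, Ahlswede robustification applied to permutation-conjugated compound codes for $\conv(\cJ)$ (the paper's Lemma \ref{robustification}, resting on Lemma \ref{robustlemma} and on the fact that Lemmas \ref{dp} and \ref{generalinput0} already deliver exponential error decay uniformly over the compound set, so your two flagged verification points are non-issues), and elimination to $l^{2}$ deterministic codes via an exponential Chernoff--Markov bound plus a union bound over $|S|^{l}$, followed by an $o(l)$-length classical prefix (the paper's Lemma \ref{eliminationlemma} and Lemma \ref{Theorem}). Your observation that when $R_{1}=0$ but $R_{2}>0$ one extracts the classical prefix code from an entanglement transmission code by feeding in basis states is exactly the paper's Lemma \ref{m1>m2}, with $\rho_{m}:=\cP(\ket{m}\bra{m})$ and $D_{m}:=\cR_{*}(\ket{m}\bra{m})$.

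There is, however, a genuine error in your part 2. For the direction ``$\cJ$ not symmetrizable $\Rightarrow\overline{\cA}_{d,CET}(\cJ)\neq\{(0,0)\}$'' you claim that non-symmetrizability implies positive deterministic \emph{entanglement transmission} capacity, witnessing non-triviality by a pair $(0,R_{2})$ with $R_{2}>0$. This implication is false: non-symmetrizability gives no control over the quantum capacity. For a concrete counterexample take $|S|=1$ with $\cN$ a non-constant entanglement-breaking channel; the symmetrizability condition (\ref{condsymm}) with a singleton $S$ reads $\cN_{s^{l}}(\rho_{j})=\cN_{s^{l}}(\rho_{i})$ for all pairs of input states, so such a channel is non-symmetrizable, yet its entanglement transmission capacity (random or deterministic) is zero, and no pair $(0,R_{2})$ with $R_{2}>0$ exists. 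What non-symmetrizability actually buys — and what the references \cite{advnoise,bochenozel} provide — is positive deterministic \emph{classical message transmission} capacity under the average error criterion; this yields a pair $(R_{1},0)$ with $R_{1}>0$, realized as a CET code with trivial quantum part $M_{2,l}=1$, and that is how the non-triviality of $\overline{\cA}_{d,CET}(\cJ)$ must be witnessed. The paper's own argument for part 2 runs through Theorem \ref{m1=0} (symmetrizability holds if and only if every classical code with $M\geq 2$ suffers average error at least $1/4$) combined with Lemma \ref{m1>m2}, which converts any good CET code — whether its non-triviality sits in the classical or the quantum coordinate — into a classical code violating that bound; your forward direction is consistent with this, but your converse direction needs to be rerouted through the classical capacity as above.
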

\section{Universal random codes for quantum channels}\label{preliminaryresults}
In this section we prove universal random coding results for entanglement transmission
and classical message transmission over quantum channels. Most of the statements below, are
implicitly contained in the literature. We state some properties of these codes that stem from their random nature and prove useful when deriving CET codes stated in Section \ref{proofsforcompound}. \\
Before proceeding with the following two sections in which we introduce appropriate entanglement transmission and classical message transmission coding results and for the reader's convenience, we present briefly the concept of types used in the remainder of this section. For more information on the concept of types, see e.g. \cite{wilde13}.\\
For $l\in\mathbb{N}$, the word $x^{l}\in\cX^{l}$ that is a string of letters $x\in\cX$ and the state $\rho$ with spectral decomposition $\rho:=\sum_{x\in\mathcal{X}}p(x)\ket{x}\bra{x}$, we define the $\delta$-typical (frequency typical) projection
\begin{equation*}
q_{\delta,l}(\rho):=\sum_{x^{l}\in T_{p,\delta}^{l}}\ket{x^{l}}\bra{x^{l}},
\end{equation*}
where $T_{p,\delta}^{l}$ is the set of $\delta$-typical sequences in $\mathcal{X}^{l}$, defined by
\begin{equation}\label{defoftypset}
T_{p,\delta}^{l}:=\{x^{l}:\forall x\in\cX,|\frac{1}{l}N(x|x^{l})-p(x)|\leq\delta\text{  }
\wedge \text{ } p(x)=0\iff N(x|x^{l})=0\}
\end{equation}
where $N(x|x^{l})$ is the number of occurrences of letter $x$ in word $x^{l}$. \\
For each $l\in\mathbb{N}$, we consider the set of types over alphabet $\cX$, $\cT(\cX,l)$ defined as
\begin{equation*}
\cT(\cX,l):=\{\lambda:T_{\lambda}^{l}\neq\emptyset\},
\end{equation*}
where $T_{\lambda}^{l}=T_{\lambda,0}^{l}$ ($\delta=0$). 
	\subsection{Entanglement transmission codes}\label{etcsection}
	In this section, we prove universal entanglement transmission coding results that are to be combined with suitable classical message transmission codes introduced in the next section. The following lemma is a generalization of random entanglement transmission codes obtained in \cite{boche17} and \cite{boche18}, where a in turn generalization of the decoupling lemma from \cite{klesse} has been obtained. As stated in the following lemma, there are two points to be remarked about these codes. First, the random nature of these codes gives us an encoding state (outcome of the random encoding operation) with a tensor product structure, that is of interest for the present work. Therefore at this stage, we skip the de-randomization step that seemed natural in the original work. Secondly, the integration over unitary groups with respect to the normalized Haar measure done in the random encoding operation therein, is replaced here by an average over the elements of discrete and finite subsets of representations of the unitary group known as unitary designs (see e.g. \cite{evenly}). \\
	The product structure of the encoding state can be used for an instance of channel coding stated later on. This becomes clear when the tensor product structure of the average state is used to accommodate typicality. For $p\in\mathcal{P}(\mathcal{X})$ where $\mathcal{X}$ is some finite alphabet, $\delta>0$ and $x^{l}\in \cX^{l}$, we introduce the following notation. For the tuple $x^{l}:=(x_{1},\dots,x_{l})$ where $x_{i}\in\cX$ for $i=1,\dots,l$, we define 
	\begin{equation*}
	\cG_{x^{l}}:=\cG_{x_{1}}\otimes\dots\otimes\cG_{x_{l}},
	\end{equation*}
	where $\cG_{x_{i}}\subset\cH_{A}$ and clearly, $\cG_{x^{l}}\subset\cH_{A}^{\otimes l}$. Then $\pi_{x^{l}}:=\pi_{\cG_{x^{l}}}$ denotes the maximally mixed state on $\cG_{x^{l}}$ (correspondingly $\pi_{x}$ denotes the maximally mixed state on $\cG_{x}$ for $x\in\cX$), $\Phi_{x^{l}}$ a purification of $\pi_{\cG_{x^{l}}}$ (correspondingly $\Phi_{x}$ denotes a purification of $\pi_{x}$)  and $X_{x^{l}}$ is a unitary design (see Theorem \ref{unitaridesigns}) for $\cU(\cG_{x^{l}})$.
	The following lemma reduces to Theorem 5 of \cite{boche17} when $|\cX|=1$. 
	\begin{lemma}\label{prop}
		Let $\cJ:=\{\cN_{s}\}_{s\in S}\subset\cC(\cH_{A},\cH_{B})$ be any compound quantum channel and alphabet $\cX$ be given. For subspaces $(\cG_{x})_{x\in\cX}$ with $\cG_{x}\subset\cH_{A}, x\in\cX$, probability distribution $p\in\mathcal{P}(\mathcal{X})$ and $\delta>0$, there exists $l_{0}\in\mathbb{N}$, such that for all $l\geq l_{0}$, we find for each $x^{l}\in T_{p,\delta}^{l}$, a subspace $\mathcal{F}_{A,l}\subset\mathcal{G}_{x^{l}}$ and a family $(\mathcal{P}_{i},\mathcal{R}_{i})_{i=1}^{|X_{x^{l}}|}$ of $(l,\dim(\cF_{A,l}))$ entanglement transmission codes with $|X_{x^{l}}|<\infty$ and
		\begin{enumerate}
			\item$\frac{1}{l}\log \dim(\cF_{A,l})\geq\inf_{s\in S}I(A\rangle BX,\omega(\cN_{s},p,\Phi))-\delta
			\ \ $, with $\omega(\cN_{s},p,\Phi) \ $ defined in (\ref{evaluationstate}) for $\Phi:=(\Phi_{x}:x\in\cX)$,
			\item $\forall s\in S\ \ \frac{1}{|X_{x^{l}}|}\sum_{i=1}^{|X_{x^{l}}|}F_{e}(\pi_{\cF_{A,l}},\cR_{i}\circ \cN_{s}^{\otimes l}\circ\cP_{i})\geq 1-\epsilon_{l} \ $ with $\epsilon_{l}\to 0$ exponentially as $l\to\infty$,
			\item$\frac{1}{|X_{x^{l}}|}\sum_{i=1}^{|X_{x^{l}}|}\cP_{i}(\pi_{\mathcal{F}_{A,l}})=\pi_{x^{l}}$.
		\end{enumerate}
	\end{lemma}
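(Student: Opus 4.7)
The plan is to adapt the Klesse-style decoupling argument from \cite{klesse,boche17,boche18} to the conditional setting dictated by a fixed typical sequence $x^{l}\in T_{p,\delta}^{l}$. For such an $x^{l}$, I fix an isometric embedding $P\colon\cF_{A,l}\hookrightarrow\cG_{x^{l}}$ of a still-to-be-determined subspace $\cF_{A,l}\subset\cG_{x^{l}}$, and consider the random encoder $\cP_{U}(\rho):=UP\rho P^{*}U^{*}$ indexed by $U\in\cU(\cG_{x^{l}})$. Instead of integrating against the Haar measure on $\cU(\cG_{x^{l}})$ I will average against an exact unitary $2$-design (existence guaranteed by Theorem \ref{unitaridesigns}), which is what produces the \emph{finite} family $X_{x^{l}}$ in the statement. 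Property (3) is then immediate: a $1$-design already enforces $\frac{1}{|X_{x^{l}}|}\sum_{i}U_{i}PP^{*}U_{i}^{*}=q_{\cG_{x^{l}}}$, so averaging $\cP_{U_{i}}(\pi_{\cF_{A,l}})$ yields $\pi_{\cG_{x^{l}}}=\pi_{x^{l}}$.

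For (1) and (2) the plan is to run the decoupling estimate of \cite[Thm.~5]{boche17} \emph{inside} $\cG_{x^{l}}$ rather than inside the full $\cH_{A}^{\otimes l}$. Fix $s\in S$ and a Stinespring dilation of $\cN_{s}^{\otimes l}$. Conjugating the environment output with the typical projections $q_{\delta,l}(\hat{\cN}_{s}(\pi_{x}))$ on each factor and the corresponding output typical projections on the $B$-system replaces $\hat{\cN}_{s}^{\otimes l}(\pi_{x^{l}})$ and $\cN_{s}^{\otimes l}(\pi_{x^{l}})$ by operators of controlled rank and operator norm; this works because $x^{l}\in T_{p,\delta}^{l}$ so the product state $\bigotimes_{i}\hat{\cN}_{s}(\pi_{x_{i}})$ is close, in trace norm, to a state supported on a subspace of dimension $\le 2^{l(\sum_{x}p(x)S(\hat{\cN}_{s}(\pi_{x}))+\eta(\delta))}$ with $\eta(\delta)\to 0$. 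The $2$-design identity then gives the standard Schur-Weyl bound
\begin{equation*}
\mathbb{E}_{U}\,\bigl\|\tr_{B}\bigl(V_{s}^{\otimes l}UP\Phi P^{*}U^{*}V_{s}^{* \otimes l}\bigr)-\tau_{E}\otimes\pi_{R}\bigr\|_{1}^{2}\;\le\;\dim(\cF_{A,l})\cdot\tr\bigl(\hat{\cN}_{s}^{\otimes l}(\pi_{x^{l}})^{2}\bigr)+o(1),
\end{equation*}
up to the typicality corrections. Combining with the bound on the truncated environment state, decoupling holds with doubly exponential margin whenever
\begin{equation*}
\tfrac{1}{l}\log\dim(\cF_{A,l})\;\le\;\sum_{x}p(x)S(\cN_{s}(\pi_{x}))-\sum_{x}p(x)S(\hat{\cN}_{s}(\pi_{x}))-\delta\;=\;I(A\rangle BX,\omega(\cN_{s},p,\Phi))-\delta,
\end{equation*}
where the equality uses that the classical register $X$ converts the conditional coherent information into the $p$-average of coherent informations of $\id\otimes\cN_{s}$ on the purifications $\Phi_{x}$. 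Invoking the usual Barnum-Knill-Nielsen-type argument then produces a recovery $\cR_{i}$ so that $F_{e}(\pi_{\cF_{A,l}},\cR_{i}\circ\cN_{s}^{\otimes l}\circ\cP_{i})\ge 1-\epsilon_{s,l}$ with $\epsilon_{s,l}$ decaying exponentially in $l$.

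To promote the per-$s$ estimate to the uniform bound demanded by (1) and (2), I would apply the net construction sketched in Appendix \ref{mosonyicodes}: choose a polynomially-sized $2^{-l}$-net $\cJ_{0}\subset\cJ$ in diamond norm, pick the dimension $\dim(\cF_{A,l})$ according to $\inf_{s\in\cJ_{0}}I(A\rangle BX,\omega(\cN_{s},p,\Phi))-\delta$, and union-bound the decoupling errors over $\cJ_{0}$. Continuity of the coherent information and of $F_{e}$ in the channel then transfers the bound to the whole of $\cJ$ without loss in the exponent. Replacing the Haar average by a $2$-design preserves both the second-moment identities used in the Schur-Weyl estimate and the first-moment identity used for (3), so all three conclusions hold simultaneously on the finite ensemble $X_{x^{l}}$.

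The main technical obstacle is the joint bookkeeping of \emph{conditional} typicality with the decoupling bound while keeping the error uniform in $s$: the encoded state must be the product $\pi_{x^{l}}=\bigotimes_{i}\pi_{x_{i}}$, not a generic maximally mixed state, so every entropic quantity appearing in the decoupling estimate must be controlled through Schumacher-type bounds on the fluctuating product outputs $\bigotimes_{i}\cN_{s}(\pi_{x_{i}})$ and $\bigotimes_{i}\hat{\cN}_{s}(\pi_{x_{i}})$, and the Haar-to-design replacement must be arranged so that these product typicality estimates survive the averaging. Once this is set up correctly the remainder of the argument is a bookkeeping exercise over the generalization of \cite[Thm.~5]{boche17}.
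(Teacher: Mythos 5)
Your proposal follows essentially the same route as the paper's proof: a Klesse-type decoupling bound run inside $\cG_{x^{l}}$ with the Haar average replaced by an exact unitary $2$-design (which yields the finite family $X_{x^{l}}$ and, via the $1$-design identity, property (3) exactly), product/conditional typicality in the spirit of Lemmas \ref{5} and \ref{6} to control rank and $2$-norm of the outputs $\bigotimes_{i}\cN_{s}(\pi_{x_{i}})$ and $\bigotimes_{i}\hat{\cN}_{s}(\pi_{x_{i}})$ for $x^{l}\in T_{p,\delta}^{l}$, and a diamond-norm net to upgrade per-channel estimates to a bound uniform over $\cJ$. The one structural difference is minor: the paper decouples the single uniform mixture $\overline{\cN}_{\delta',l}$ of the reduced, typicality-cut netted channels and then pays factors $|S_{\tau}|$ and $|S_{\tau}|^{2}$ to pass to individual channels, whereas you bound each netted channel in expectation and union-bound; these are interchangeable at comparable cost. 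You also gloss over, but implicitly need, the paper's device of reducing a general $x^{l}\in T_{\lambda}^{l}$ to a concatenation of homogeneous blocks via permutation-conjugated codes $(\cU_{\gamma}\circ\cP_{i}\circ\cU_{\gamma}^{-1},\cU_{\gamma}^{-1}\circ\cR_{i}\circ\cU_{\gamma})$, since the typicality lemmas are stated for ordered products.

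There is, however, one concrete flaw: the net calibration. A ``polynomially-sized $2^{-l}$-net'' does not exist --- by Lemma \ref{lemma14} a $\tau$-net has cardinality of order $(3/\tau)^{2(d\cdot d')^{2}}$, so $\tau=2^{-l}$ forces $|\cJ_{0}|=2^{\Theta(l(d\cdot d')^{2})}$, exponential in $l$ with exponent at least $2(d\cdot d')^{2}$, which in general dominates the decoupling exponents (these are of order $\delta'^{2}$ or $\delta$); your union bound over $\cJ_{0}$ can then destroy the error estimate entirely. Conversely, if you insist on a genuinely polynomial net size, then $\tau$ is only polynomially small and the penalty $2l\tau$ incurred when transferring the estimate from the net to all of $\cJ$ (third property of Lemma \ref{net}) violates the \emph{exponential} decay demanded in property (2) --- a requirement that matters downstream, where these codes must survive Ahlswede robustification against the factor $(l+1)^{|S|}$. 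The paper resolves this with a self-referential calibration: it sets $\tau=\epsilon_{0,l}^{1/(8(d\cdot d')^{2})}$, where $\epsilon_{0,l}$ is the exponentially small decoupling error, so that both $|S_{\tau}|^{2}\epsilon_{0,l}\leq 6^{4(d\cdot d')^{2}}\,\epsilon_{0,l}^{1/2}$ and $2l\tau$ still decay exponentially. With that repair your argument goes through; as stated, the uniformity step fails.
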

	The ingredients to prove this lemma are presented here in form of two lemmas prior to the main proof. The following two lemmas reduce to Lemma 5 and 6 from \cite{boche17}\footnote{see Lemmas \ref{5boche} and \ref{6boche} for the statements.} when $|\cX|=1$. Following these lemmas, we state Theorem \ref{unitaridesigns} based on which we replace the integration with respect to Haar measure, with an average over a subset of the unitary groups called unitary designs. In short, the entanglement transmission codes in \cite{boche17} were derived given a number $l\in\mathbb{N}$ and subspace $\cG^{\otimes l}\subset\cH^{\otimes l}$. Here, we derive codes for a subspace $\cG_{x^{l}}$, with a tensor product structure determined by word $x^{l}$ (see the description above Lemma \ref{prop}).\\
	\begin{lemma}\label{5}
		Let $(\lambda_{x})_{x\in\cA}$ be a probability distribution with $\lambda_{x}>0,\forall x\in\cA$ on an alphabet $\cA$. For $\rho_{x^{l}}:=\bigotimes_{x\in\cA}\rho_{x}^{\otimes N_{x}}, N_{x}:=\lambda_{x}\cdot l\in\mathbb{N}, \rho_{x}\in\cS(\cH) \ \ \forall x\in\cA$ and $\delta\in(0,1/2)$, there exist a real number $\tilde{c}>0$, functions $h:\mathbb{N}\to\mathbb{R}^{+}$, $\phi:(0,1/2)\to\mathbb{R}^{+}$ with $\lim_{l\to\infty}h(l)=0$ and $\lim_{\delta\to 0}\phi(\delta)=0$ and an orthogonal projection $q_{\delta,l}$ satisfying
		\begin{enumerate}
			\item $\tr(\rho_{x^{l}}q_{\delta,l})\geq 1-|\cA|2^{-l(\tilde{c}\delta^{2}-h(l))}$
			\item $q_{\delta,l}\rho_{x^{l}}q_{\delta,l}\leq 2^{-(S(\rho_{x^{l}})-l\phi(\delta))}q_{\delta,l}$.
		\end{enumerate}
		The last inequality implies
		\begin{equation*}
		\parallel q_{\delta,l}\rho_{x^{l}}q_{\delta,l}\parallel_{2}^{2}\leq 2^{-(S(\rho_{x^{l}})-l\phi(\delta))}.
		\end{equation*}
	\end{lemma}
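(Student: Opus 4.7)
\textbf{Proof plan for Lemma \ref{5}.} The structural observation is that $\rho_{x^{l}}=\bigotimes_{x\in\cA}\rho_{x}^{\otimes N_{x}}$ is already in tensor product form with each factor a memoryless i.i.d.\ state, so the natural strategy is to apply the single-letter frequency-typical projection from the $|\cX|=1$ case (the version used in \cite{boche17}, i.e.\ Lemma \ref{5boche}) to each factor separately and take the tensor product. Concretely, for every $x\in\cA$ and each block length $N_{x}=\lambda_{x}l$, Lemma \ref{5boche} supplies a projection $q^{(x)}_{\delta,N_{x}}$ on $\cH^{\otimes N_{x}}$, a universal constant $c_{x}>0$, and functions $h_{x},\phi_{x}$ of the required kind such that
\begin{equation*}
\tr\bigl(\rho_{x}^{\otimes N_{x}}\,q^{(x)}_{\delta,N_{x}}\bigr)\geq 1-2^{-N_{x}(c_{x}\delta^{2}-h_{x}(N_{x}))},\qquad q^{(x)}_{\delta,N_{x}}\rho_{x}^{\otimes N_{x}}q^{(x)}_{\delta,N_{x}}\leq 2^{-N_{x}(S(\rho_{x})-\phi_{x}(\delta))}q^{(x)}_{\delta,N_{x}}.
\end{equation*}
I would then define the desired projection as $q_{\delta,l}:=\bigotimes_{x\in\cA}q^{(x)}_{\delta,N_{x}}$, which acts on $\cH^{\otimes l}$ since $\sum_{x}N_{x}=l$.

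For the first claim, the product structure gives $\tr(\rho_{x^{l}}q_{\delta,l})=\prod_{x\in\cA}\tr(\rho_{x}^{\otimes N_{x}}q^{(x)}_{\delta,N_{x}})$, so the elementary estimate $\prod_{x}(1-a_{x})\geq 1-\sum_{x}a_{x}$ reduces the problem to bounding the sum of the individual error terms. Setting $\lambda_{\min}:=\min_{x\in\cA}\lambda_{x}>0$ and $N_{x}\geq\lambda_{\min}\,l$, I can absorb the $x$-dependence into uniform constants by taking $\tilde{c}:=\lambda_{\min}\cdot\min_{x}c_{x}$ and $h(l):=\max_{x}h_{x}(\lambda_{\min}l)+o(1)$, both of which preserve the limits required in the statement. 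The bound then takes the claimed form $1-|\cA|\cdot 2^{-l(\tilde{c}\delta^{2}-h(l))}$, with the factor $|\cA|$ arising from the sum over $x$.

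For the equipartition bound, multiplying the per-factor inequalities across the tensor product gives
\begin{equation*}
q_{\delta,l}\,\rho_{x^{l}}\,q_{\delta,l}\leq\bigotimes_{x\in\cA}2^{-N_{x}(S(\rho_{x})-\phi_{x}(\delta))}q^{(x)}_{\delta,N_{x}}=2^{-\sum_{x}N_{x}S(\rho_{x})+l\phi(\delta)}\,q_{\delta,l},
\end{equation*}
where $\phi(\delta):=\max_{x}\phi_{x}(\delta)$ still tends to $0$ as $\delta\to 0$. Since $S(\rho_{x^{l}})=\sum_{x}N_{x}S(\rho_{x})$ by additivity of von Neumann entropy on tensor products, this is exactly claim (2). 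The Hilbert–Schmidt consequence follows from the standard observation that if $A:=q_{\delta,l}\rho_{x^{l}}q_{\delta,l}$ and $A\leq\mu\,q_{\delta,l}$ with $\mu=2^{-(S(\rho_{x^{l}})-l\phi(\delta))}$, then $\|A\|_{2}^{2}=\tr(A^{2})\leq\mu\tr(A)\leq\mu$, since $\tr(A)\leq\tr(\rho_{x^{l}})=1$.

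The approach is essentially a product-case reduction, so I do not expect a serious obstacle. The only point requiring care is to express all error quantities in terms of the total block length $l$ rather than the individual $N_{x}$'s, which forces the universal constant $\tilde{c}$ to pick up a factor $\lambda_{\min}$ from the strictly positive probabilities; this is harmless because the hypothesis explicitly assumes $\lambda_{x}>0$ for every $x\in\cA$. If one instead allowed $\lambda_{x}=0$, the argument would simply ignore the corresponding factors.
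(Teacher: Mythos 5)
Your proposal is correct and follows essentially the same route as the paper: you take the tensor product $q_{\delta,l}=\bigotimes_{x\in\cA}q^{(x)}_{\delta,N_{x}}$ of the single-letter frequency-typical projections from Lemma \ref{5boche}, bound the trace by the product of the per-factor bounds with the $x$-dependence absorbed via $\lambda_{\min}=\min_{x}\lambda_{x}$ (the paper's $c_{0}$, giving $\tilde{c}=c_{0}\bar{c}$ and $h(l)=c_{0}h'(c_{0}l)$), and obtain claim (2) by multiplying the per-factor operator inequalities and invoking additivity of the von Neumann entropy. Your closing derivation of the Hilbert--Schmidt bound via $\tr(A^{2})\leq\mu\,\tr(A)\leq\mu$ is the standard justification of the implication the paper merely asserts.
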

	\begin{proof}
		Let for each $x\in\cA$, $q_{\delta,N_{x}}^{(x)}$ be the frequency typical projection associated with state $\rho_{x}^{\otimes N_{x}}$ in terms of Lemma \ref{5boche}. We show that the projection operator $q_{\delta,l}:=\bigotimes_{x\in\cA}q_{\delta,N_{x}}^{(x)}$ has the properties listed in the statement above. We have 
		\begin{align*}
		\tr(\rho_{x^{l}}q_{\delta,l})&=\tr(\bigotimes_{x\in\cA} \rho_{x}^{\otimes N_{x}}q_{\delta,N_{x}}^{(x)})=\prod_{x\in\cA}\tr(\rho_{x}^{\otimes N_{x}}q_{\delta,N_{x}}^{(x)})\\&\geq \prod_{x\in\cA}(1-2^{-N_{x}(\bar{c}\delta^{2}-h'(N_{x}))})\\&\geq(1-2^{-c_{0}l(\bar{c}\delta^{2}-h'(c_{0}l))})^{|\cA|}\geq 1-|\cA|2^{-c_{0}l(\bar{c}\delta^{2}-h'(c_{0}l))},
		\end{align*}
		where $c_{0}:=\min_{x\in\cA}\lambda_{x}$. Setting $\tilde{c}=c_{0}\bar{c}$ and $h(l)=c_{0}h'(c_{0}l)$, we have the first claim. To see the second claim, we observe that
		\begin{align*}
		q_{\delta,l}\rho_{x^{l}}q_{\delta,l}&\leq \bigotimes_{x\in\cA}q_{\delta,N_{x}}^{(x)}\rho_{x}^{\otimes N_{x}}q_{\delta,N_{x}}^{(x)}\\&\leq \prod_{x\in\cA}2^{-(S(\rho_{x}^{\otimes N_{x}})-N_{x}\phi(\delta))}\bigotimes_{x\in\cA}q_{\delta,N_{x}}^{(x)}\\&=2^{-(S(\rho_{x^{l}})-l(\sum_{x\in\cA}\lambda_{x}\phi(\delta))}q_{\delta,l},
		\end{align*}
		where in the last equality, we have used additivity of von Neumann entropy. We are done.
	\end{proof}
	\begin{lemma}\label{6}
		Let $(\lambda_{x})_{x\in\cA}$ be a probability distribution with $\lambda_{x}>0,\forall x\in\cA$ on an alphabet $\cA$. For each $\mathcal{N}\in \cC(\mathcal{H},\mathcal{K})$, $\delta\in (0,1/2)$, and maximally mixed state $\pi_{x^{l}}:=\bigotimes_{x\in\cA}\pi_{x}^{\otimes N_{x}}, N_{x}=\lambda_{x}\cdot l\in\mathbb{N}$ on some $\mathcal{G}_{x^{l}}\subset\mathcal{H}^{\otimes l}$, there are functions $\gamma:(0,1/2)\to\mathbb{R}^{+}$ and $h:\mathbb{N}\to\mathbb{R}^{+}$ satisfying $\lim_{\delta\to 0}\gamma(\delta)=0$ and $h(l)\searrow 0$ and an operation $\mathcal{N}_{\delta,l}\in \cC^{\downarrow}(\mathcal{H}^{\otimes l},\mathcal{K}^{\otimes l})$, called the reduced operation with respect to $\mathcal{N}$ and $\pi_{x^{l}}$, such that
		\begin{enumerate}
			\item $\tr(\mathcal{N}_{\delta,l}(\pi_{x^{l}}))\geq 1-|\cA|2^{-l(\hat{c}\delta^{2}-h(l))}$, with constant $\hat{c}>0$.
			\item $\mathcal{N}_{\delta,l}$ has a Kraus representation with at most $n_{\delta,l}\leq 2^{S_{e}(\pi_{x^{l}},\mathcal{N}^{\otimes l})+l(\gamma(\delta)+\check{c}h(l))}$ Kraus operators with constant $\check{c}>0$.
			\item For every state $\rho\in\cS(\mathcal{H}^{\otimes l})$ and every two channels $\mathcal{M}\in \cC^{\downarrow}(\mathcal{H}^{\otimes l},\mathcal{H}^{\otimes l})$ and $\mathcal{L}\in \cC^{\downarrow}(\mathcal{K}^{\otimes l},\mathcal{H}^{\otimes l})$, the inequality
			\begin{equation*}
			F_{e}(\rho,\mathcal{L}\circ\mathcal{N}_{\delta,l}\circ\mathcal{M})\leq F_{e}(\rho,\mathcal{L}\circ\mathcal{N}^{\otimes l}\circ\mathcal{M})
			\end{equation*}
			is fulfilled. 
			\item As the set of Kraus operators of $\mathcal{N}_{\delta,l}$ is a subset of the set of Kraus operators of $\cN^{\otimes l}$ for each $l\in\mathbb{N}$, we have
			\begin{equation*}
			\mathcal{N}_{\delta,l}(\sigma)\leq\cN^{\otimes l}(\sigma)\ \ \forall\sigma\in\cS(\cH^{\otimes l}).
			\end{equation*}
		\end{enumerate}
	\end{lemma}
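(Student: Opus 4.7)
The plan is to adapt the construction of the reduced operation from the $|\cA|=1$ setting (Lemma 6 of \cite{boche17}) by replacing the i.i.d.\ typical projection used there with the block-product typical projection delivered by Lemma \ref{5} above. Fix a Stinespring isometry $v:\cH\to\cK\otimes\cH_{e}$ for $\cN$, so that $\cN(\sigma)=\tr_{\cH_{e}}(v\sigma v^{*})$ and $\hat{\cN}(\sigma)=\tr_{\cK}(v\sigma v^{*})$. The corresponding environment state for input $\pi_{x^{l}}$ factors as $\hat{\cN}^{\otimes l}(\pi_{x^{l}})=\bigotimes_{x\in\cA}\hat{\cN}(\pi_{x})^{\otimes N_{x}}$ up to a permutation of tensor factors, with von Neumann entropy equal to $S_{e}(\pi_{x^{l}},\cN^{\otimes l})$. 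This is exactly the block-product structure Lemma \ref{5} was set up to handle.

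Applying Lemma \ref{5} with $\rho_{x}:=\hat{\cN}(\pi_{x})$ produces a projection $q_{\delta,l}=\bigotimes_{x\in\cA}q_{\delta,N_{x}}^{(x)}$ on $\cH_{e}^{\otimes l}$ that is almost supported by $\hat{\cN}^{\otimes l}(\pi_{x^{l}})$ and compresses it in operator norm. I then define the reduced operation by
\begin{equation*}
\cN_{\delta,l}(\sigma):=\tr_{\cH_{e}^{\otimes l}}\bigl((\mathbf{1}_{\cK^{\otimes l}}\otimes q_{\delta,l})\,v^{\otimes l}\sigma(v^{\otimes l})^{*}\,(\mathbf{1}_{\cK^{\otimes l}}\otimes q_{\delta,l})\bigr),
\end{equation*}
which is manifestly in $\cC^{\downarrow}(\cH^{\otimes l},\cK^{\otimes l})$. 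Claim 1 then reads $\tr(\cN_{\delta,l}(\pi_{x^{l}}))=\tr(q_{\delta,l}\hat{\cN}^{\otimes l}(\pi_{x^{l}}))$, which is controlled by the first conclusion of Lemma \ref{5}. For Claim 2, pick an orthonormal basis $\{e_{i}\}$ of the range of $q_{\delta,l}$; the operators $A_{i}:=(\mathbf{1}\otimes\bra{e_{i}})v^{\otimes l}$ form a Kraus family for $\cN_{\delta,l}$ of size $n_{\delta,l}=\tr(q_{\delta,l})=\prod_{x}\tr(q_{\delta,N_{x}}^{(x)})$, whose upper bound comes from the standard rank bound for frequency typical projections applied factor-by-factor. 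Claim 4 is immediate: completing $\{e_{i}\}$ to a basis of $\cH_{e}^{\otimes l}$ produces a Kraus family for $\cN^{\otimes l}$ containing the $A_{i}$, so $\cN^{\otimes l}-\cN_{\delta,l}$ is completely positive. Claim 3 then follows, since $\cL\circ(\cN^{\otimes l}-\cN_{\delta,l})\circ\cM$ is also completely positive and evaluating the bilinear form $\braket{\psi,(\id\otimes\,\cdot\,)(\ket{\psi}\bra{\psi})\psi}$ against any purification $\psi$ of $\rho$ yields a nonnegative number.

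The only real bookkeeping obstacle is that Lemma \ref{5} is stated with bounds on $\tr(q_{\delta,l}\rho_{x^{l}})$ and on the operator $q_{\delta,l}\rho_{x^{l}}q_{\delta,l}$, but not on the rank $\tr(q_{\delta,l})$ itself. Its proof, however, constructs $q_{\delta,l}$ as $\bigotimes_{x}q_{\delta,N_{x}}^{(x)}$ with each factor a standard frequency typical projection of rank at most $2^{N_{x}(S(\hat{\cN}(\pi_{x}))+\phi(\delta))}$ with corrections of order $h(N_{x})$. Multiplying these bounds and invoking additivity $\sum_{x}N_{x}S(\hat{\cN}(\pi_{x}))=S_{e}(\pi_{x^{l}},\cN^{\otimes l})$ produces precisely the exponent in Claim 2. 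Matching the inherited error functions $\phi(\delta)$, $h(l)$ from Lemma \ref{5} to the $\gamma(\delta)$ and $\check{c}h(l)$ announced in the statement is then the only calculation left; no conceptual difficulty arises beyond this.
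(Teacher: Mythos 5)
Your construction is correct, and in fact it produces the very same operation as the paper's proof: the paper defines $\cN_{\delta,l}:=\bigotimes_{x\in\cA}\cN_{\delta,N_{x}}^{(x)}$ as the tensor product of the single-block reduced operations supplied by Lemma \ref{6boche}, and since each of those blocks is itself the Stinespring--environment-projection sandwich for $\pi_{x}^{\otimes N_{x}}$, your direct definition $\sigma\mapsto\tr_{\cH_{e}^{\otimes l}}\bigl((\bbmeins\otimes q_{\delta,l})\,v^{\otimes l}\sigma(v^{\otimes l})^{*}(\bbmeins\otimes q_{\delta,l})\bigr)$ with $q_{\delta,l}=\bigotimes_{x}q_{\delta,N_{x}}^{(x)}$ coincides with it up to the ordering of tensor factors. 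The routes differ in their dependency structure. The paper black-boxes Lemma \ref{6boche}: the trace bound, the Kraus count (via additivity of $S_{e}$), and properties 3 and 4 then follow in a few lines from multiplicativity of trace and entanglement fidelity under tensor products, and in particular the paper never needs a rank bound on any typical projection, since the per-block Kraus count is already part of Lemma \ref{6boche}. Your route re-derives the single-block lemma inline, which is why you are forced to reach inside the proof of Lemma \ref{5} for the per-factor bound $\tr(q_{\delta,N_{x}}^{(x)})\leq 2^{N_{x}(S(\hat{\cN}(\pi_{x}))+\phi(\delta))+o(N_{x})}$ --- a bound genuinely absent from the statement of Lemma \ref{5}, as you correctly flag and legitimately patch, since the frequency-typical projections used there do satisfy it. What each approach buys: yours is self-contained and makes properties 3 and 4 fully explicit (complete positivity of $\cN^{\otimes l}-\cN_{\delta,l}$ from the Kraus-subset structure, plus affinity and nonnegativity of $F_{e}$ in the operation, is cleaner than the paper's one-line appeal to multiplicativity); the paper's version is shorter and carries out the constant bookkeeping you defer, namely $h(l)=c_{0}h'(c_{0}l)$ and $\check{c}=1/c_{0}$ with $c_{0}=\min_{x\in\cA}\lambda_{x}$, which your "matching error functions" step would reproduce without difficulty.
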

	\begin{proof}
		Let for $x\in\cA$, $\cN_{\delta,N_{x}}^{(x)}$ be the reduced operation for $\pi_{x}^{\otimes N_{x}}$ in terms of Lemma \ref{6boche}. We show that $\cN_{\delta,l}=\bigotimes_{x\in\cA}\cN_{\delta,N_{x}}^{(x)}$ has the properties mentioned above. We have
		\begin{align*}
		\text{tr}(\mathcal{N}_{\delta,l}(\pi_{x^{l}}))&=\prod_{x\in\cA}\text{tr}(\mathcal{N}_{\delta,N_{x}}(\pi_{x}^{\otimes N_{x}}))\geq \prod_{x\in\cA} (1-2^{-N_{x}(c'\delta^{2}-h'(N_{x}))})\\&\geq (1-2^{-c_{0}l(c'\delta^{2}-h'(c_{0}l))})^{|\cA|}\geq 1-|\cA|2^{-c_{0}l(c'\delta^{2}-h'(c_{0}l))},
		\end{align*}
		where $c_{0}:=\min_{x\in\cA}\lambda_{x}$. Setting $h(l)=c_{0}h'(c_{0}l)$ and $\hat{c}=c_{0}c'$ we conclude the first claim. Also the following holds for $n_{\delta,l} $, the number of Kraus operators of $\cN_{\delta,l}$.
		\begin{align*}
		n_{\delta,l}=\bigotimes_{x\in\cA}n_{\delta,N_{x}}&\leq\prod_{x\in\cA}2^{(S_{e}(\pi_{x}^{\otimes N_{x}},\mathcal{N}^{\otimes N_{x}})+N_{x}\gamma(\delta)+N_{x}h'(N_{x}))}\\&\leq 2^{(S_{e}(\pi_{x^{l}},\mathcal{N}^{\otimes l})+l(\sum_{x\in\cA}\lambda_{x}\gamma(\delta)+\frac{\lambda_{x}}{c_{0}}h(l))}\\&=2^{(S_{e}(\pi_{x^{l}},\mathcal{N}^{\otimes l})+l(\gamma(\delta)+\frac{1}{c_{0}}h(l))},
		\end{align*} 
		where in the second line we have used additivity of the entropy exchange $S_{e}$. Finally, the last property comes from multiplicativity of the trace and entanglement fidelity function with respect to tensor products of its arguments. 
	\end{proof}
	We now have generalized statements of Lemmas 5 and 6 from \cite{boche17}. In the statement of Lemma \ref{prop}, we have used unitary designs to mimic the average over the unitary group with respect to Haar measure. The following theorem contains a definition of unitary designs.
	\begin{theorem}\label{unitaridesigns}(See e.g. \cite{evenly})
		Let $\cG$ be a Hilbert space. For unitaries $U\in\mathcal{U}(\mathcal{G})$, there exists a finite set $X\subset\mathcal{U}(\mathcal{G})$ with $|X|\leq\dim(\cG)^{4}$ such that
		\begin{equation}\label{2-design}
		\int_{U\in\mathcal{U}(\mathcal{G})}(U\otimes U) (\cdot)(U\otimes U)^{\dagger}dU=\frac{1}{|X|}\sum_{U\in X}(U\otimes U)(\cdot)(U\otimes U)^{\dagger}
		\end{equation}
		where the integration is with respect to the normalized Haar measure. From this definition it is clear that for $X$ we also have,
		\begin{equation}\label{1-design}
		\int_{U\in\mathcal{U}(\mathcal{G})}U (\cdot)U^{\dagger}dU=\frac{1}{|X|}\sum_{U\in X}U(\cdot)U^{\dagger}.
		\end{equation}
	\end{theorem}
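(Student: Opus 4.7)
The plan is to invoke Carath\'eodory's theorem on a suitable convex hull of the map $U\mapsto\mathrm{Ad}(U\otimes U)$, and then post-process the resulting convex combination into the uniform average required by the statement.

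First, define $\Phi:\mathcal{U}(\mathcal{G})\to V$ by $\Phi(U):=(U\otimes U)(\cdot)(U\otimes U)^{\dagger}$, where $V$ is the finite-dimensional real vector space of Hermicity-preserving linear maps on $\mathcal{L}(\mathcal{G}\otimes\mathcal{G})$. Since $\mathcal{U}(\mathcal{G})$ is compact and $\Phi$ continuous, the image $\Phi(\mathcal{U}(\mathcal{G}))$ and its convex hull $K:=\mathrm{conv}\bigl(\Phi(\mathcal{U}(\mathcal{G}))\bigr)$ are both compact in $V$. The Haar twirl
\begin{equation*}
T_{H}\;:=\;\int_{\mathcal{U}(\mathcal{G})}\Phi(U)\,dU
\end{equation*}
is the barycenter of the Haar push-forward under $\Phi$, so $T_{H}\in K$. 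By Carath\'eodory's theorem there exist finitely many $U_{1},\dots,U_{N}\in\mathcal{U}(\mathcal{G})$ and nonnegative weights $p_{1},\dots,p_{N}$ summing to one with $T_{H}=\sum_{i=1}^{N}p_{i}\,\Phi(U_{i})$, where $N$ is bounded by the affine dimension of $K$ plus one.

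The sharp bound $N\le\dim(\mathcal{G})^{4}$ is obtained via a representation-theoretic refinement. Identifying $\Phi(U)$ with its Choi operator in $\mathcal{L}(\mathcal{G}^{\otimes 4})$, a rank-one projector onto $(I_{\mathcal{G}\otimes\mathcal{G}}\otimes(U\otimes U))\ket{\Phi^{+}}$ where $\ket{\Phi^{+}}$ is maximally entangled on $(\mathcal{G}\otimes\mathcal{G})^{\otimes 2}$, one sees that the complex span of $\{\Phi(U):U\in\mathcal{U}(\mathcal{G})\}$ lies inside the bicommutant of the representation $U\mapsto U\otimes U$. Decomposing $\mathcal{G}\otimes\mathcal{G}$ into its symmetric and antisymmetric irreducible components via Schur--Weyl duality and applying Burnside's theorem bounds the relevant complex dimension by $\dim(\mathcal{G})^{4}$; retaining only the Hermicity-preserving directions gives the corresponding real affine bound.

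Finally, to convert the combination $\sum_{i}p_{i}\Phi(U_{i})$ into the uniform form $\tfrac{1}{|X|}\sum_{U\in X}\Phi(U)$ of \ref{2-design}, note that the set of probability measures on $\mathcal{U}(\mathcal{G})$ inducing $T_{H}$ is a convex and closed subset whose finitely-supported members admit dense rational-weight approximations; replacing each atom $U_{i}$ by a suitable integer number of copies yields a uniform-weight representation without increasing the cardinality beyond $\dim(\mathcal{G})^{4}$. The companion identity \ref{1-design} drops out of \ref{2-design} by substituting $A\otimes I_{\mathcal{G}}$ and partial-tracing over the second tensor factor: both sides pick up a factor $\dim(\mathcal{G})$ and collapse to the 1-design statement. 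The principal obstacle is precisely the combined sharpness: a loose Carath\'eodory bound with non-uniform weights is routine, but obtaining exactly $\dim(\mathcal{G})^{4}$ together with uniform weights demands the representation-theoretic dimension count and a careful rationalization step.
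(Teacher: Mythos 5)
The paper does not actually prove this theorem --- it is imported wholesale from the cited reference \cite{evenly}, so the benchmark is the literature result. Measured against that, your proposal contains two genuine gaps, both located exactly where you acknowledge the difficulty. First, the dimension count is wrong. The superoperators $\Phi(U)=(U\otimes U)(\cdot)(U\otimes U)^{\dagger}$ do \emph{not} span a subspace of the bicommutant of the representation $U\mapsto U\otimes U$: that bicommutant is an algebra of operators \emph{on} $\cG\otimes\cG$ and has dimension at most $d^{4}$ (with $d=\dim\cG$), whereas the span of the $\Phi(U)$ (equivalently, of their Choi operators, which are rank-one projectors in $\cL(\cG^{\otimes 4})$) is the image of the group algebra under the representation $U\mapsto U^{\otimes 2}\otimes\overline{U}^{\otimes 2}$. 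Its complex dimension is $\sum_{\lambda}d_{\lambda}^{2}$ over the irreducible constituents of $V^{\otimes 2}\otimes\overline{V}^{\otimes 2}$, and since the largest constituents have dimension $\Theta(d^{4})$, this is of order $d^{8}$, not $d^{4}$. Consequently Carath\'eodory yields a \emph{weighted} $2$-design with $O(d^{8})$ elements --- the standard soft bound --- and no spanning/Burnside argument can compress this to $d^{4}$: the known lower bounds for unitary $2$-designs are themselves of order $d^{4}$, so the cardinality claimed in the theorem sits just above what is information-theoretically forced, and in the literature it rests on explicit constructions (group orbits, unitary codes), not on a Carath\'eodory count.

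Second, the weighted-to-uniform conversion does not work as described. You need an \emph{exact} rational-weight representation of the Haar barycenter (density of rational weights does not provide one without a separate argument), and even granting weights $p_{i}=k_{i}/N$ with common denominator $N$, replacing each atom by $k_{i}$ copies produces a multiset of cardinality $N$ --- the common denominator, which is unbounded --- so the claim that this step does not push the cardinality beyond $d^{4}$ is unsupported. The general equal-weights existence theorem (Seymour--Zaslavsky averaging sets) likewise gives no cardinality control. Thus the conjunction that the theorem asserts --- uniform weights \emph{and} $|X|\leq\dim(\cG)^{4}$ --- is precisely what your argument cannot deliver; it genuinely requires the constructions of \cite{evenly}. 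Your closing deduction of (\ref{1-design}) from (\ref{2-design}), inserting $a\otimes\bbmeins$ and tracing out the second factor (both sides acquiring a factor $\dim\cG$), is correct, and your compactness argument placing the Haar twirl in the convex hull is fine; the two core steps above are where the proof fails.
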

	We refer to the set $X$ as a unitary design. We proceed with the proof.\\
	The expected fidelity function present in \cite{boche17} and \cite{boche18} is achieved by averaging the fidelity function over unitary group with respect to the Haar measure. Here we show that we can replace this by an expected value achieved by taking the average over the unitaries from the relevant unitary design. This brings us to the final statement needed to prove Lemma \ref{prop}, that is an implication of Lemma \ref{decouplemma}. We take the average of both sides of (\ref{lb}) with respect to the unitary design introduced in Theorem \ref{unitaridesigns}, to arrive at the desired expression for the expected fidelity lower-bounded. This result is essentially stated in the proof of Theorem 3.2 \cite{boche18}, to which we refer for more information. In the statement, we will also use the following notation.
	\begin{equation}
	F_{c,e}(\rho,\cN):=\max_{\cR\in\cC(\cH_{B},\cH_{A})}F_{e}(\rho,\cR\circ\cN),
	\end{equation}
	where $\rho\in\cS(\cH_{A})$ and $\cN\in\cC^{\downarrow}(\cH_{A},\cH_{B})$. 
	\begin{lemma}\label{entfid}
		Let $X$ be a unitary design in $\cG$ and $\cF\subset\cG$. With quantities defined as in Lemma \ref{decouplemma}, we have
		\begin{align*}
		\mathbb{E}F_{c,e}(U\pi_{\mathcal{F}}U^{\dagger},\overline{\mathcal{N}}):=&\frac{1}{|X|}\sum_{U\in X}F_{c,e}(U\pi_{\mathcal{F}}U^{\dagger},\overline{\mathcal{N}})\\&\geq\text{tr}(\overline{\mathcal{N}}(\pi_{\mathcal{G}}))
		-2\sum_{j=1}^{|S|}\sqrt{kn_{j}}\parallel\mathcal{N}_{j}(\pi_{\mathcal{G}})\parallel_{2}.
		\end{align*}
	\end{lemma}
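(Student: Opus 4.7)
The plan is to follow the Haar-averaged fidelity estimate established in \cite{boche17,boche18} and verify that the single analytic step which uses the Haar measure, namely the integration of a quadratic expression in $U$, is already reproduced by the finite average over the unitary design $X$ by virtue of (\ref{2-design}). The pointwise lower bound (\ref{lb}) from Lemma \ref{decouplemma} already holds for every $U\in\cU(\cG)$; only the averaging step needs attention.

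First, I would take the average of both sides of (\ref{lb}) over $U\in X$. The left-hand side becomes exactly the quantity $\mathbb{E}F_{c,e}(U\pi_{\mathcal{F}}U^{\dagger},\overline{\mathcal{N}})$ we wish to bound, whereas the right-hand side is built from a $U$-independent term $\mathrm{tr}(\overline{\mathcal{N}}(\pi_{\mathcal{G}}))$ (note $U\pi_{\mathcal{G}}U^{\dagger}=\pi_{\mathcal{G}}$) together with, for each $j$, a term of the form $\sqrt{k n_{j}}\,\mathbb{E}\bigl\|\mathcal{N}_{j}(U\pi_{\mathcal{F}}U^{\dagger})\bigr\|_{2}$. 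Applying Jensen's inequality to the concave function $\sqrt{\cdot}$, the average can be moved inside the square root at the cost of replacing $\mathbb{E}\|\cdot\|_{2}$ with $\sqrt{\mathbb{E}\|\cdot\|_{2}^{2}}$.

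Second, I expand each squared $2$-norm as a trace: $\|\mathcal{N}_{j}(U\pi_{\mathcal{F}}U^{\dagger})\|_{2}^{2}$ is a linear combination of expressions $\mathrm{tr}(V U\pi_{\mathcal{F}}U^{\dagger}V' U\pi_{\mathcal{F}}U^{\dagger}V'')$ built from the Kraus operators of $\mathcal{N}_{j}$, so the integrand depends on $U$ only through the tensor $(U\otimes U)(\pi_{\mathcal{F}}\otimes\pi_{\mathcal{F}})(U\otimes U)^{\dagger}$. The defining property (\ref{2-design}) of the unitary $2$-design $X$ then guarantees
\begin{equation*}
\frac{1}{|X|}\sum_{U\in X}(U\otimes U)(\,\cdot\,)(U\otimes U)^{\dagger}=\int_{\cU(\cG)}(U\otimes U)(\,\cdot\,)(U\otimes U)^{\dagger}\,dU,
\end{equation*}
so that the average over $X$ coincides with the Haar integral on exactly the expressions that appear here. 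Invoking the explicit Haar-measure computation carried out in the proof of Theorem~3.2 of \cite{boche18}, the right-hand side simplifies to $\mathrm{tr}(\overline{\mathcal{N}}(\pi_{\mathcal{G}}))-2\sum_{j=1}^{|S|}\sqrt{kn_{j}}\,\|\mathcal{N}_{j}(\pi_{\mathcal{G}})\|_{2}$, which is the assertion.

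The main obstacle is to check that no step in the original Haar-measure proof uses higher than second moments of $U$. Once Jensen has been applied, the remaining integrand is of bi-degree $(2,2)$ in the entries of $U$ and $U^{\dagger}$, since $U$ enters only through the single state $U\pi_{\mathcal{F}}U^{\dagger}$ (or, in the purified picture, through $(\mathrm{id}\otimes U)\psi$). This is precisely the regime in which the $2$-design substitution (\ref{2-design}) is valid, so the Haar-measure bound transfers verbatim to the average over $X$ without loss.
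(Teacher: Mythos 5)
Your proposal is correct and follows essentially the same route as the paper's proof: average (\ref{lb}) over the design, use Jensen's inequality to pass to $\mathbb{E}\parallel D_{j,l}(kU\pi_{\cF}U^{\dagger})\parallel_{2}^{2}$, observe that this expectation is quadratic in $U\pi_{\cF}U^{\dagger}$ so that (\ref{2-design}) makes the design average coincide with the Haar average, and then import the Haar computation from \cite{boche18} (Klesse's bound followed by Cauchy--Schwarz and Lemma \ref{entriwise}, which produce the term $2\sum_{j=1}^{|S|}\sqrt{kn_{j}}\parallel\cN_{j}(\pi_{\cG})\parallel_{2}$). One small correction: the first term on the right of (\ref{lb}), applied to the rotated subspace, is $\tr(\overline{\cN}(U\pi_{\cF}U^{\dagger}))$, which is \emph{not} $U$-independent (your parenthetical $U\pi_{\cG}U^{\dagger}=\pi_{\cG}$ concerns the wrong state); it becomes $\tr(\overline{\cN}(\pi_{\cG}))$ only after averaging, via the $1$-design identity (\ref{1-design}) giving $\frac{1}{|X|}\sum_{U\in X}U\pi_{\cF}U^{\dagger}=\pi_{\cG}$, exactly as in (\ref{first-term}) of the paper.
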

	\begin{proof}
		In the first and more straight forward step, we take the average of first term on the right hand side of (\ref{lb}), namely $w_{U}=\text{tr}(\overline{\mathcal{N}}(U\pi_{\mathcal{F}}U^{\dagger}))$;
		\begin{equation}\label{first-term}
		\frac{1}{|X|}\sum_{U\in X}\text{tr}(\overline{\mathcal{N}}(U\pi_{\mathcal{F}}U^{\dagger}))=
		\text{tr}(\overline{\mathcal{N}}(\frac{1}{|X|}\sum_{U\in X}U\pi_{\mathcal{F}}U^{\dagger}))=\overline{\mathcal{N}}(\pi_{\mathcal{G}}).
		\end{equation}			
		What remains is the expected value of $\parallel{D(kU\pi_{\mathcal{F}}U^{\dagger})}\parallel_{1}$. To make the calculation easier we consider averaging of an upper bound on this term in terms of the $2$-norm. From \cite{boche18} we know that
		\begin{equation*}
		\parallel{D(kU\pi_{\mathcal{F}}U^{\dagger})}\parallel_{1}\leq
		\sum_{j,l=1}^{|S|}\frac{1}{|S|}\sqrt{k\min\{n_{j},n_{l}\}\parallel D_{j,l}(kU\pi_{\mathcal{F}}U^{\dagger})\parallel_{2}^{2}}.
		\end{equation*}
		Using the concavity of square root function and Jensen's inequality we have
		\begin{equation*}
		\mathbb{E}(\parallel{D(kU\pi_{\mathcal{F}}U^{\dagger})}\parallel_{1})\leq
		\sum_{j,l=1}^{|S|}\frac{1}{|S|}\sqrt{k\min\{n_{j},n_{l}\} \mathbb{E}(\parallel D_{j,l}(kU\pi_{\mathcal{F}}U^{\dagger})\parallel_{2}^{2})},
		\end{equation*}
		where the expectation is taken over the unitaries belonging to the design. To use Klesse's \cite{klesse} argument as done in proof of Theorem 3.2 of \cite{boche18}, we must invoke the unitary invariance of $\mathbb{E}(\parallel D_{j,l}(kU\pi_{\mathcal{F}}U^{\dagger})\parallel_{2}^{2})$ with respect to all $U\in\mathcal{U}(\mathcal{G})$. To see this unitary invariance, we observe that (see \cite{boche18})
		\begin{equation}\label{summands}
		\parallel D_{j,l}(p)\parallel_{2}^{2}=\frac{1}{k^{2}}\sum_{i=1,r=1}^{n_{j},n_{l}}\text{tr}(p(a^{\dagger}_{j,i}a_{l,r})^{\dagger}pa^{\dagger}_{j,i}a_{l,r})-
		|\text{tr}(pa^{\dagger}_{j,i}a_{l,r})|^{2}.
		\end{equation}
		The unitary invariance of the expectation of the first summand is clear due to linearity of the trace function. For the expectation of the second summand we have
		\begin{align*}
		\frac{1}{|X|}\sum_{U\in X}|\text{tr}(UpU^{\dagger}a^{\dagger}_{j,i}a_{l,r})|^{2}	&=\frac{1}{|X|}\sum_{U\in X}\text{tr}(UpU^{\dagger}a^{\dagger}_{j,i}a_{l,r})\text{tr}(UpU^{\dagger}a^{\dagger}_{l,r}a_{j,i})\\
		&=\frac{1}{|X|}\sum_{U\in X}\text{tr}(UpU^{\dagger}a^{\dagger}_{j,i}a_{l,r}\otimes UpU^{\dagger}a^{\dagger}_{l,r}a_{j,i})\\
		&=\frac{1}{|X|}\sum_{U\in X}\text{tr}(U\otimes U (p \otimes p)(U\otimes U)^{\dagger}(A_{jilr}\otimes A_{jilr}^{\dagger}))\\
		&=\text{tr}(\frac{1}{|X|}\sum_{U\in X} U\otimes U (p \otimes p)(U\otimes U)^{\dagger}(A_{jilr}\otimes A_{jilr}^{\dagger})),
		\end{align*}
		where $A_{jilr}:=a^{\dagger}_{j,i}a_{l,r}$. From (\ref{2-design}), we conclude the invariance of second summand in (\ref{summands}). Therefore we can conclude that $\mathbb{E}(\parallel D_{j,l}(U\pi_{\mathcal{F}}U^{\dagger})\parallel_{2}^{2})$ is indeed invariant with respect to all $U\in\mathcal{U}(\mathcal{G})$. The rest of the proof is exactly the same as the proof of Theorem 3.2 of \cite{boche18}, yet stated here for reader's convenience, as follows. We can use Klesse's argument to conclude
		\begin{equation}\label{secondterm}
		\mathbb{E}(\parallel D_{j,l}(kU\pi_{\mathcal{F}}U^{\dagger})\parallel_{2}^{2})\leq\text{tr}(\mathcal{N}_{j}(\pi_{\mathcal{G}})\mathcal{N}_{l}(\pi_{\mathcal{G}})).
		\end{equation}
		Using (\ref{first-term}), (\ref{lb}) and (\ref{secondterm}) we conclude
		\begin{equation}\label{24}
		\mathbb{E}(F_{c,e}(U\pi_{\mathcal{F}}U^{\dagger},\overline{\mathcal{N}}))\geq\text{tr}(\overline{\mathcal{N}}(\pi_{\mathcal{G}}))
		-\sum_{j,l=1}^{|S|}\frac{1}{|S|}\sqrt{L_{jl}D_{jl}},
		\end{equation}
		where for $j,l\in\{1,...,|S|\}$, we introduce abbreviations
		\begin{equation*}
		L_{j,l}=k\min\{n_{j},n_{l}\}
		\end{equation*}
		and 
		\begin{equation*}
		D_{j,l}=\text{tr}(\mathcal{N}_{j}(\pi_{\mathcal{G}})\mathcal{N}_{l}(\pi_{\mathcal{G}}))
		=\braket{\mathcal{N}_{j}(\pi_{\mathcal{G}}),\mathcal{N}_{l}(\pi_{\mathcal{G}})}_{HS},
		\end{equation*}
		where $\braket{\cdot,\cdot}_{HS}$ denotes the Hilbert Schmidt product. It is obvious that
		\begin{equation*}
		L_{jl}\leq L_{jj} \text{and} L_{lj}\leq L_{ll}.
		\end{equation*}
		Moreover, the Cauchy-Schwartz inequality for the Hilbert-Schmidt inner product justifies the following chain of inequalities.
		\begin{align*}
		D_{jl}=\braket{\mathcal{N}_{j}(\pi_{\mathcal{G}}),\mathcal{N}_{l}(\pi_{\mathcal{G}})}_{HS}
		\leq \parallel\mathcal{N}_{j}(\pi_{\mathcal{G}})\parallel_{2}\parallel\mathcal{N}_{l}(\pi_{\mathcal{G}})\parallel_{2}
		\leq \max\{\parallel\mathcal{N}_{j}(\pi_{\mathcal{G}})\parallel_{2}^{2},\parallel\mathcal{N}_{l}(\pi_{\mathcal{G}})\parallel_{2}^{2}\}
		=\max\{D_{jj},D_{ll}\}.
		\end{align*}
		
		Therefore, an application of Lemma \ref{entriwise} allows us to conclude from (\ref{24}) that
		\begin{equation*}
		\mathbb{E}(F_{c,e}(U\pi_{\mathcal{F}}U^{\dagger},\overline{\mathcal{N}}))\geq\text{tr}(\overline{\mathcal{N}}(\pi_{\mathcal{G}}))
		-2\sum_{j=1}^{|S|}\sqrt{kn_{j}}\parallel\mathcal{N}_{j}(\pi_{\mathcal{G}})\parallel_{2}.
		\end{equation*}
	\end{proof}

Let for $\delta>0$, $\mathcal{N}_{\delta,l,j}$ be the reduced operation associated with $\mathcal{N}_{j},j\in S, |S|<\infty$ as defined by Lemma \ref{6}. Let $q_{\delta,l,j}\in\mathcal{L}(\mathcal{H})$ be the frequency-typical projection of  $\mathcal{N}_{\delta,l,j}(\pi_{x^{l}})$ in terms of Lemma \ref{6}. Define
	\begin{equation}\label{reduceddeltaop}
	\mathcal{N}'_{\delta,l,j}:=\mathcal{Q}_{\delta,l,j}\circ\mathcal{N}_{\delta,l,j}
	\end{equation}
	where $\mathcal{Q}_{\delta,l,j}(\cdot)=q_{\delta,l,j}(\cdot)q_{\delta,l,j}$. Also define
	\begin{equation*}
	\overline{\mathcal{N}}_{\delta,l}:=\frac{1}{|S|}\sum_{j=1}^{|S|}\mathcal{N}'_{\delta,l,j}
	\end{equation*}
	Applying Lemma \ref{entfid} on $\{\mathcal{N}'_{\delta,l,j}\}_{j\in S}$, with expectation taken over unitaries from a unitary design on $\cU(\cG_{x^{l}})$ we obtain
	\begin{equation}\label{RHSLB}
	\mathbb{E}F_{c,e}(U\pi_{\mathcal{F}}U^{\dagger},\overline{\mathcal{N}}_{\delta,l})\geq\text{tr}(\overline{\mathcal{N}}_{\delta,l}(\pi_{x^{l}}))
	-2\sum_{j=1}^{|S|}\sqrt{kn_{\delta,l,j}}\parallel\mathcal{N}'_{\delta,l,j}(\pi_{x^{l}})\parallel_{2}.
	\end{equation}
	We may now follow the steps taken in proof of Theorem 5 from \cite{boche17} to give a lower bound on each of the terms on the right hand side of (\ref{RHSLB}) using Lemmas \ref{5} and \ref{6}, to derive the following result.
	\begin{lemma}\label{premlemma}
		Let $\cJ:=\{\cN_{s}\}_{s\in S}\subset\cC(\cH,\cK)$ be a compound channel, $\delta>0$ and $\lambda\in\cP(\cX)$. For subspaces $(G_{x})_{x\in\cX},\cG_{x}\subset\cH, x\in\cX$, there exists $l_{0}\in\mathbb{N}$ such that for each $l\geq l_{0}$ and $x^{l}\in T_{\lambda}^{l}$, we find a subspace $\cF_{l}\subset\cG_{x^{l}}$ and $(l,\dim(\cF_{l}))$ entanglement transmission codes $(\cP_{i},\cR_{i})_{i=1}^{|X|}$ with $|X|<\infty$ such that,
		
		\begin{enumerate}
			\item $\dim(\cF_{l})\geq 2^{\inf_{s\in S}I_{c}(\pi_{x^{l}},\cN_{s}^{\otimes l})-l\delta}$ and 
			\item $\inf_{s\in S} \frac{1}{|X|}\sum_{i=1}^{|X|}F_{e}(\pi_{\cF_{l}},\cR_{i}\circ\cN_{s}^{\otimes l}\circ\cP_{i})\geq 1-\epsilon_{l} \ $ with $\epsilon_{l}\to 0$ as $l\to\infty$.
		\end{enumerate}
	\end{lemma}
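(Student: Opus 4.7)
The plan is to build on the bound (\ref{RHSLB}) already obtained by applying Lemma \ref{entfid} to the family $\{\cN'_{\delta,l,j}\}_{j\in S}$ of reduced operations (\ref{reduceddeltaop}), with expectation taken over a unitary design $X$ for $\cU(\cG_{x^l})$. What remains is to estimate each of the two terms appearing on its right hand side, to choose the dimension $k=\dim(\cF_l)$ appropriately, and finally to convert the resulting bound on $\mathbb{E}F_{c,e}$ against the combined channel $\overline{\cN}_{\delta,l}$ into an average fidelity bound against each individual $\cN_s^{\otimes l}$ as demanded by the lemma.

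For the trace term, I would apply the first claims of Lemma \ref{6} and of Lemma \ref{5} to the product states $\pi_{x^l}=\bigotimes_{x\in\cX}\pi_x^{\otimes N_x}$ and $\cN_j^{\otimes l}(\pi_{x^l})=\bigotimes_{x\in\cX}\cN_j(\pi_x)^{\otimes N_x}$ with $N_x=l\lambda(x)$, which yield $\tr(\cN'_{\delta,l,j}(\pi_{x^l}))\geq 1-|\cX|\,2^{-l\Theta(\delta^2)}$ exponentially for each $j$; averaging over the (finitely many) $j\in S$ then gives $\tr(\overline{\cN}_{\delta,l}(\pi_{x^l}))\geq 1-\epsilon'_l$ exponentially. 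For the noise term, monotonicity of the Schatten $2$-norm under the operator order, combined with the fourth property of Lemma \ref{6} and the second claim of Lemma \ref{5}, gives $\|\cN'_{\delta,l,j}(\pi_{x^l})\|_2\leq 2^{-\frac12(S(\cN_j^{\otimes l}(\pi_{x^l}))-l\phi(\delta))}$, while the second claim of Lemma \ref{6} bounds $n_{\delta,l,j}\leq 2^{S_e(\pi_{x^l},\cN_j^{\otimes l})+l(\gamma(\delta)+\check c h(l))}$. Using $I_c=S-S_e$ these combine into
\[
\sqrt{k\,n_{\delta,l,j}}\;\|\cN'_{\delta,l,j}(\pi_{x^l})\|_2\;\leq\;\sqrt{k}\cdot 2^{-\frac12(I_c(\pi_{x^l},\cN_j^{\otimes l})-l(\gamma(\delta)+\phi(\delta)+\check c h(l)))}.
\]
Choosing $\dim(\cF_l):=\lfloor 2^{\inf_{s\in S}I_c(\pi_{x^l},\cN_s^{\otimes l})-l\delta}\rfloor$ with $\delta$ large enough to dominate $\gamma(\delta)+\phi(\delta)$ makes each of the $|S|$ summands in (\ref{RHSLB}) vanish exponentially, so $\mathbb{E}F_{c,e}(U\pi_{\cF_l}U^\dagger,\overline{\cN}_{\delta,l})\geq 1-\epsilon_l$ for some exponentially small $\epsilon_l$.

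To finish, let $\tilde\cR_i$ achieve the maximum in $F_{c,e}(U_i\pi_{\cF_l}U_i^\dagger,\overline{\cN}_{\delta,l})$ and set $\cP_i:=\cU_i$, $\cR_i:=\tilde\cR_i$. The isometric invariance $F_e(\pi_{\cF_l},\cR\circ\Lambda\circ\cU_i)=F_e(U_i\pi_{\cF_l}U_i^\dagger,\cR\circ\Lambda)$, linearity of $F_e$ in the channel, the decomposition $\overline{\cN}_{\delta,l}=\tfrac1{|S|}\sum_j\cN'_{\delta,l,j}$, and the trivial bound $F_e\leq 1$ together propagate the fidelity bound against $\overline{\cN}_{\delta,l}$ into one against each $\cN'_{\delta,l,j}$ separately, with a loss of factor $|S|$. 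The third property of Lemma \ref{6}, applied with $\cL:=\cR_i\circ\cQ_{\delta,l,j}$, then replaces $\cN'_{\delta,l,j}$ by $\cN_j^{\otimes l}$ without decreasing the fidelity, giving the second claim of the lemma.

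The main obstacle I anticipate is parameter bookkeeping: the constants in Lemmas \ref{5} and \ref{6} depend on $|\cX|$ and on $c_0=\min_x\lambda(x)$, and all of them must be absorbed into an exponent that stays strictly negative while the dimension bound $\dim(\cF_l)\geq 2^{\inf_{s\in S}I_c-l\delta}$ holds uniformly in $s\in S$. For finite $S$ this is a routine union bound; for general $S$ one must first reduce to a finite diamond-norm net in the sense of Definition \ref{netdef}, using continuity of $I_c$ in the channel to preserve the infimum up to tolerance $\delta$.
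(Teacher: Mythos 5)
Your outline reproduces the paper's own proof almost step for step (same starting point (\ref{RHSLB}), same trace and $2$-norm estimates via Lemmas \ref{5} and \ref{6}, same choice $\dim(\cF_{l})=\lfloor 2^{\inf_{s}I_{c}(\pi_{x^{l}},\cN_{s}^{\otimes l})-l\delta}\rfloor$, same per-channel conversion with loss factor $|S|$, same reduction to a finite net for general $S$), but your final step has a genuine gap. Applying the third property of Lemma \ref{6} with $\cL:=\cR_{i}\circ\cQ_{\delta,l,j}$ gives you a lower bound on $F_{e}(\pi_{\cF_{l}},\cR_{i}\circ\cQ_{\delta,l,j}\circ\cN_{j}^{\otimes l}\circ\cP_{i})$, i.e.\ the fidelity against the \emph{projected} true channel --- the cut-off $\cQ_{\delta,l,j}$ is still sitting in the circuit, and this does not yet "give the second claim of the lemma". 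Dropping the projection is not a monotone operation: the difference $\cN_{j}^{\otimes l}-\cQ_{\delta,l,j}\circ\cN_{j}^{\otimes l}$ is not completely positive (already for $\cN=\id$ the map $\sigma\mapsto\sigma-q\sigma q$ fails to be positive on pure states with components both inside and outside the range of $q$), so $F_{e}$ can a priori decrease when $\cQ_{\delta,l,j}$ is removed. Nor can you absorb $\cQ_{\delta,l,j}$ into the recovery map, since it depends on $j$ --- that is, on the unknown channel state --- while the code $(\cP_{i},\cR_{i})$ must be universal (and the recovery must be trace preserving). The paper closes exactly this gap with Lemma \ref{projectionlemma}: if $F_{e}(\rho,\cD\circ Q\circ\cA)\geq 1-\epsilon$ then $F_{e}(\rho,\cD\circ\cA)\geq 1-3\epsilon$. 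Inserting this costs only a factor $3$, preserves the exponential decay, and repairs your argument with a single additional invocation.

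Two smaller points. First, your $2$-norm bookkeeping with the output entropy, $\|\cN'_{\delta,l,j}(\pi_{x^{l}})\|_{2}\leq 2^{-\frac{1}{2}(S(\cN_{j}^{\otimes l}(\pi_{x^{l}}))-l\phi(\delta))}$, is in fact the correct version: the paper's display (\ref{2-normbound}) writes $S(\pi_{x^{l}})$, but the subsequent rate arithmetic, which combines the exponent with $\log k_{l}\leq \min_{s}I_{c}-l\delta$ and $I_{c}=S(\cN_{s}^{\otimes l}(\pi_{x^{l}}))-S_{e}$ to reach $2^{-l(\delta-\phi(\delta')-\gamma(\delta')-\check{c}h(l))}$, requires precisely the output entropy you use. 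Second, you tacitly write $\pi_{x^{l}}=\bigotimes_{x\in\cX}\pi_{x}^{\otimes N_{x}}$, which assumes $x^{l}$ is sorted into homogeneous blocks; for a general $x^{l}\in T_{\lambda}^{l}$ the hypotheses of Lemmas \ref{5} and \ref{6} apply only after reordering, and the paper handles this by conjugating the codes with the permutation CPTP map $\cU_{\gamma}$, $(\cU_{\gamma}\circ\cP_{i}\circ\cU_{\gamma}^{-1},\cU_{\gamma}^{-1}\circ\cR_{i}\circ\cU_{\gamma})$. Finally, note that the finite net (Lemma \ref{net}, with fidelity loss $2l\tau$ and $\tau$ chosen polynomially in the error) is needed not only to preserve $\inf_{s}I_{c}$ up to $\delta$, as you say, but already to make sense of $\overline{\cN}_{\delta,l}$ and of the bound (\ref{RHSLB}), whose error term grows linearly in $|S|$ (and the final estimate carries $|S_{\tau}|^{2}$); your closing remark anticipates this correctly.
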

	\begin{proof}
		Let $\cJ_{\tau}$ with index set $S_{\tau}$ be the net associated with $\cJ$ in terms of Lemma \ref{net}. Choose $\delta'\in (0,1/2)$ and $l_{0}\in\mathbb{N}$ satisfying $\gamma(\delta')+\phi(\delta')+\check{c}h(l_{0})\leq\frac{\delta}{2}$ with functions $\gamma,\phi,h$ and constant $\check{c}$ from Lemmas \ref{5} and \ref{6}. Now choose for every $l\geq l_{0}$, a subspace $\cF_{l}\subset\cG_{x^{l}}$ such that
		\begin{equation}\label{dimFl}
		\dim(\cF_{l}):=k_{l}=\lfloor 2^{\min_{s\in S_{\tau}}I_{c}(\pi_{x^{l}},\cN_{s}^{\otimes l})-l\delta}\rfloor.
		\end{equation}
		This is always possible as $S(\pi_{\cG_{x^{l}}})\geq I_{c}(\pi_{x^{l}},\cN_{s}^{\otimes l})$. We have 
		\begin{equation}\label{klbound}
		\min_{s\in S_{\tau}}I_{c}(\pi_{x^{l}},\cN_{s}^{\otimes l})-l\delta-o(l_{0})\leq\log k_{l}\leq\min_{s\in S_{\tau}}I_{c}(\pi_{x^{l}},\cN_{s}^{\otimes l})-l\delta.
		\end{equation}
		We assume for the moment that $x^{l}\in T_{\lambda}^{l}$ is given by concatenation of homogeneous words of size $N_{x}:=N(x|x^{l})$. That is, for $\cA:=\{x\in\cX: N_{x}\neq 0\}\subset\cX$, we have $x^{l}=(x^{N_{x}})_{x\in\cA}$. As such, the hypotheses of Lemma \ref{5} and Lemma \ref{6} apply to to product states indexed by $x^{l}$. This assumption however, does not prohibit generality of the proven results, since each word of type $\lambda$ results from a permutation of the letters of word $x^{l}$. Namely, for any word $\tilde{x}^{l}\in T_{\lambda}^{l}$, there exists a permutation mape $\gamma$ with $\gamma(x^{l})=\tilde{x}^{l}$. Therefore, given codes $(\cP_{i},\cR_{i})_{i\in X}$ for $x^{l}$ with the properties mentioned in the statement of the present lemma, suitable codes for $\tilde{x}^{l}$ will be given by $(\cU_{\gamma}\circ\cP_{i}\circ\cU_{\gamma}^{-1},\cU_{\gamma}^{-1}\circ\cR_{i}\circ\cU_{\gamma})$, with $\cU_{\gamma}$ the CPTP map permuting the tensor factors according to $\gamma$.\\
		We now give lower bounds for the terms on the right hand side of (\ref{RHSLB}).
		\begin{align}\label{tracebound}
		\tr(\overline{\cN}_{\delta',l}(\pi_{x^{l}}))&=\frac{1}{|S_{\tau}|}\sum_{s=1}^{|S_{\tau}|}\tr(\cN'_{\delta',l,s}(\pi_{x^{l}}))\\&=\frac{1}{|S_{\tau}|}\sum_{s=1}^{|S_{\tau}|}\big[\tr(Q_{\delta',l,s}\circ\cN_{s}^{\otimes l}(\pi_{x^{l}}))\nonumber-\tr(Q_{\delta',l,s}\circ[\cN^{\otimes l}-\cN_{\delta',l,s}](\pi_{x^{l}}))\big]\\&\geq 1-|\cX|(2^{-l(\tilde{c}\delta'^{2}-h(l))}-2^{-l(c\delta'^{2}-h(l))}).
		\end{align}
		In the last inequality we have inserted the bounds from Lemmas \ref{5} and \ref{6}, after using $0\leq\tr(Q_{\delta',l,s}\circ[\cN^{\otimes l}-\cN_{\delta',l,s}](\pi_{x^{l}}))\leq\tr([\cN^{\otimes l}-\cN_{\delta',l,s}](\pi_{x^{l}}))$. Also,
		\begin{align}\label{2-normbound}
		\parallel\cN'_{\delta',l,s}(\pi_{x^{l}})\parallel_{2}^{2}&\leq\parallel Q_{\delta',l,s}\circ\cN_{\delta',l,s}(\pi_{x^{l}})\parallel_{2}^{2}\nonumber+\parallel Q_{\delta',l,s}\circ(N_{s}^{\otimes l}-\cN_{\delta',l,s})(\pi_{x^{l}})\parallel_{2}^{2}\\&\leq\parallel Q_{\delta',l,s}\circ\cN_{s}^{\otimes l}(\pi_{x^{l}})\parallel_{2}^{2}\leq 2^{-(S(\pi_{x^{l}})-l\phi(\delta'))}.
		\end{align}
		In the second inequality we have used $\parallel A\parallel_{2}^{2}+\parallel B\parallel_{2}^{2}\leq\parallel A+B\parallel_{2}^{2}$ for non-negative operators $A,B\in\cL(\cK^{\otimes l})$ (see \cite{klesse}), and inserted the lower bound from Lemma \ref{5}. Inserting the bounds from (\ref{tracebound}) and (\ref{2-normbound}) into (\ref{RHSLB}) we obtain
		\begin{align}\label{fcelb}
		\mathbb{E}F_{c,e}(U\pi_{\cF_{l}}U^{\dagger},\overline{\cN}_{\delta',l})&\geq 1-|\cX|\big[2^{-l(c\delta'^{2}-h(l))}-2^{-l(\tilde{c}\delta'^{2}-h(l))}\big]\nonumber\\&-2\sum_{s=1}^{|S_{\tau}|}\sqrt{2^{\log k_{l}-S(\pi_{x^{l}})+l\phi(\delta')+S_{e}(\pi_{x^{l}},\cN_{s}^{\otimes l})+l(\gamma(\delta')+\check{c}h(l))}}\nonumber\\&\geq 1-|\cX|\big[2^{-l(c\delta'^{2}-h(l))}-2^{-l(\tilde{c}\delta'^{2}-h(l))}\big]-2|S_{\tau}|\sqrt{2^{-l(\delta-\phi(\delta')-\gamma(\delta')-\check{c}h(l))}}. 
		\end{align}
		In the second inequality above we have inserted the upper bound for $k_{l}$ from (\ref{klbound}). For $l\geq l_{0}$, (\ref{fcelb}) gives us an exponential decay of error. Therefore we can write
		\begin{align*}
		\mathbb{E}F_{c,e}(U\pi_{\cF_{l}}U^{\dagger},\overline{\cN}_{\delta',l})\geq 1-\epsilon_{1,l}-|S_{\tau}|\epsilon_{2,l}
		\end{align*}
		with $\epsilon_{i,l}\to 0$ with $l\to\infty$ for $i=1,2$. From this we conclude
		\begin{align*}
		\min_{s\in S_{\tau}}\mathbb{E}F_{c,e}(U\pi_{\cF_{l}}U^{\dagger},Q_{\delta',l,s}\circ\cN_{\delta',l,s})\geq 1-|S_{\tau}|\epsilon_{1,l}-|S_{\tau}|^{2}\epsilon_{2,l}.
		\end{align*}
		From the third property under Lemma \ref{6}, the above inequality implies
		\begin{align}\label{Q}
		\min_{ s\in S_{\tau}}\mathbb{E}F_{c,e}(U\pi_{\cF_{l}}U^{\dagger},Q_{\delta',l,s}\circ\cN^{\otimes l}_{s})\geq 1-|S_{\tau}|\epsilon_{1,l}-|S_{\tau}|^{2}\epsilon_{2,l}\geq 1-|S_{\tau}|^{2}\epsilon_{0,l},
		\end{align}
		where $\epsilon_{0,l}:=\max_{i=1,2}\epsilon_{i,l}$. Setting shorthand notation $\beta_{s,U}:=1-F_{c,e}(U\pi_{\cF_{l}}U^{\dagger},Q_{\delta',l,s}\circ\cN^{\otimes l}_{s})$, we obtain from  Lemma \ref{projectionlemma}, $\ F_{c,e}(U\pi_{\cF_{l}}U^{\dagger},
		\cN^{\otimes l}_{s})\geq 1-3\beta_{s,U}$. Hence from (\ref{Q}) we conclude
		\begin{align}\label{Q1}
		\min_{s\in S_{\tau}}\mathbb{E}F_{c,e}(U\pi_{\cF_{l}}U^{\dagger},\cN^{\otimes l}_{s})\geq  1-3|S_{\tau}|^{2}\epsilon_{0,l}.
		\end{align}
		By Lemma \ref{net}, we have 
		\begin{align}\label{Q2}
		\min_{ s\in S}\mathbb{E}F_{c,e}(U\pi_{\cF_{l}}U^{\dagger},\cN^{\otimes l}_{s})\geq  1-3|S_{\tau}|^{2}\epsilon_{0,l}-2l\tau.
		\end{align}
		
		Given that we find $|S_{\tau}|\leq(\frac{6}{\tau})^{2(d\cdot d')^{2}}$, choosing $\tau=\epsilon_{0,n}^{\frac{1}{8(d\cdot d')^{2}}}$, we have the desired exponential decay of error. Also, as $\cJ_{\tau}\subset\cJ$, we obtain the desirable lower bound on the rate. 
	\end{proof}
	\begin{proof}[Proof of Lemma \ref{prop}]
		Assume $x^{l}\in T_{\lambda}^{l}$. Note that $\forall\lambda\in\mathcal{T}(\mathcal{X},l)$, either $T_{\lambda}^{l}\subset T_{p,\delta}^{l}$ or $T_{\lambda}^{l}\bigcap T_{p,\delta}^{l}=\emptyset$. Since by assumption of the lemma $x^{l}\in T_{p,\delta}^{l}$, we conclude $T_{\lambda}^{l}\subset T_{p,\delta}^{l}$. For each $\tilde{\delta}>0$, we have from Lemma \ref{premlemma} applied on the compound channel $\cJ\subset\cC(\cH_{A},\cH_{B})$, a subspace $\cF_{A,l}\subset\cH_{A}^{\otimes l}$ with
		\begin{align*}
		\frac{1}{l}\log\dim(\cF_{A,l})\geq\frac{1}{l}\inf_{s\in S}I_{c}(\pi_{x^{l}},\cN_{s}^{l})-\tilde{\delta}=&\inf_{s\in S}\sum_{x\in\cA}\frac{1}{l}I_{c}(\pi_{x}^{\otimes N_{x}},\cN_{s}^{\otimes N_{x}})-\tilde{\delta}\\&=\inf_{s\in S}\sum_{x\in\cA}\lambda(x)I_{c}(\pi_{x},\cN_{s})-\tilde{\delta}\\&\geq\inf_{s\in S}\sum_{x\in\cA}p(x)(I_{c}(\pi_{x},\mathcal{N}_{s})-\tilde{\delta})-|\lambda(x)-p(x)|
		\cdot(I_{c}(\pi_{x},\mathcal{N}_{s})-\tilde{\delta})
		\\&\geq\inf_{s\in S}I(A\rangle BX,\omega(\cN_{s},p,\Phi))-\tilde{\delta}-|\cX|\overline{c}\tilde{\delta}.
		\end{align*}
		with$\overline{c}:=2\log\dim(\cH_{A}\otimes\cH_{B})$ and $\omega_{s}$ defined by (\ref{evaluationstate}). With this rate we obtain exponential decay of error as explained above. Choosing $\tilde{\delta}$ such that  $\delta>\tilde{\delta}+|\cX|\overline{c}\tilde{\delta}$, we obtain the desired lower bound on the rate. The last property listed under Lemma \ref{prop} is clear by averaging property of the Haar measure, reproduced here by the unitary design in $\cU(\cG_{x^{l}})$.
	\end{proof} 
	\subsection{Classical message transmission codes}\label{ctresult}
	The desired statement of universal codes for c-q channels can be extracted from \cite{mosonyi}. Therein, the authors have introduced universal random codes for transmission of classical messages over c-q channels, using properties of Renyi entropies. Based on the same codes, we have derived the following lemma to allow for a faster decay of error while considering only " typical " inputs.  
	\begin{lemma}\label{corollary22}
		Let $\cJ:=\{\cN_{s}\}_{s\in S}\subset\cC(\cH_{A},\cH_{B})$ and $V:\cX\to\cS(\cH_{A})$ be a c-q channel. For each $\eta>0$ and $p\in\mathcal{P}(\mathcal{X})$, there exists a number $n_{0}$, such that for $n\geq n_{0}$, there exists a classical encoding map $u:m\to u_{m}\in\mathcal{X}^{n}$ and decoding POVM $(\Lambda)_{m\in[M_{n}]}$ such that 
		\begin{enumerate}
			\item	$\forall m\in[M_{n}]:u_{m}\in T_{p,\eta}^{n}$,
			\item$\inf_{s\in S}\min_{m\in M_{n}}\tr((\cN_{s}\circ V)^{\otimes n}(u_{m})\Lambda_{m})\geq 1-\epsilon_{n}$, with $\epsilon_{n}\to 0$ exponentially as $n\to\infty$,
			\item $\frac{1}{n}\log M_{n}\geq \inf_{s\in S}I(X;B,\omega(\cN_{s},p,\Psi))-c\eta$
		\end{enumerate}
		with $\omega(\cN_{s},p,\Psi)$ defined by (\ref{evaluationstate}) for $\Psi:=(\Psi_{x}:x\in\cX,\tr_{\cH}(\Psi_{x})=V(x))$ and constant $c>0$. 
	\end{lemma}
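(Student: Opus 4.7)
The plan is to extract from Mosonyi's universal random coding construction for compound c-q channels (presented in Appendix \ref{mosonyicodes}) a deterministic code for the composite c-q family $\{\cN_s \circ V\}_{s \in S}$ whose codewords all lie in $T_{p,\eta}^n$, at the Holevo rate with only an $O(\eta)$ slack. The core of the argument is already available; what remains is to add a typicality filter and a short expurgation.

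First I would invoke Mosonyi's theorem applied to $\{\cN_s \circ V\}_{s \in S}$ with input distribution $p$. This yields a random codebook $(U_m)_{m=1}^{M_n}$ of i.i.d. samples from $p^{\otimes n}$ together with a single universal decoding POVM $(\Lambda_m)$ and a rate $\frac{1}{n}\log M_n \geq \inf_{s\in S} I(X;B,\omega(\cN_s,p,\Psi)) - c_1 \eta$ such that
\begin{equation*}
\expe \sup_{s \in S}\, \tr\bigl((\cN_s \circ V)^{\otimes n}(U_m)\,(\id - \Lambda_m)\bigr) \leq \epsilon_n',
\end{equation*}
with $\epsilon_n' \to 0$ exponentially in $n$; the outer form with expectation before the supremum is obtained either directly from Mosonyi's pointwise-in-$s$ bound or, in the case of infinite $S$, after approximation by a finite net $S_\tau$ in diamond norm as already used in Section \ref{etcsection}. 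Next I augment the per-codeword error with the typicality indicator $\bbmeins\{u \notin T_{p,\eta}^n\}$: a Chernoff estimate for the empirical composition of $U_m$ gives $\prob(U_m \notin T_{p,\eta}^n) \leq 2^{-n\kappa\eta^2}$ for some $\kappa > 0$ depending only on $|\cX|$, so the augmented expectation remains exponentially small and uniform in $s \in S$.

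Finally I apply a standard double expurgation. Markov's inequality on the augmented error passes from average-over-$m$ to maximum-over-$m$ at the cost of a factor of two in $M_n$; and since any surviving codeword with $u_m \notin T_{p,\eta}^n$ would contribute augmented error $\geq 1$ and therefore cannot remain, every retained codeword automatically satisfies property (1), while the channel-error term alone then delivers property (2). The halving costs $1/n$ in the rate, and the combined $\eta$-slack from Mosonyi's bound and the typicality restriction is absorbed into the single constant $c$ in property (3). The main obstacle is not the exponential decay, which is inherited directly from Mosonyi, but the rate bookkeeping in this final step: one has to verify that enforcing $u_m \in T_{p,\eta}^n$ and, if needed, passing through a diamond-norm net of $S$ together cost only an $O(\eta)$ penalty. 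This follows from continuity of the map $p \mapsto I(X;B,\omega(\cN_s,p,\Psi))$ uniform in $s$ (by finite-dimensionality of the output space) together with the type-class size estimate $\frac{1}{n}\log|T_{p,\eta}^n| = H(p) + O(\eta)$.
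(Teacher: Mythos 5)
Your proposal is correct and follows exactly the route the paper intends: the paper gives no detailed proof of Lemma \ref{corollary22}, stating only that it is extracted from Mosonyi's universal random c-q codes \cite{mosonyi} by restricting to typical inputs, and your combination of Mosonyi's random codebook drawn from $p^{\otimes n}$, a diamond-norm net $S_\tau$ with $\tau$ chosen as a small power of the error to preserve exponential decay, a Chernoff bound on the event $U_m\notin T_{p,\eta}^{n}$, and a Markov/expurgation step converting average to maximal error is precisely that extraction. The only stray remark is the appeal to continuity of $p\mapsto I(X;B,\omega(\cN_s,p,\Psi))$ and the type-class size estimate at the end, which your construction does not actually need (the typicality filter costs only exponentially small probability, not rate), but this is harmless over-justification rather than a gap.
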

	\section{Proofs for the compound channel}\label{proofsforcompound}
	In this section we proceed with the proof of Theorem \ref{mainresult} in two parts. In the following section (converse part), it is also demonstrated that CSI at the decoder does not improve channel's classically enhanced entanglement generation capacity. In the more involved direct part of the proof, we introduce classically enhanced entanglement transmission codes by marrying classical message transmission codes from \cite{mosonyi} and a generalization of entanglement transmission codes from \cite{boche17} and \cite{boche18} as stated in Section \ref{preliminaryresults}. 
	
\subsection{Proof of the converse }\label{conversesection}
	In this section we prove the following lemma. 
	\begin{lemma}\label{lemmaconv}
		Let $\mathcal{J}:=\{\cN_{s}\}_{s\in S}\subset \cC(\mathcal{H}_{A},\mathcal{H}_{B})$ be any compound quantum channel. It holds
		\begin{equation}\label{converse}
		\overline{C}_{CEG}(\mathcal{J})\subset
		\cl\bigg(\bigcup_{l=1}^{\infty}\frac{1}{l}\bigcup_{p,\Psi}\bigcap_{s\in S}\hat{C}(\mathcal{N}_{s}^{\otimes l},p,\Psi)\bigg).
		\end{equation}
	\end{lemma}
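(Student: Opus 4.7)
The plan is a standard two-part converse adapted to the compound setting. Starting from $(R_1,R_2) \in \overline{C}_{CEG}(\cJ)$, I fix $\epsilon,\delta > 0$ and, for all sufficiently large $n$, invoke the definition to obtain an $(n, M_{1,n}, M_{2,n})$ CEG code $(\Psi_m, \cR_m)_{m \in [M_{1,n}]}$ with $\frac{1}{n}\log M_{i,n} \ge R_i - \delta$ and with average CEG fidelity at least $1 - \epsilon$ against every $s \in S$. Setting $p$ to be uniform on $[M_{1,n}]$ and $\Psi := (\Psi_m)_m$, I form the evaluation states $\omega_s := \omega(\cN_s^{\otimes n}, p, \Psi)$ of (\ref{evaluationstate}). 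The target is to show, for every $s \in S$, the two inequalities
\begin{equation*}
\log M_{1,n} \le I(X; B^n, \omega_s) + n\tilde\epsilon_n, \qquad \log M_{2,n} \le I(A\rangle B^n X, \omega_s) + n\tilde\epsilon'_n,
\end{equation*}
with error terms tending to zero as $\epsilon \to 0$ and $n \to \infty$. Combined with the rate lower bounds on the code, these place $(R_1 - \delta, R_2 - \delta)$ inside $\frac{1}{n}\bigcap_{s\in S}\hat C(\cN_s^{\otimes n}, p, \Psi)$ up to vanishing slack, and the outer union over $p, \Psi, n$ together with the closure in (\ref{converse}) absorb the residuals.

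For the classical bound I extract from $\cR = \sum_m |m\rangle\langle m|\otimes \cR^{(m)}$ the POVM $\Lambda_m = (\cR^{(m)})^{*}(\id_{\cF_{B,n}})$ on $\cH_B^{\otimes n}$. Monotonicity of fidelity under partial trace turns the CEG guarantee into $\tr(\Lambda_m \cN_s^{\otimes n}(\Psi_m^{A^n})) \ge 1 - \epsilon$ on $m$-average, where $\Psi_m^{A^n} = \tr_{\cF_{A,n}} \Psi_m$; classical Fano together with the Holevo bound $I(X;\hat X) \le I(X; B^n, \omega_s)$ then yields the first inequality. For the coherent-information bound I exploit the factorisation $I(A\rangle B^n X, \omega_s) = \sum_m p(m)\, I(A\rangle B^n, \omega_{s,m})$, available because the $X$-register of $\omega_s$ is classical. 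Data processing for coherent information lets me replace $B^n$ by the post-recovery register, and the CEG fidelity condition, via Fuchs--van de Graaf, makes each conditional post-recovery state $\eta(\epsilon)$-close in trace norm to a maximally entangled state of Schmidt rank $M_{2,n}$, whose coherent information equals $\log M_{2,n}$. The Alicki--Fannes--Winter continuity of conditional entropy (equivalently of coherent information) then produces the second inequality uniformly in $s \in S$.

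The main obstacle I anticipate is keeping the quantum-rate error controllable after the $1/n$ normalisation: a crude continuity estimate would give an error of order $\epsilon \log M_{2,n}$, whose per-letter contribution $\epsilon R_2$ does not vanish with $n$ alone. Using the sharp AFW inequality, the error instead scales as $2\epsilon \log \dim \cH_A^{\otimes n} + h(\epsilon) = 2n\epsilon\log\dim\cH_A + h(\epsilon)$, so the per-letter penalty is $O(\epsilon)$ and can be driven to zero by subsequently sending $\epsilon \downarrow 0$. One must also verify that the POVM extracted from $\cR$ and the averaging over $m$ do not spoil the uniformity in $s$; since both the Fano bound and the AFW estimate depend on the channel only through the output state $\omega_s$, the simultaneous validity of all inequalities for each $s \in S$ follows automatically, which is exactly what is needed to land inside the intersection $\bigcap_{s\in S}\hat C(\cN_s^{\otimes n}, p, \Psi)$ and complete the inclusion (\ref{converse}).
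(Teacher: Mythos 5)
Your overall route is the same as the paper's: for the classical rate, Fano plus the Holevo bound applied to the state $\omega(\cN_{s}^{\otimes n},p_{*},\Psi)$ with $p_{*}$ uniform on the messages; for the quantum rate, data processing for coherent information followed by a continuity estimate that converts the fidelity guarantee into $I(A\rangle B^{n}X,\omega_{s})\geq \log M_{2,n}-o(n)$; then the closure absorbs the residuals. The paper even proves the slightly stronger statement with CSI at the decoder (recovery maps $\cR_{m,s}$ depending on $s$), which your argument does not need but could carry through unchanged.

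There is, however, one step in your quantum-rate argument that is wrong as written. You decompose $I(A\rangle B^{n}X,\omega_{s})=\sum_{m}p(m)\,I(A\rangle B^{n},\omega_{s,m})$ (correct, since $X$ is classical) and then assert that the CEG fidelity condition makes \emph{each} conditional post-recovery state close to a maximally entangled state. Under the average-error criterion defining $\overline{C}_{CEG}$ this fails: only $\frac{1}{M_{1,n}}\sum_{m}F_{m}\geq 1-\epsilon$ is guaranteed, and individual messages may have fidelity bounded away from $1$, so the per-message AFW bound cannot be invoked uniformly in $m$. The gap is repairable in one line: apply the continuity bound per message with error parameter $\sqrt{1-F_{m}}$ (via Fuchs--van de Graaf) and average, using concavity of $t\mapsto\sqrt{t}$ and of the binary entropy together with Jensen, which yields $\sum_{m}p(m)I(A\rangle B^{n},\omega'_{s,m})\geq\log M_{2,n}-2n\sqrt{\epsilon}\log\dim\cH_{A}-g(\sqrt{\epsilon})$; note the penalty is $O(\sqrt{\epsilon})$ per letter, not the $O(\epsilon)$ you state, though this changes nothing in the limit. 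The paper sidesteps the issue entirely and more cleanly: it applies the continuity lemma (Lemma \ref{lemma3}, from \cite{idev}, with error $\frac{2}{e}+4\log r\sqrt{1-f}$) once, to the \emph{averaged} post-recovery state $\sigma_{s}'^{AB}=\tr_{X}(\sigma_{s}')$, whose fidelity with $\Phi$ is at least $1-\epsilon$ by concavity, and then uses the chain $I(A\rangle BX,\sigma_{s})\geq I(A\rangle BX,\sigma_{s}')\geq I(A\rangle B,\sigma_{s}'^{AB})$, the last step being that conditioning does not decrease coherent information. You should adopt one of these two fixes; with that repair your proof is sound, and, as you correctly observe, uniformity in $s\in S$ is automatic because every estimate depends on the channel only through $\omega_{s}$.
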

		To prove this result, we shall make use of the following lemma (see \cite{idev}).
	\begin{lemma}\label{lemma3}
		For two states $\rho^{AB}$ and $\sigma^{AB}$ on some Hilbert space $\cK_{A}\otimes\cK_{B}$ of dimension $r$ and fidelity $f:=F(\rho^{AB},\sigma^{AB})$, we have
		\begin{equation*}
		|I(A\rangle B,\rho)-I(A\rangle B,\sigma)|\leq\frac{2}{e}+4\log r \sqrt{1-f}.
		\end{equation*}
	\end{lemma}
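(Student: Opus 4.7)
The plan is to reduce this fidelity-continuity estimate for coherent information to the standard Fannes continuity inequality for von Neumann entropy via the Fuchs--van de Graaf bound. Writing
\begin{equation*}
I(A\rangle B,\rho)=S(\rho^{B})-S(\rho^{AB}),
\end{equation*}
and similarly for $\sigma$, the triangle inequality gives
\begin{equation*}
|I(A\rangle B,\rho)-I(A\rangle B,\sigma)|\leq |S(\rho^{B})-S(\sigma^{B})|+|S(\rho^{AB})-S(\sigma^{AB})|,
\end{equation*}
so the task reduces to controlling each entropy difference in terms of the bipartite fidelity $f$.

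First I would pass from fidelity to trace distance by the Fuchs--van de Graaf inequality, which, for the squared-fidelity convention used in this paper ($F=\|\sqrt{\rho}\sqrt{\sigma}\|_{1}^{2}$), reads $\tfrac{1}{2}\|\rho^{AB}-\sigma^{AB}\|_{1}\leq\sqrt{1-f}$. Contractivity of the trace norm under the partial trace $\tr_{A}$ propagates the same bound to the marginals: $\tfrac{1}{2}\|\rho^{B}-\sigma^{B}\|_{1}\leq\sqrt{1-f}$. Both entropies are supported on Hilbert spaces of dimension at most $r$, so a single application of the Fannes inequality in the form $|S(\tau)-S(\omega)|\leq\|\tau-\omega\|_{1}\log r+\eta(\|\tau-\omega\|_{1})$ with $\eta(x)=-x\log x$ can be used for both differences.

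Plugging $\|\tau-\omega\|_{1}\leq 2\sqrt{1-f}$ into the Fannes bound for each of the two entropy differences, and summing, produces the leading term $4\log r\cdot\sqrt{1-f}$ claimed in the statement, together with a residual contribution of the form $2\eta(2\sqrt{1-f})$. The residual is bounded uniformly by the maximum of $\eta$, which one then estimates by $1/e$ per term, yielding the additive $2/e$ constant. The only subtle point—and the place that requires care—is constant-bookkeeping: one must select the exact normalizations (half vs.\ full trace norm in Fannes, the correct form of Fuchs--van de Graaf for squared fidelity, and the right maximum of $\eta$ in the chosen logarithm base) so that the two Fannes applications together produce exactly the coefficient $4\log r$ and the residual is dominated by $2/e$ rather than by a slightly larger constant. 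Everything else in the argument is routine.
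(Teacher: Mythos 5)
Your reduction is the canonical route to this estimate, and it matches the derivation behind the paper's treatment: the paper does not prove Lemma \ref{lemma3} at all, but imports it from \cite{idev}, where it is obtained essentially as you propose --- triangle inequality on $I(A\rangle B)=S(B)-S(AB)$, Fuchs--van de Graaf, contractivity of the trace norm under $\tr_{A}$, and two applications of Fannes' inequality. Your individual steps are correct as far as they go: $\tfrac{1}{2}\|\rho^{AB}-\sigma^{AB}\|_{1}\leq\sqrt{1-f}$ is the right form of Fuchs--van de Graaf for the squared-fidelity convention, the marginal bound follows by monotonicity, and the leading-term bookkeeping $T\log r+T\log(\dim\cK_{B})\leq 2T\log r=4\log r\sqrt{1-f}$ with $T:=\|\rho^{AB}-\sigma^{AB}\|_{1}\leq 2\sqrt{1-f}$ is fine.

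The gap sits exactly at the point you flagged and then asserted away: the residual constant. The paper explicitly fixes the base-two logarithm, and in base two the function $\eta(x)=-x\log x$ attains its maximum $\eta(1/e)=\tfrac{\log e}{e}=\tfrac{1}{e\ln 2}\approx 0.531$, not $\tfrac{1}{e}\approx 0.368$; so ``estimate by $1/e$ per term'' is false, and your argument proves $|I(A\rangle B,\rho)-I(A\rangle B,\sigma)|\leq \tfrac{2}{e\ln 2}+4\log r\sqrt{1-f}$, which is strictly weaker than the stated bound since $\tfrac{2}{e\ln 2}\approx 1.061>\tfrac{2}{e}\approx 0.736$. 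No choice of normalization rescues this: the constant $2/e$ is what the computation yields when $\eta$ is taken in nats (Fannes' original convention), i.e.\ the nat-version of the lemma reads $2/e+4(\ln r)\sqrt{1-f}$, and converting to bits inflates $2/e$ to $2/(e\ln 2)$. Nor can the excess be absorbed into the linear term using the slack $T\log(\dim\cK_{A})$ from the marginal application: for $\dim\cK_{A}=2$ one would need $\eta(T)\leq\tfrac{1}{e}+\tfrac{T}{2}$, which already fails near $T\approx 0.26$. Two further remarks: (i) the form of Fannes with residual $\eta(T)$ is only valid for $T\leq 1/e$, so in the regime $2\sqrt{1-f}>1/e$ you must invoke a version with the residual capped at its maximum, or Audenaert's sharp inequality \cite{fannes73} --- your ``bounded uniformly by the maximum of $\eta$'' silently assumes this; (ii) the defect is harmless for the paper itself, since in the only place the lemma is used (Lemma \ref{lemmaconv}) the additive constant is divided by $n$ and vanishes asymptotically, so the weaker constant $2/(e\ln 2)$ would serve equally well. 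But as a verification of the lemma as literally stated, your final bookkeeping step does not close.
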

	\begin{proof}[Proof of Lemma \ref{lemmaconv}]
		We prove a more general claim than stated in Lemma \ref{lemmaconv}, allowing the decoder to choose the processing according to the channel state (i.e. the decoder has access to CSI). Let for each $n\in\mathbb{N}$, $\mathcal{C}_{CEG,s}:=(\Psi_{m},\mathcal{R}_{m,s})_{m\in M_{1,n}}$ be an $(n,M_{1,n},M_{2,n})$ CEG code with informed decoder\footnote{As clear from the notation, these codes are CEG codes for compound channel $\cJ$, when the decoder has access to CSI.}, such that
		\begin{equation}\label{simerr}
		\inf_{s\in S}\overline{P}(\mathcal{C}_{CEG,s},\mathcal{N}_{s}^{\otimes n})\geq 1-\epsilon,
		\end{equation}
		with $\epsilon<1$ holds. Fix $n\in\mathbb{N}$ and let $p_{*}\in\cP(\cX^{n})$ be the equidistribution on the message set. Consider the pair $(M_{s},M_{s}')$ of random variables with joint distribution:
		\begin{equation*}
		\text{Pr}(M_{s}=m,M_{s}'=m')=p_{*}(m)\text{tr}(\mathcal{R}_{m',s}\circ\mathcal{N}_{s}^{\otimes n}(V(m)))
		\end{equation*}
		for $(m_{s},m'_{s}\in[M_{1,n}])$ and $s\in S$ with $V(m):=\text{tr}_{\cF_{A,n}}\Psi_{m}$ for some c-q channel $V:\cX^{n}\to\cS(\cH_{A}^{\otimes n})$. Note that with these definitions, we have
		\begin{equation}\label{averageclasserror}
		\mathbb{P}(M_{s}\neq M'_{s})\leq 1-\overline{P}(\mathcal{C}_{CEG,s},\mathcal{N}^{\otimes n}_{s})\leq\epsilon,
		\end{equation}
		for $s\in S$. Fix $s$ for the moment. Define the state
		\begin{equation*}
		\sigma'_{s}:=\sum_{m\in[M_{1,n}]}p_{*}(m)\ket{m}\bra{m}^{X}\otimes(\text{id}_{\mathcal{H}_{A}^{\otimes n}}\otimes\mathcal{R}_{s,m}\circ\mathcal{N}_{s}^{\otimes n})(\Psi_{m})
		\end{equation*}
		and the shorthand notation
		\begin{equation*}
		\sigma_{s}:=\omega(\cN_{s}^{\otimes n},p_{*},\Psi)=\sum_{m\in[M_{1,n}]}p_{*}(m)\ket{m}\bra{m}^{X}\otimes(\text{id}_{\mathcal{H}_{A}^{\otimes n}}\otimes\mathcal{N}_{s}^{\otimes n})(\Psi_{m}).
		\end{equation*}
		 We have
		\begin{align}\label{22}
		\log M_{1,n}=H(p_{*})=I(M_{s};M'_{s})+H(M'_{s}|M_{s})
		&\leq I(M_{s};M'_{s})+\epsilon\log M_{1,n}+1\nonumber\\&\leq I(X;B,\sigma_{s})+
		\epsilon\log M_{1,n}+1\nonumber\\&
		\leq I(X;B,\sigma_{s})+n\epsilon\log|\cX|+1,
		\end{align}
		where $I(Y;Y')$ is the mutual information of random variables $Y,Y'$. The first inequality comes from (\ref{averageclasserror}) and the second is by Holevo bound (see \cite{wilde13}). For $s\in S$, 
		We have
		\begin{align}\label{25}
		\epsilon&\geq 1-\overline{P}(\mathcal{C}_{CEG,s},\mathcal{N}_{s}^{\otimes n})=
		1-F(\Phi,\sigma_{s}'^{AB}),
		\end{align}
		where $\sigma_{s}'^{AB}:=\tr_{X}(\sigma_{s}')$. We have
		\begin{align}\label{dataprocessing}
		I(A\rangle BX:\sigma_{s})&\geq I(A\rangle BX,\sigma_{s}')\nonumber
		\\
		&\geq I(A\rangle B,\sigma_{s}'^{AB})\nonumber\\
		&\geq I(A\rangle B,\Phi)-\frac{2}{e}-8n\log\dim\mathcal{H}\sqrt{\epsilon}=\log M_{2,n}-\frac{2}{e}-8n\log\dim\mathcal{H}\sqrt{\epsilon}. 
		\end{align}
		In (\ref{dataprocessing}), the first inequality comes from the quantum data processing inequality, the second comes from the fact that conditioning does not decrease coherent information. The third inequality comes from Lemma \ref{lemma3} together with (\ref{25}) and finally, in the last line we have used $I(A\rangle B,\Phi)=\log M_{2,n}$.\\
		Choosing $n$ such that  $\delta\geq\frac{2}{ne}+8\log\dim(\cH)\sqrt{\epsilon}$ , from (\ref{dataprocessing}) and (\ref{22}) we obtain
		\begin{equation*}
		(\frac{1}{n}\log M_{1,n}-\delta,\frac{1}{n}\log M_{2,n}-\delta)\in\frac{1}{n}\hat{C}(\mathcal{N}_{s}^{\otimes n}, p_{*},\Psi).
		\end{equation*}
	
Since $s\in S$ was arbitrary, we have shown
		\begin{align}\label{converseM}
		(\frac{1}{n}\log M_{1,n}-\delta,\frac{1}{n}\log M_{2,n}-\delta)\in\cl[\bigcup_{n=1}^{\infty}\bigcup_{p,\Psi}\bigcap_{s\in S}\frac{1}{n}\hat{C}(\mathcal{N}_{s}^{\otimes n}, p,\Psi)].
		\end{align} 
	\end{proof}
	\subsection{Proof of the direct part}\label{directsection}
	In this section we prove the following lemma.
		\begin{lemma}\label{dp}
			Let $\mathcal{J}:=\{\cN_{s}\}_{s\in S}\subset \cC(\mathcal{H}_{A},\mathcal{H}_{B})$ be any compound quantum channel. It holds
			\begin{equation}\label{direct}
			\cl\bigg(\bigcup_{l=1}^{\infty}\frac{1}{l}\bigcup_{p,\Psi}\bigcap_{s\in S}\hat{C}(\mathcal{N}_{s}^{\otimes l},p,\Psi)\bigg)\subset C_{CET}(\mathcal{J}).
			\end{equation}
			\end{lemma}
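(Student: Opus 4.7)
The plan is to combine the universal classical message transmission codes of Lemma \ref{corollary22} with the random entanglement transmission codes of Lemma \ref{prop} via a piggy-backing strategy, first establishing achievability for the special case in which the Alice-side reduced state of each $\Psi_x$ is a maximally mixed state on a subspace $\mathcal{G}_x\subset\mathcal{H}_A$, and then lifting to arbitrary pure-state ensembles by a blocking-plus-typical-subspace approximation. Before entering the construction, standard approximation handles the closure and the outer union over $l$: given $(R_1,R_2)$ in the right-hand set of (\ref{direct}), we may after a small rate loss assume $(lR_1,lR_2)\in\bigcap_{s\in S}\hat{C}(\mathcal{N}_s^{\otimes l},p,\Psi)$ for some fixed $l$, $p\in\cP(\cX)$ and $\Psi$, and then treat the block channel $\{\mathcal{N}_s^{\otimes l}\}_{s\in S}$ in place of $\cJ$, so that it suffices to establish single-letter achievability.

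For the maximally-mixed case, fix $p$ and purifications $\Phi_x$ of $\pi_{\mathcal{G}_x}$ on subspaces $\mathcal{G}_x\subset\mathcal{H}_A$. Apply Lemma \ref{corollary22} to the c-q channel $V(x):=\pi_{\mathcal{G}_x}$ to obtain an encoding $m\mapsto u_m\in T^n_{p,\eta}$ and a decoding POVM $\{\Lambda_m\}_{m\in[M_{1,n}]}$ with $\frac{1}{n}\log M_{1,n}\geq\inf_{s}I(X;B,\omega(\mathcal{N}_s,p,\Phi))-c\eta$ and exponentially small classical error uniformly in $s$. For every $m$, since $u_m$ is $\eta$-typical, Lemma \ref{prop} supplies a random family $(\mathcal{P}_{m,i},\mathcal{R}_{m,i})_{i\in X_{u_m}}$ of entanglement transmission codes on a common input subspace $\mathcal{F}_{A,n}\subset\mathcal{G}_{u_m}$, of dimension meeting the coherent information bound $\inf_{s}I(A\rangle BX,\omega(\mathcal{N}_s,p,\Phi))-\delta$ and with expected entanglement fidelity exceeding $1-\tilde{\epsilon}_n$ uniformly in $s$. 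The crucial property of these codes is the averaging identity $\frac{1}{|X_{u_m}|}\sum_{i}\mathcal{P}_{m,i}(\pi_{\mathcal{F}_{A,n}})=\pi_{u_m}=V^{\otimes n}(u_m)$, which means that on averaging the random encoding over $i$ one reproduces exactly the product state the classical code was designed for.

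A simultaneous derandomization then yields the deterministic CET code: Markov's inequality applied to the two non-negative error quantities, combined with a union bound over $m\in[M_{1,n}]$ and over $s$ in a finite net $S_\tau$ from Lemma \ref{net}, produces an index $i(m)$ for each $m$ for which both the classical misidentification error and the quantum entanglement infidelity remain exponentially small uniformly in $s\in S_\tau$; continuity in the diamond norm then extends the guarantee to all $s\in S$. The joint decoder first applies a coherent square-root measurement of $\{\Lambda_m\}$ to read off $\hat{m}$, then conditionally applies $\mathcal{R}_{\hat{m},i(\hat{m})}$. By the gentle measurement lemma, since $\mathrm{tr}(\Lambda_m\,\mathrm{tr}_A\omega_m)$ is close to one for $\omega_m:=\id\otimes\mathcal{N}_s^{\otimes n}\circ\mathcal{P}_{m,i(m)}(\Phi^{AA})$, the classical readout perturbs $\omega_m$ only slightly in trace norm, so the resulting CET fidelity with $|m\rangle\langle m|\otimes\Phi^{AB}$ stays close to the entanglement fidelity $F_e(\pi_{\mathcal{F}_{A,n}},\mathcal{R}_{m,i(m)}\circ\mathcal{N}_s^{\otimes n}\circ\mathcal{P}_{m,i(m)})$, which is close to one for every $m$ and every $s$. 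This establishes achievability under the maximal error criterion in the special case.

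To remove the maximally-mixed restriction, I would use an elementary blocking argument: replacing $\mathcal{N}_s$ by $\mathcal{N}_s^{\otimes k}$ and $\Psi_x$ by $\Psi_x^{\otimes k}$ and using the spectral decomposition of $\rho_x:=\mathrm{tr}_{\mathcal{H}_A}(\Psi_x)$, one identifies a $\delta$-typical subspace $\mathcal{G}_x^{(k)}\subset\mathcal{H}_A^{\otimes k}$ on which $\rho_x^{\otimes k}$ is exponentially close to $\pi_{\mathcal{G}_x^{(k)}}$, and purifies this maximally mixed state within a sufficiently large ancilla; Fannes-type continuity of $I(X;B,\cdot)$ and $I(A\rangle BX,\cdot)$ shows the loss in rates is $O(\delta)$ per letter and vanishes as $\delta\to 0$ and $k\to\infty$. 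The main obstacle I anticipate is the simultaneous derandomization: the unitary design plays a dual role, supplying both the decoupling that underlies entanglement transmission and the averaging identity that reduces the transmitted state to the product input demanded by the classical decoder, so the two error criteria have to be controlled jointly at each $m$, uniformly over $s\in S$, while the subsequent gentle-measurement step couples them and must be arranged so as not to spoil the quantum fidelity that the chosen $i(m)$ code preserves.
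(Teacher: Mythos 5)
Your first stage is essentially the paper's own proof of Lemma \ref{piinput}: the combination of the codes from Lemma \ref{corollary22} with those of Lemma \ref{prop} through the averaging identity $\frac{1}{|X_{u_{m}}|}\sum_{i}\cP_{m,i}(\pi_{\cF_{A,n}})=\pi_{u_{m}}$, the piggy-back decoder $\rho\mapsto\tilde{\cR}_{i}^{(m)}(\sqrt{\Lambda_{m}}\,\rho\,\sqrt{\Lambda_{m}})$, the gentle measurement lemma, the per-$m$ selection of an index $i(m)$, and the passage from the net $S_{\tau}$ of Lemma \ref{net} to all of $S$ all match the paper, as does your up-front reduction of the closure and the union over $l$ by double blocking.

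The genuine gap is in your second stage, the removal of the maximally-mixed restriction. You claim that on the $\delta$-typical subspace $\cG_{x}^{(k)}$ the normalized restriction of $\rho_{x}^{\otimes k}$ is \emph{exponentially close} to $\pi_{\cG_{x}^{(k)}}$, and you then invoke Fannes-type continuity. That closeness claim is false in trace norm: the eigenvalues of $\rho_{x}^{\otimes k}$ surviving the typical projection still spread over a multiplicative range of order $2^{ck\delta}$, and for any fixed $\delta>0$ the trace distance between the normalized restriction and the maximally mixed state stays bounded away from zero as $k\to\infty$ (already for a qubit with spectrum $(0.9,0.1)$). Consequently a Fannes-type bound yields an entropy error of order $k$ on the block, i.e.\ a constant per-letter rate loss that does not vanish; making this route work requires the genuinely nontrivial BSST-type statement that $\frac{1}{k}I_{c}(\pi_{\mathrm{typ}},\cN^{\otimes k})$ converges to $I_{c}(\rho,\cN)$ — exactly the machinery of \cite{boche17} which this paper deliberately avoids. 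The paper's Lemma \ref{generalinput0} instead proceeds by an \emph{exact} and elementary decomposition: since the eigenvalue $q_{x^{l}}(y^{l})$ of $V^{\otimes l}(x^{l})$ is constant on each conditional type class $\cA_{\lambda}(x^{l})$, one has the identity $V^{\otimes l}(x^{l})=\sum_{\lambda}q_{x^{l}}(\lambda)\pi_{x^{l}}^{\lambda}$ with every $\pi_{x^{l}}^{\lambda}$ exactly maximally mixed; Lemma \ref{piinput} is then applied over the enlarged alphabet $\cX^{l}\times\cT(\cX\times\cY,l)$, and the only rate loss is the entropy of the type register, $S(T)\leq\dim(\cH_{A}\otimes\cH_{B})\log(l+1)$, which vanishes per letter. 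Your sketch can be repaired either by carrying out the full BSST program or, more simply, by replacing ``typical subspace'' with ``type classes'', which turns your approximation into the paper's identity and removes the need for any continuity estimate.
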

	In the first step towards proving the above statement, we restrict the encoder to maximally mixed state inputs. The final result will then be a generalization by way of which we lift this restriction. We state the first instance of the classically enhanced codes, satisfying classical and quantum error criteria in the following lemma. 
	\begin{lemma}\label{piinput}
		Let $\cJ:=\{\cN_{s}\}_{s\in S}\subset\cC(\cH_{A},\cH_{B})$ be any quantum compound channel. For finite alphabet $\cX$, subspaces $(\cG_{x})_{x\in\cX},\cG_{x}\subset\cH_{A}$ $\forall x\in\cX$,  $p\in\cP(\cX), V_{\pi}:\cX\to\cS(\cH_{A})$ with $ V_{\pi}(x)=\pi_{x}, x\in\cX$, each $\delta>0$ and large enough values of $n$, there exists an $(n,M_{1,n}, M_{2,n})$ CET code with $M_{2,n}=\dim(\cF_{A,n})$ such that
		\begin{enumerate}
			\item $\frac{1}{n}\log M_{2,n}\geq\inf_{s\in S}I(A\rangle BX,\omega(\cN_{s},p,\Phi))-\delta$,
			\item $\frac{1}{n}\log M_{1,n}\geq\inf_{s\in S}I(X;B,\omega(\cN_{s},p,\Phi))-c\delta$ with some constant $c>0$ and $\omega(\cN_{s},p,\Phi)$ defined by (\ref{evaluationstate}) for $\Phi:=(\Phi_{x}:x\in\cX)$ defined as in Section \ref{etcsection},
			\item $\inf_{s\in S}\min_{m\in[M_{1,n}]}\ P(\cC_{CET},\cN_{s}^{\otimes n},m)\geq 1-\epsilon_{n}$,
			with $\epsilon_{n}\to 0$ exponentially as $n\to\infty$. 
		\end{enumerate}
	\end{lemma}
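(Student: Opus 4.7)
The idea is to couple the classical message code from Lemma~\ref{corollary22} (applied to the c-q channel $V_\pi$) with the randomized entanglement transmission codes from Lemma~\ref{prop}, exploiting that the unitary-design average of the quantum encoding states in Lemma~\ref{prop} is exactly $\pi_{u_m}=V_\pi^{\otimes n}(u_m)$. This matching between the averaged quantum encoding and the c-q input on which the classical decoder is analysed is the bridge that lets a single code accomplish both tasks.

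\emph{Code ingredients.} Pick $\eta>0$ small. Applying Lemma~\ref{corollary22} to $\cJ$ with c-q channel $V_\pi$ produces, for $n$ large, codewords $u_m\in T_{p,\eta}^n$ and a decoding POVM $(\Lambda_m)$ with $\frac{1}{n}\log M_{1,n}\geq \inf_{s\in S} I(X;B,\omega(\cN_s,p,\Phi))-c\eta$ and $\inf_s\min_m\tr(\Lambda_m\cN_s^{\otimes n}(\pi_{u_m}))\geq 1-\epsilon_{1,n}$ with $\epsilon_{1,n}\to 0$ exponentially. For each $m$, since $u_m\in T_{p,\eta}^n$, Lemma~\ref{prop} supplies a family $\{(\cP_{m,i},\cR_{m,i})\}_{i\in X_{u_m}}$ of $(n,M_{2,n})$ entanglement transmission codes with rate $\frac{1}{n}\log M_{2,n}\geq \inf_{s\in S} I(A\rangle BX,\omega(\cN_s,p,\Phi))-\delta$, with $\inf_s\frac{1}{|X_{u_m}|}\sum_i F_e(\pi_{\cF_{A,n}},\cR_{m,i}\circ\cN_s^{\otimes n}\circ\cP_{m,i})\geq 1-\epsilon_{2,n}$ (exponentially small), and the crucial identity $\frac{1}{|X_{u_m}|}\sum_i\cP_{m,i}(\pi_{\cF_{A,n}})=\pi_{u_m}$.

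\emph{Combined code and error analysis.} For each $m$ I pick an index $i_m\in X_{u_m}$ (to be fixed in derandomization), set $\cP_m:=\cP_{m,i_m}$, and define the decoder branch $\cR_m(\cdot):=\cR_{m,i_m}\bigl(\sqrt{\Lambda_m}(\cdot)\sqrt{\Lambda_m}\bigr)$, completed by a trivial failure branch so that $\sum_m\cR_m\in\cC(\cH_B^{\otimes n},\cF_{B,n})$ as required by Definition~\ref{scodes}. By property~3 of Lemma~\ref{prop} and linearity, the mean over $i$ of the classical success probability $\tr(\Lambda_m\cN_s^{\otimes n}\cP_{m,i}(\pi_{\cF_{A,n}}))$ equals $\tr(\Lambda_m\cN_s^{\otimes n}(\pi_{u_m}))\geq 1-\epsilon_{1,n}$, so at most a $\sqrt{\epsilon_{1,n}}$-fraction of $i$'s fall below $1-\sqrt{\epsilon_{1,n}}$; on the complementary set the gentle measurement lemma yields trace distance at most $\epsilon_{1,n}^{1/4}$ between $\rho_{m,i,s}:=(\id\otimes\cN_s^{\otimes n}\cP_{m,i})(\Phi^{AA})$ and $(\id\otimes\sqrt{\Lambda_m}(\cdot)\sqrt{\Lambda_m})(\rho_{m,i,s})$. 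Since trace-non-increasing maps and Fuchs--van de Graaf preserve this bound, one obtains $P(\cC_{CET},\cN_s^{\otimes n},m)\geq F_e(\pi_{\cF_{A,n}},\cR_{m,i_m}\circ\cN_s^{\otimes n}\circ\cP_{m,i_m})-O(\epsilon_{1,n}^{1/4})$, and a second Markov step applied to Lemma~\ref{prop} excludes an additional $\sqrt{\epsilon_{2,n}}$-fraction of $i$'s.

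\emph{Derandomization and main obstacle.} A net argument analogous to the one in Lemma~\ref{premlemma} lets the union bound over the exceptional classical and quantum sets be taken uniformly over $s\in\cJ$ at the expense of an additional $2n\tau$-term that is absorbed by letting $\tau$ shrink subexponentially. For each $m$ the density of admissible $i_m\in X_{u_m}$ is then positive, and selecting $i_m^\star$ independently across $m$ yields a deterministic CET code with $\inf_s\min_m P(\cC_{CET},\cN_s^{\otimes n},m)\geq 1-\epsilon_n$, $\epsilon_n\to 0$ exponentially. The main obstacle is preserving exponential decay through the composition: the two Markov steps and the gentle measurement degrade the exponent (roughly $\epsilon\mapsto\epsilon^{1/4}$), but since $\epsilon_{1,n}$ and $\epsilon_{2,n}$ themselves decay exponentially in $n$ while the net size $|S_\tau|$ grows only polynomially in $n$, the overall error still vanishes exponentially, and the stated rate bounds on $M_{1,n}$ and $M_{2,n}$ follow directly from those in Lemmas~\ref{corollary22} and~\ref{prop}.
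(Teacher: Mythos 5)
Your proposal follows essentially the same route as the paper's proof: the same two ingredients (Lemma~\ref{corollary22} for the classical layer and Lemma~\ref{prop} for the quantum layer), the same bridge via the identity $\frac{1}{|X_{u_m}|}\sum_i \cP_{m,i}(\pi_{\cF_{A,n}})=\pi_{u_m}$, the same piggyback decoder $\cR_m(\cdot)=\tilde{\cR}_{m,i_m}(\sqrt{\Lambda_m}\,(\cdot)\,\sqrt{\Lambda_m})$, gentle measurement, and a net over $S$. The only structural difference is cosmetic: you derandomize the design index by two Markov steps plus a union bound over $s\in S_\tau$, extracting an $i_m$ that is good for \emph{every} $s$ in the net, whereas the paper keeps averages throughout (Jensen together with Lemma~\ref{2epsilonlemma}) and selects $i(m)$ good on average over $S_\tau$, paying a multiplicative $|S_\tau|$ in the error; both are valid and preserve the exponential decay.

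One quantitative slip to fix: your stated parameter regime --- ``$\tau$ shrinking subexponentially'' and ``$|S_\tau|$ grows only polynomially in $n$'' --- does not deliver item~3 of the lemma, since the net-approximation penalty $2n\tau$ would then decay only subexponentially (and a polynomial $|S_\tau|$ corresponds to polynomially shrinking $\tau$, which is worse). The correct choice, as in the paper, is $\tau=\epsilon_{3,n}^{c'}$ for a small constant $c'>0$ (e.g.\ $c'=1/(4(d\cdot d')^2)$): then $\tau$ is exponentially small, so $2n\tau\to 0$ exponentially, while $|S_\tau|\leq(6/\tau)^{2(d\cdot d')^2}$ grows exponentially with a strictly smaller exponent than the decay of $\sqrt{\epsilon_{1,n}}+\sqrt{\epsilon_{2,n}}$, so your union bound still leaves a nonempty good set of indices and the total error decays exponentially. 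With that one-line correction your argument is complete.
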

	\begin{proof}
		Let $\cJ_{\tau}\subset\cJ$ be as defined in Appendix \ref{mosonyicodes}, Lemma \ref{net} with index set $S_{\tau}$. According to Lemma \ref{corollary22}, for $\delta>0$ and large enough values of $n\in\mathbb{N}$, we find pairs $(u_{m},\Lambda_{m})_{m\in [M_{1,n}]}$ with $\frac{1}{n}\log M_{1,n}\geq\min_{s\in S_{\tau}}I(X;B,\omega(\cN_{s},p,\Phi))-c\delta$, such that for channel $V_{\pi}$ we have 
		\begin{equation}\label{clascode}
		\min_{s\in S_{\tau}}\min_{m\in[M_{1,n}]}\text{tr}(\Lambda_{m}(\cN_{s}\circ V_{\pi})^{\otimes n}(u_{m}))
		=	\min_{s\in S_{\tau}}\min_{m\in[M_{1,n}]}\text{tr}(\Lambda_{m}\circ\cN_{s}^{\otimes n}(\pi_{u_{m}}))\geq 1-\epsilon_{1,n},
		\end{equation}
		for $u_{m}\in T_{p,\delta}^{n}$ and $\epsilon_{1,n}$ going to zero exponentially. Given $u_{m}\in T_{p,\delta}^{n}$ for each $m$, according to Lemma \ref{prop}, there exists a family of entanglement transmission codes $(\cP_{i}^{(m)},\tilde{\cR_{i}}^{(m)})_{i=1}^{|X_{u_{m}}|}$ with rate $\frac{1}{n}\log M_{2,n}\geq\\\min_{s\in S_{\tau}}I(A\rangle BX,\omega(\cN_{s},p,\Phi))-\delta$, such that  $\pi_{u_{m}}$ is exactly the output of the average of encoding operations (third statement of the lemma) and 
		\begin{equation}\label{quantumerror}
		\min_{s\in S_{\tau}}\min_{m\in[M_{1,n}]}\frac{1}{|X_{u_{m}}|}\sum_{i=1}^ {|X_{u_{m}}|}F_{e}(\pi_{F_{A,n}},\tilde{\mathcal{R}}_{i}^{(m)}\circ\mathcal{N}_{s}^{\otimes n}\circ\mathcal{P}^{(m)}_{i})\geq 1-\epsilon_{2,n}
		\end{equation}
		with $\epsilon_{2,n}\to 0$ exponentially. Thus (\ref{clascode}) yields
		\begin{equation}\label{14}
			\min_{s\in S_{\tau}}\min_{m\in[M_{1,n}]}\frac{1}{|X_{u_{m}}|}\sum_{i=1}^{|X_{u_{m}}|}\text{tr}(\Lambda_{m}\cN_{s}^{\otimes n}(\cP_{i}^{(m)}(\pi_{\cF_{A,n}}))\geq 1-\epsilon_{1,n}.
		\end{equation}

		Following \cite{devetak}, the encoding and decoding maps are given by $(\cP_{i}^{(m)},\cR_{i}^{(m)})_{i=1}^{|X_{u_{m}}|}$ with
		\begin{equation*}
 \cR_{i}^{(m)}(\rho)=\tilde{\cR_{i}}^{(m)}(\sqrt{\Lambda_{m}}\rho\sqrt{\Lambda_{m}}).
		\end{equation*}
		It can be observed that for each $i$ we have $\sum_{m\in[M_{1,n}]}\cR_{i}^{(m)}\in\cC(\cH_{B}^{\otimes n},\cF_{B,n})$. 
		
		From (\ref{14}) we obtain
			\begin{equation}\label{expectcerror}
				\min_{s\in S_{\tau}}\min_{m\in[M_{1,n}]} \frac{1}{|X_{u_{m}}|}\sum_{i=1}^{|X_{u_{m}}|}\text{tr}(\cR_{i}^{(m)}\circ\cN^{\otimes n}_{s}(\cP_{i}^{(m)}(\pi_{\cF_{A,n}})))\geq 1-\epsilon_{1,n}.
			\end{equation}
		We define the following state
		\begin{equation*}
		\chi_{i,s}^{(m)}:= [\text{id}\otimes(\mathcal{N}_{s}^{\otimes n}\circ\mathcal{P}_{i}^{(m)})](\Phi_{\cF_{A,n}}),
		\end{equation*}
		where $\Phi_{\cF_{A,n}}$ is a maximally entangled state given by  purification of $\pi_{\cF_{A,n}}$. From (\ref{14}) we obtain
		\begin{equation}\label{gammaerror}
			\min_{s\in S_{\tau}}\min_{m\in[M_{1,n}]}\frac{1}{|X_{u_{m}}|}\sum_{i=1}^{|X_{u_{m}}|}\text{tr}\chi_{i,s}^{(m)}(\text{id}\otimes\Lambda_{m})\geq 1-\epsilon_{1,n}.
		\end{equation}
		Set $\gamma_{i,s,m}:=\tr\chi_{i,s}^{m}(\id\otimes\Lambda_{m})$. It is clear that if $\gamma_{i,s,m}=0$, we have (\ref{termwise}). To prove this equation for the case where $\gamma_{i,s,m}>0$, we observe that by the gentle measurement lemma (Lemma \ref{gmeasurement}), we have for all $i,m,s$
		\begin{equation*}
		\parallel\frac{(\id\otimes\sqrt{\Lambda_{m}})(\chi_{i,s}^{m})(\id\otimes\sqrt{\Lambda_{m}})}{\gamma_{i,s,m}}-\chi_{i,s}^{(m)}\parallel_{1}\leq 2\sqrt{1-\gamma_{i,s,m}}
		\end{equation*} 
	 and hence by monotonicity of trace distance under CPTP maps we obtain
		\begin{equation}\label{closeness}
		\parallel\frac{1}{\gamma_{i,s,m}}(\text{id}\otimes\mathcal{R}^{(m)}_{i})(\chi_{i}^{(m)})-(\text{id}\otimes\tilde{\mathcal{R}}^{(m)}_{i})(\chi_{i}^{(m)})\parallel_{1}\leq 
		2\sqrt{1-\gamma_{i,s,m}}.
		\end{equation}
		Applying Lemma \ref{appendixlemma} and averaging with respect to index $i$, the above inequality yields
		\begin{align}\label{termwise}
\frac{1}{|X_{u_{m}}|}\sum_{i=1}^{|X_{u_{m}}|}&	F_{e}(\pi_{\cF_{A,n}},\mathcal{R}_{i}^{(m)}\circ\mathcal{N}_{s}^{\otimes n}\circ\mathcal{P}_{i}^{(m)})\geq\nonumber\\&\frac{1}{|X_{u_{m}}|}\sum_{i=1}^ {|X_{u_{m}}|} (F_{e}(\pi_{\cF_{A,n}},\tilde{\mathcal{R}}_{i}^{(m)}\circ\mathcal{N}_{s}^{\otimes n}\circ\mathcal{P}^{(m)}_{i})-2\sqrt{1-\gamma_{i,s,m}})\gamma_{i,s,m}.
		\end{align}
To give a suitable lower bound for (\ref{termwise}), we use Lemma \ref{2epsilonlemma}. We observe that,
		\begin{align}\label{allmis}
		\frac{1}{|X_{u_{m}}|}&\sum_{i=1}^{|X_{u_{m}}|} F_{e}(\pi_{\cF_{A,n}},\tilde{\mathcal{R}}_{i}^{(m)}\circ\mathcal{N}_{s}^{\otimes n}\circ\mathcal{P}^{(m)}_{i})-2\sqrt{1-\gamma_{i,s,m}}\geq\nonumber\\&\frac{1}{|X_{u_{m}}|}\sum_{i=1}^ {|X_{u_{m}}|} F_{e}(\pi_{\cF_{A,n}},\tilde{\mathcal{R}}_{i}^{(m)}\circ\mathcal{N}_{s}^{\otimes n}\circ\mathcal{P}^{(m)}_{i})-2\sqrt{1-\frac{1}{|X_{u_{m}}|}\sum_{i=1}^{|X_{u_{m}}|} \gamma_{i,s,m}}\nonumber\\&\geq 1-\epsilon_{2,n}-2\sqrt{\epsilon_{1,n}},
		\end{align}
		where in the first inequality we have used concavity of the square function along with Jensen's inequality, and in the second one we have used the bounds from (\ref{quantumerror}) and (\ref{gammaerror}). Setting $\epsilon_{3,n}:=\max\{\epsilon_{2,n}-2\sqrt{\epsilon_{1,n}},\epsilon_{1,n}\}$, by Lemma \ref{2epsilonlemma}, (\ref{allmis}), (\ref{gammaerror}) and (\ref{termwise}) imply
		\begin{equation}\label{expectation of q.error}
		\frac{1}{|X_{u_{m}}|}\sum_{i=1} ^{|X_{u_{m}}|}F_{e}(\pi_{\cF_{A,n}},\mathcal{R}_{i}^{(m)}\circ\mathcal{N}_{s}^{\otimes n}\circ\mathcal{P}^{(m)}_{i})\geq 1-2\epsilon_{3,n}.
		\end{equation}
		This means that for each $m$ there exists a value $i(m)$ such that: 
		\begin{equation*}
		\frac{1}{|S_{\tau}|}\sum_{s\in S_{\tau}}F_{e}(\pi_{\cF_{A,n}},\mathcal{R}_{i(m)}^{(m)}\circ\mathcal{N}_{s}^{\otimes n}\circ\mathcal{P}^{(m)}_{i(m)})\geq 1-2\epsilon_{3,n}.
		\end{equation*}
	 Therefore setting $\cR:=\sum_{m\in[M_{1,n}]}\ket{m}\bra{m}\otimes\cR_{i(m)}^{(m)}$ and $\cP_{m}:=\cP_{i(m)}^{(m)}$ for all $m\in[M_{1,n}]$ for all $s\in S_{\tau}$ and $m\in[M_{1,n}]$, we have for $\cC_{CET}:=(\cP_{m},\cR_{m})_{m\in[M_{1,n}]}$ with
	
			\begin{align}\label{quantum error}
			P(\cC_{CET},\cN_{s}^{\otimes n},m)&=F(\ket{m}\bra{m}\otimes\Phi^{AB},\id_{\cF_{A,n}}\otimes\cR\circ\cN_{s}^{\otimes n}\circ\cP_{m})\nonumber\\&=F_{e}(\pi_{\cF_{A,n}},\mathcal{R}_{i(m)}^{(m)}\circ\mathcal{N}_{s}^{\otimes n}\circ\mathcal{P}^{(m)}_{i(m)})\geq 1-2|S_{\tau}|\epsilon_{3,n}.
			\end{align}
By the third property of $\cJ_{\tau}$ stated under Lemma \ref{net}, we have for all $s\in S$ and $m\in[M_{1,n}]$
		\begin{equation*}
		P(\cC_{CET},\cN_{s}^{\otimes n},m)\geq 1-2|S_{\tau}|\epsilon_{3,n}-2n\tau,
		\end{equation*}
			Given that we find $|S_{\tau}|\leq(\frac{6}{\tau})^{2(d\cdot d')^{2}}$, choosing $\tau=\epsilon_{3,n}^{\frac{1}{4(d\cdot d')^{2}}}$, we have the desired exponential decay of error. Also we obtain the desirable rates as $\cJ_{\tau}\subset\cJ$. 
	\end{proof}
	We now run an instance of concatenation upon codes from Lemma \ref{piinput}, to achieve suitable codes without the restriction imposed by $V_{\pi}$. The method used here for lifting this restriction is rather elementary\footnote{Compare  with BSST type lemmas used for instance in \cite{boche17}} given that the input state can be decomposed as a convex combination of maximally mixed states.  
	\begin{lemma}\label{generalinput0}
		For compound channel $\cJ:=\{\cN_{s}\}_{s\in S}\subset\cC(\cH_{A},\cH_{B})$,  $p\in\cP(\cX)$, $V:\cX\to\cS(\cH_{A})$ and large enough values of $n$, there exists a CET codes, $\cC_{CET}:=(\cP^{(m)},\cR^{(m)})_{m\in[M_{1,n}]}$ such that
		\begin{enumerate}
			\item 
			$
			\liminf_{n\to\infty}\frac{1}{n}\log M_{2,n}\geq\inf_{s\in S}I(A\rangle BX,\omega_{s}(\cN_{s},p,\Psi)),
			$
			\item
			$
			\liminf_{n\to\infty}\frac{1}{n}\log M_{1,n}\geq\inf_{s\in S}I(A;B,\omega_{s}(\cN_{s},p,\Psi))
			$ hold with $\omega_{s}(\cN_{s},p,\Psi)$ defined by (\ref{evaluationstate}),
			\item
			$\inf_{ s\in S}\min_{m\in[M_{1,n}]}\ P(\cC_{CET},\cN^{\otimes n},m)\geq 1-\epsilon_{n}$,
		\end{enumerate}
		with $\epsilon_{n}\to 0$ exponentially as $n\to\infty$.
	\end{lemma}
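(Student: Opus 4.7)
The plan is to reduce the general input case to the maximally-mixed-input case of Lemma \ref{piinput} via a block-coding argument. The key observation is that for any $\rho \in \cS(\cH_A)$ and large $k$, the tensor power $\rho^{\otimes k}$ concentrates on a typical subspace $T^{(k)} \subset \cH_A^{\otimes k}$ where it is close in trace norm to the maximally mixed state $\pi_{T^{(k)}}$, with $\dim T^{(k)} \approx 2^{k S(\rho)}$. Thus each general input $V(x) = \rho_x$ is, after blocking, replaceable by a maximally-mixed-state input on its typical subspace, which is precisely the input structure Lemma \ref{piinput} already handles.

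Concretely, fix $k \in \mathbb{N}$ large and for each $x \in \cX$ let $T_x^{(k)} \subset \cH_A^{\otimes k}$ be a $\delta$-typical subspace of $\rho_x^{\otimes k}$. Viewing $n = kl$ uses of $\cN_s$ as $l$ uses of the blocked compound channel $\{\cN_s^{\otimes k}\}_{s \in S}$, I would invoke Lemma \ref{piinput} on this blocked channel with input Hilbert space $\cH_A^{\otimes k}$, alphabet $\cX$, distribution $p$, and subspaces $\cG_x := T_x^{(k)}$. This gives an $(l, \tilde M_{1,l}, \tilde M_{2,l})$ CET code for $\{\cN_s^{\otimes k}\}_{s \in S}$ whose rates per block are bounded below by $\inf_s I(A \rangle BX, \omega(\cN_s^{\otimes k}, p, \Phi^{(k)})) - \delta$ (quantum) and $\inf_s I(X;B, \omega(\cN_s^{\otimes k}, p, \Phi^{(k)})) - c\delta$ (classical), with exponentially small maximal error in $l$, where $\Phi^{(k)}_x$ purifies $\pi_{T_x^{(k)}}$.

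Dividing these rates by $k$ yields per-channel-use rates. For the quantum rate, additivity of coherent information for product inputs through product channels gives $I_c(\rho_x^{\otimes k}, \cN_s^{\otimes k}) = k\,I_c(\rho_x, \cN_s)$; combined with continuity of $I_c$ and the bound $\|\pi_{T_x^{(k)}} - \rho_x^{\otimes k}\|_1 \to 0$, this shows
\begin{equation*}
\tfrac{1}{k} \inf_{s \in S} I(A \rangle BX, \omega(\cN_s^{\otimes k}, p, \Phi^{(k)})) \ \longrightarrow\ \inf_{s \in S} I(A \rangle BX, \omega(\cN_s, p, \Psi))
\end{equation*}
as $k \to \infty$, uniformly in $s$. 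Because ordinary Holevo information does not tensorize for product-input ensembles of this kind, I would handle the classical rate separately by invoking Lemma \ref{corollary22} directly on the c-q channel $V$ through $\cN_s$, which already delivers classical rate $\inf_s I(X;B, \omega(\cN_s, p, \Psi))$ per channel use. The two codes are then married exactly as in the proof of Lemma \ref{piinput}: given the POVM $\{\Lambda_m\}$ from Lemma \ref{corollary22} and the random entanglement codes $\{\tilde\cR_i^{(m)}\}$ for the blocked channel, define the combined block-level decoder $\cR_i^{(m)}(\rho) := \tilde\cR_i^{(m)}(\sqrt{\Lambda_m}\,\rho\,\sqrt{\Lambda_m})$ and conclude through the gentle-measurement inequality together with Lemma \ref{2epsilonlemma}.

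The main obstacle will be a bookkeeping one: the expected input of the entanglement code (block-wise $\pi_{T_x^{(k)}}$) does not match the input $\bigotimes_j \rho_{x_j}^{\otimes k}$ that the classical POVM is designed for, so one must verify that the trace-norm approximation $\|\pi_{T_x^{(k)}} - \rho_x^{\otimes k}\|_1 = o_k(1)$ is tight enough, together with the gentle-measurement inequality, to preserve the exponential error decay when the outer sums over the net elements $|S_\tau|$ and over the blocks are carried out. Once this is controlled, the rate lower bounds and the exponential error decay combine to yield the desired $(n, M_{1,n}, M_{2,n})$ CET code with the three claimed properties.
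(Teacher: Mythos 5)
Your reduction hinges on the claim $\|\pi_{T_x^{(k)}} - \rho_x^{\otimes k}\|_1 \to 0$, and this claim is false; it is the genuine gap in the proposal. What typicality gives you is that $\rho_x^{\otimes k}$ has almost all its weight \emph{on} the typical subspace and that its eigenvalues there lie in $[2^{-k(S(\rho_x)+c\delta)},2^{-k(S(\rho_x)-c\delta)}]$; it does not give trace-norm closeness to the maximally mixed state $\pi_{T_x^{(k)}}$. Already classically: the uniform distribution on a $\delta$-typical set concentrates on the type classes of maximal entropy inside the typical window (these are exponentially the largest), whereas $p^{\otimes k}$ concentrates on types within $O(1/\sqrt{k})$ of $p$; for non-uniform $p$ these regions are essentially disjoint, so the variational distance tends to a constant (indeed toward its maximum), not to zero. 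Consequently both places where you invoke the approximation collapse: (i) the continuity argument for $\tfrac{1}{k}I(A\rangle BX,\omega(\cN_s^{\otimes k},p,\Phi^{(k)}))\to I(A\rangle BX,\omega(\cN_s,p,\Psi))$ cannot be run through trace-norm continuity of entropies --- this convergence is true, but it is precisely the content of the BSST lemma, which needs a genuinely different proof (and here a version uniform in $s$, as in \cite{boche17}); and (ii) the marriage step fails, because the classical POVM $\{\Lambda_m\}$ from Lemma \ref{corollary22} is designed for channel outputs of $\bigotimes_j \rho_{x_j}$ while the average output of your entanglement encoder is blockwise $\pi_{T_x^{(k)}}$; since these inputs are \emph{not} trace-norm close, the bound (\ref{clascode}) does not transfer, and the gentle-measurement lemma has nothing to bite on. Note that in the paper's proof of Lemma \ref{piinput} the average encoded state equals $\pi_{u_m}$ \emph{exactly}, matching the c-q channel $V_\pi$ for which the POVM was built --- exactness, not approximation, is what makes that argument work.

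The paper's own proof avoids all of this by an exact, not approximate, decomposition: writing $V(x)$ in its eigenbasis and partitioning $\cY^{l}$ into conditional type classes $\cA_{\lambda}(x^{l})$, one gets the identity $V^{\otimes l}(x^{l})=\sum_{\lambda}q_{x^{l}}(\lambda)\,\pi_{x^{l}}^{\lambda}$, because $q_{x^{l}}(y^{l})$ is constant on each type class. Lemma \ref{piinput} is then applied once, to the extended alphabet $\cX^{l}\times\cT(\cX\times\cY,l)$ with the c-q channel $V_{\pi}(x^{l},\lambda)=\pi_{x^{l}}^{\lambda}$, so the maximally-mixed-input hypothesis holds exactly. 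The price is an auxiliary classical register $T$ carrying the type, which costs only $S(T)\leq\dim(\cH_{A}\otimes\cH_{B})\log(l+1)=O(\log l)$ in the quantum rate, and the classical rate is recovered from the refined ensemble by concavity of the von Neumann entropy --- no separate classical code and no BSST machinery (the paper explicitly advertises evading the latter). If you want to salvage your route, you must replace the trace-norm step by a bona fide uniform BSST lemma for the quantum rate and rebuild the classical part so that the POVM is matched to the actual average encoder output; at that point you have reconstructed the heavier approach of \cite{boche17} rather than a proof of this lemma from Lemma \ref{piinput}.
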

	
	\begin{proof}
		For $x\in\cX$, let $V(x)$ have the spectral decomposition
		\begin{equation*}
		V(x)=\sum_{y\in\cY}q_{x}(y)\ket{\phi_{x}^{y}}\bra{\phi_{x}^{y}},
		\end{equation*}
		with $\cY$ an alphabet with $|\cY|=\dim(\cH_{A})$, $\{\ket{\phi_{x}^{y}}\}_{y\in\cY}$ an ONB and $q_{x}\in\cP(\cY)$ for each $x\in\cX$. It can be seen that for $l\in\mathbb{N}$ and $x^{l}\in\cX^{l}$ we have
		\begin{equation}\label{spectraldecomp}
		V^{\otimes l}(x^{l})=\sum_{y^{l}\in \cY^{l}}q_{x^{l}}(y^{l})\ket{\phi_{x^{l}}^{y^{l}}}\bra{\phi_{x^{l}}^{y^{l}}}.
		\end{equation}
		For each $x^{l}\in\cX^{l}$ and $\lambda\in\cT(\cX\times\cY,l)$, define the following sets
		\begin{equation}\label{jointtypicalsets}
		\cA_{\lambda}(x^{l}):=\{y^{l}:(x^{l},y^{l})\in T_{\lambda}^{l}\}.
		\end{equation}
		Given the properties of typical sets, it can be observed that $\cA_{\lambda}(x^{l})\bigcap \cA_{\lambda'}(x^{l})=\emptyset$ for all pairs $(\lambda,\lambda')$ with $\lambda\neq\lambda'$. Also, $\bigcup_{\lambda\in\cT(\cX\times\cY,l)}\cA_{\lambda}(x^{l})=\cY^{l}$. Given these properties, from (\ref{spectraldecomp}) we obtain
		\begin{align}
V^{\otimes l}(x^{l})=\sum_{\lambda\in\cT(\cX\times\cY,l)}q_{x^{l}}(\lambda)\sum_{y^{l}\in\cA_{\lambda}(x^{l})}\ket{\phi_{x^{l}}^{y^{l}}}\bra{\phi_{x^{l}}^{y^{l}}}=\sum_{\lambda\in\cT(\cX\times\cY,l)}q_{x^{l}}(\lambda)\pi_{x^{l}}^{\lambda},
		\end{align}
with $\pi_{x^{l}}^{\lambda}:=\frac{1}{|\cA_{\lambda}(x^{l})|}\sum_{y^{l}\in\cA_{\lambda}(x^{l})}\ket{\phi_{x^{l}}^{y^{l}}}\bra{\phi_{x^{l}}^{y^{l}}}$ and $q_{x^{l}}(\lambda)=q_{x^{l}}(y^{l})|\cA_{\lambda}(x^{l})|$ for any $y^{l}\in\cA_{\lambda}(x^{l})$. The above decomposition therefore comes from the fact that for all $y^{l}\in\cA_{\lambda}(x^{l})$, $q_{x^{l}}(y^{l})$ is constant. Define probability distribution $r\in\cP(\cX^{l},\cT(\cX\times\cY,l))$ with $r(x^{l},\lambda)=p^{l}(x^{l})q_{x^{l}}(\lambda)$. Also define the state
		\begin{equation*}
		\sigma_{s}:=\sum_{(x^{l},\lambda)\in\cX^{l}\times\cT(\cY,l)}r(x^{l},\lambda)\ket{e_{x^{l}}}\bra{e_{x^{l}}}^{X}\otimes\ket{e_{\lambda}}\bra{e_{\lambda}}^{T}\otimes\mathbbm{1}_{\cH_{A}^{\otimes l}}\otimes\cN_{s}^{\otimes l}(\Phi_{x^{l}}^{\lambda}),
		\end{equation*}
		where $\Phi_{x^{l}}^{\lambda}$ is a purification of $\pi_{x^{l}}^{\lambda}$, a maximally entangled state on subspace $\cG_{x^{l}}^{\lambda}\subset\cH_{A}^{\otimes l}$. According to Lemma \ref{piinput}, for $V_{\pi}:V_{\pi}(x^{l},\lambda)=\pi_{x^{l}}^{\lambda}$, large enough values of $a\in\mathbb{N}$ and $\delta>0$, we find a subspace $\cF_{A,a\cdot l}\subset\cH_{A}^{\otimes a\cdot l}$ with $\dim(\cF_{A,a\cdot l})=M_{2,a\cdot l}$ with
		\begin{align}\label{quantumrategen}
		\frac{1}{a}\log M_{2,a\cdot l}&\geq\inf_{s\in S}I(A\rangle B TX,\sigma_{s})-\delta\\&\geq\inf_{s\in S}I(A T\rangle B X,\sigma_{s})-\delta\\&\geq\inf_{s\in S}I(A\rangle B X,(\sigma_{s}^{XAB})^{\otimes l})-S(T)_{\sigma}-\delta\nonumber\\&\geq\inf_{s\in S}I(A\rangle BX,(\sigma_{s}^{XAB})^{\otimes l})-\dim(\cH_{A}\otimes\cH_{B})\log(l+1)-\delta.
		\end{align}
		The first inequality comes from an application of Lemma \ref{piinput}, second and third from well-known inequalities (see e.g. \cite{wilde13}) between joint and conditional entropies. We have also used $S(T)_{\sigma}\leq\log|\cT(\cX\times\cY,l)|\leq\dim(\cH_{A}\otimes\cH_{B})\log(l+1)$ and the marginal state
		\begin{align*}
		(\sigma_{s}^{XAB})^{\otimes l}&:=\sum_{x^{l}\in\cX^{l}}p^{ l}(x^{l})\ket{e_{x^{l}}}\bra{e_{x^{l}}}\otimes\sum_{\lambda}q_{x^{l}}(\lambda)\mathbbm{1}_{\cH_{A}^{\otimes l}}\otimes\cN_{s}^{\otimes l}(\Phi_{x^{l}}^{\lambda})\\&=\sum_{x^{l}\in\cX^{l}}p^{ l}(x^{l})\ket{e_{x^{l}}}\bra{e_{x^{l}}}\otimes\mathbbm{1}_{\cH_{A}^{\otimes l}}\otimes\cN_{s}^{\otimes l}(\Psi_{x^{l}})\\&=\omega_{s}^{\otimes l}(\cN_{s},p,\Psi),
		\end{align*}
		where $\Psi_{x^{l}}$ is a purification of $V^{\otimes l}(x^{l})$ and $\omega_{s}(\cN_{s},p,\Psi)$ is defined by (\ref{evaluationstate}). We use $\omega_{s}$ to denote this state. From (\ref{quantumrategen}) we have
		\begin{align}\label{generalqrate}
		\frac{1}{l\cdot a}\log M_{2,a\cdot l}&\geq\frac{1}{l}\inf_{s\in S}I(A\rangle B X,\omega_{s}^{\otimes l})-\frac{\dim(\cH_{A}\otimes \cH_{B})\log(l+1)}{l}-\frac{\delta}{l}\nonumber\\&=\inf_{s\in S}I(A\rangle B X,\omega_{s})-\frac{\dim(\cH_{A}\otimes\cH_{B})\log(l+1)}{l}-\frac{\delta}{l}.
		\end{align}
		Again from Lemma \ref{piinput} we have for $\delta>0$,
		\begin{align*}
		\frac{1}{a}\log M_{1,a\cdot l}&\geq\inf_{s\in S}I(A;B,\sigma_{s})-\delta\\&=\inf_{s\in S}S(\sum_{x^{l}}\sum_{\lambda} p^{ l}(x^{l})q_{x^{l}}(\lambda)\cN_{s}^{\otimes l}(\pi_{x^{l}}^{\lambda}))-\sum_{x^{l}}\sum_{\lambda} p^{l}(x^{l})q_{x^{l}}(\lambda)S(\cN_{s}^{\otimes l}(\pi_{x^{l}}^{\lambda}))-\delta\nonumber\\&\geq\inf_{s} S(\sum_{x^{l}}\sum_{\lambda} p^{ l}(x^{l})q_{x^{l}}(\lambda)\cN_{s}^{\otimes l}(\pi_{x^{l}}^{\lambda}))-\sum_{x^{l}} p^{ l}(x^{l})S(\cN_{s}^{\otimes l}(\sum_{\lambda}q_{x^{l}}(\lambda)\pi_{x^{l}}^{\lambda}))-\delta\\&=\inf_{s\in S} I(A;B,\omega_{s}^{\otimes l})-\delta
		\end{align*}
		and hence 
		\begin{align}
		\frac{1}{a\cdot l}\log M_{1,a\cdot l}\geq\inf_{s\in S}I(A;B,\omega_{s})-\frac{\delta}{l}.
		\end{align}
		
	For any block-length $n\in\mathbb{N}$, we can write $n=a\cdot l+r$ for $a,l,r\in\mathbb{N}$ and $0\leq r<l$. For all $0\leq r<l$, we use the above $(a\cdot l,M_{1,a\cdot l},M_{2,a\cdot l})$ CET codes to achieve the desired rate, observing that
	\begin{equation*}
	\liminf_{n\to\infty} \frac{1}{n}M_{i,n}\geq  \liminf_{a\to\infty} \frac{1}{a\cdot l}M_{i,a\cdot l}, \  i=1,2.
	 \end{equation*}
	 and that $P(\cC_{CET},\cN^{\otimes n},m)\geq P(\cC_{CET},\cN^{\otimes a\cdot l},m)$ for all $m\in[M_{a\cdot l}]$. 
	\end{proof}

	\begin{proof}[Proof of Lemma \ref{dp}]
		According to Lemma \ref{generalinput0}, 
		\begin{equation*}
		(R_{1},R_{2})\in\bigcup_{p,\Psi}\bigcap_{s\in S}\hat{C}(\mathcal{N}_{s},p,\Psi)
		\end{equation*}
		implies $(R_{1},R_{2})\in C_{CET}(\mathcal{J})$. Using standard double-blocking arguments, for each $l\in\mathbb{N}$, 
		\begin{equation*}
		(R_{1},R_{2})\in \bigcup_{l=1}^{\infty}\frac{1}{l}\bigcup_{p,\Psi}\bigcap_{s\in S}\hat{C}(\mathcal{N}_{s}^{\otimes l},p,\Psi)
		\end{equation*}
	implies	$(R_{1},R_{2})\in C_{CET}(\mathcal{J})$.
	\end{proof}	
	\section{Proofs for the arbitrarily varying quantum channel}\label{avqcsection}
	In this section we consider the task of simultaneous entanglement and classical message transmission in the AVQC model. We derive results for the CET capacities of such channels, when the uncertainty set generating the AVQC is finite. After proving the converse part in the following section, we have used Ahlswede's robustification and elimination techniques to derive suitable codes from compound codes developed so far to prove the direct part of the capacity theorem. Also we will remark the relevant positivity conditions based on results from \cite{advnoise}. 
	\subsection{Proof of converse }
	In this section, we prove the following lemma.
	\begin{lemma}
		Let $\cJ:=\{\cN_{s}\}_{s\in S}\subset\cC(\cH_{A},\cH_{B})$ with $|S|<\infty$ be an AVQC. We have 
		\begin{equation*}
		\overline{\cA}_{r,CET}(\cJ)\subset \overline{C}_{CET}(\conv(\cJ)).
		\end{equation*}
	\end{lemma}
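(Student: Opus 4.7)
The plan is to leverage any random CET code $\mu_l$ for $\cJ$ as a random CET code for the compound channel $\conv(\cJ)$ with the same rates, and then derandomize it into a deterministic CET code for $\conv(\cJ)$. Writing $\epsilon_l := 1 - \inf_{s^l}\int \tfrac{1}{M_{1,l}}\sum_m g_{s^l}(\cP^{(m)},\cR^{(m)})\,d\mu_l$, the hypothesis yields $\liminf_l \tfrac{1}{l}\log M_{i,l} \geq R_i$ and $\epsilon_l \to 0$; the target is a sequence of deterministic codes for $\conv(\cJ)$ attaining the same asymptotic rates with vanishing average error, placing $(R_1,R_2) \in \overline{C}_{CET}(\conv(\cJ))$.

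For the first step I would exploit that $\cN_q^{\otimes l} = \sum_{s^l}q^{\otimes l}(s^l)\,\cN_{s^l}$ for every $q\in\cP(S)$, together with the observation that $g_{s^l}(\cP^{(m)},\cR^{(m)})$ is affine in $\cN_{s^l}$ (it is the overlap of the fixed pure state $\ket{m}\bra{m}\otimes\Phi^{AB}$ with the output of a linear pipeline into which $\cN_{s^l}$ enters once). Consequently
\[
\int \frac{1}{M_{1,l}}\sum_m g_{\cN_q^{\otimes l}}(\cP^{(m)},\cR^{(m)})\,d\mu_l \;=\; \sum_{s^l}q^{\otimes l}(s^l)\int \frac{1}{M_{1,l}}\sum_m g_{s^l}\,d\mu_l \;\geq\; 1-\epsilon_l,
\]
uniformly over $\cN_q \in \conv(\cJ)$, so $\mu_l$ already serves as a random CET code for the compound channel $\conv(\cJ)$.

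The second step is the derandomization. Using the net construction invoked at Lemma \ref{net}, I would fix a diamond-norm $\tau$-net $\conv(\cJ)_\tau$ of $\conv(\cJ)$ of size at most $(6/\tau)^{2(d\cdot d')^2}$, with $d:=\dim\cH_A$ and $d':=\dim\cH_B$. Markov's inequality applied to the nonnegative infidelity of each net element, followed by a union bound, yields a realization $(\cP^{(m)},\cR^{(m)})_m$ whose average infidelity is at most $\sqrt{\epsilon_l}$ on every element of $\conv(\cJ)_\tau$, provided $|\conv(\cJ)_\tau|\sqrt{\epsilon_l}<1$. Sub-multiplicativity $\|\cN^{\otimes l}-\cN'^{\otimes l}\|_\diamond \leq l\,\|\cN-\cN'\|_\diamond$ and continuity of the fidelity in the diamond norm then lift this bound from $\conv(\cJ)_\tau$ to all of $\conv(\cJ)$ at an additive cost of order $l\tau$.

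The main obstacle is the calibration of $\tau_l$ against $\epsilon_l$: the definition of $\overline{\cA}_{r,CET}(\cJ)$ only grants $\epsilon_l\to 0$ without specifying a rate, whereas the net step demands both $l\tau_l \to 0$ and $|\conv(\cJ)_{\tau_l}|\sqrt{\epsilon_l}\to 0$, the latter forcing polynomial decay of $\epsilon_l$. I would handle this by passing to a subsequence $l_k\to\infty$ along which $\epsilon_{l_k}$ decays fast enough (always possible since $\epsilon_l\to 0$). Along $l_k$ the derandomization produces deterministic CET codes for $\conv(\cJ)$ whose rates still tend to $(R_1,R_2)$ by the $\liminf$ hypothesis and whose average error vanishes; a standard padding argument between consecutive $l_k$ then upgrades the subsequence to a sequence at every block-length, completing the inclusion $(R_1,R_2) \in \overline{C}_{CET}(\conv(\cJ))$.
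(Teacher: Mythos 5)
Your first step coincides exactly with the paper's: using $\cN_{q}^{\otimes l}=\sum_{s^{l}\in S^{l}}q^{l}(s^{l})\cN_{s^{l}}$ and the affinity of $g_{s^{l}}$ in the channel, the random AVQC code $\mu_{l}$ is verified to be a random code for the compound channel $\conv(\cJ)$ with error at most $\epsilon_{l}$ uniformly in $q\in\cP(S)$. Where you diverge is the derandomization: the paper disposes of it by averaging the fidelity over the channel states and extracting a single realization whose error is inflated by the cardinality of the averaging set (written there as $1-|\cP(S)|\epsilon_{l}$), whereas you attempt a quantitative version via a $\tau$-net of $\conv(\cJ)$, Markov's inequality, a union bound, and the diamond-norm lifting $\parallel\cN^{\otimes l}-\cN'^{\otimes l}\parallel_{\diamond}\leq l\parallel\cN-\cN'\parallel_{\diamond}$ --- which is in the spirit of Lemma \ref{net} as used elsewhere in the paper. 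You also correctly isolate the crux: the net forces $\tau_{l}=o(1/l)$, hence a net of size at least of order $l^{2(d\cdot d')^{2}}$, hence a polynomial decay requirement on $\epsilon_{l}$ that the definition of $\overline{\cA}_{r,CET}(\cJ)$ does not grant.

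However, your proposed fix at exactly this point fails. It is not true that, given only $\epsilon_{l}\to 0$, one can pass to a subsequence $l_{k}$ with $\epsilon_{l_{k}}$ decaying polynomially in $l_{k}$: take $\epsilon_{l}=1/\log l$; then $\epsilon_{l}>l^{-C}$ for every fixed $C>0$ and all large $l$, so \emph{no} subsequence satisfies $|\conv(\cJ)_{\tau_{l_{k}}}|\sqrt{\epsilon_{l_{k}}}\to 0$. The requirement couples the error to the block length itself, and subsampling the sequence cannot improve that coupling. (A secondary issue: even when a good subsequence exists, padding between consecutive $l_{k}$ preserves the rates only if $l_{k+1}/l_{k}\to 1$, which your subsequence selection does not control.) To close the gap you need a device whose cost is independent of the speed at which $\epsilon_{l}$ vanishes. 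One such repair using tools already in the paper: apply the elimination argument of Lemma \ref{eliminationlemma}, which requires only $\epsilon_{l}\searrow 0$ and $|S|<\infty$, to replace $\mu_{l}$ by a uniform mixture of $l^{2}$ deterministic CET codes with error at most $\epsilon$ for every $s^{l}\in S^{l}$, hence (by your step one) for every $\cN_{q}^{\otimes l}$ simultaneously; then absorb the index $i\in[l^{2}]$ into the classical register when running the entropic converse of Lemma \ref{lemmaconv}, at a rate cost $\frac{2\log l}{l}\to 0$, landing in the multi-letter region which equals $\overline{C}_{CET}(\conv(\cJ))$ by Theorem \ref{mainresult}. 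As it stands, your proof establishes the inclusion only for random codes whose error happens to decay superpolynomially, which is strictly weaker than the lemma.
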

	\begin{proof}
Let $(\mu_{l})_{l=1}^{\infty}$ be a sequence of random codes for AVQC generated by $\cJ$ with 
\begin{equation}\label{erroravc}
\lim_{l\to\infty}\inf_{s^{l}\in S^{l}}\int\frac{1}{M_{1,l}}\sum_{m\in[M_{1,l}]}g_{s^{l}}(\cP^{(m)},\cR^{(m)})\ d\mu_{l}(\cP^{(m)},\cR^{(m)})_{m\in[M_{1,l}]}=1
\end{equation}
with function $g_{s^{l}}$ defined by (\ref{fidfunc}) and $(\cP^{(m)},\cR^{(m)})_{m\in[M_{1,l}]}$ denoting the members of the singleton sets from the respective sigma-algebra. On the other hand, for the compound channel $\conv(\cJ)$ and each $\cN_{q}\in\conv(\cJ)$ we have
\begin{align*}
\int\frac{1}{M_{1,l}}\sum_{m\in[M_{1,l}]} F(\ket{m}\bra{m}&\otimes \Phi^{AB},\id_{\cH_{A}^{\otimes l}}\otimes \cR\circ\cN_{q}^{\otimes l}\circ\cP^{(m)}(\Phi^{AA}))\ d\mu_{l}(\cP^{(m)},\cR^{(m)})_{m\in [M_{1,l}]}=\\&\sum_{s^{l}\in S^{l}}q^{l}(s^{l})\int \frac{1}{M_{1,l}}\sum_{m\in[M_{1,l}]}g_{s^{l}}(\cP^{(m)},\cR^{(m)})\ d\mu_{l}(\cP^{(m)},\cR^{(m)})_{m\in [M_{1,l}]}\geq \\&\inf_{s^{l}\in S^{l}}\int \frac{1}{M_{1,l}}\sum_{m\in[M_{1,l}]}g_{s^{l}}(\cP^{(m)},\cR^{(m)})\ d\mu_{l}(\cP^{(m)},\cR^{(m)})_{m\in [M_{1,l}]}\geq 1-\epsilon_{l},
\end{align*}
with $\epsilon_{l}\searrow 0$. The last inequality comes from (\ref{erroravc}). This yields
\begin{align*}
\inf_{q\in\cP(S)}\int\frac{1}{M_{1,l}}&\sum_{m\in[M_{1,l}]} F(\ket{m}\bra{m}\otimes \Phi^{AB},\id_{\cH_{A}^{\otimes l}}\otimes \cR\circ\cN_{q}^{\otimes l}\circ\cP^{(m)}(\Phi^{AA}))d\mu_{l}(\cP^{(m)},\cR^{(m)})_{m\in [M_{1,l}]}\\&\geq 1-\epsilon_{l}.
\end{align*}
This means 
\begin{align*}
\int\frac{1}{M_{1,l}}&\sum_{m\in[M_{1,l}]} F(\ket{m}\bra{m}\otimes \Phi^{AB},\id_{\cH_{A}^{\otimes l}}\otimes \cR\circ\frac{1}{|\cP(S)|}\sum_{q\in\cP(S)}\cN_{q}^{\otimes l}\circ\cP^{(m)}(\Phi^{AA}))d\mu_{l}(\cP^{(m)},\cR^{(m)})_{m\in [M_{1,l}]}\\&\geq 1-\epsilon_{l},
\end{align*}
that in turn implies the existence of at least one CET code $(\cP^{(m)},\cR^{(m)})_{m\in [M_{1,l}]}$ for compound channel $\conv(\cJ)$ with average error lower-bounded by $1-|\cP(S)|\epsilon_{l}$. We therefore conclude  
	\begin{equation*}
	\overline{\cA}_{r,CET}(\cJ)\subset \overline{C}_{CET}(\conv(\cJ)).
	\end{equation*}
	\end{proof}
	\subsection{Proof of the direct part}
	In this section, we prove the following two lemmas, that along with the converse shown in the previous section, prove the first part of Theorem \ref{avcqmainresult}. 
	\begin{lemma}\label{robustification}
Let $\cJ:=\{\cN_{s}\}_{s\in S}\subset\cC(\cH_{A},\cH_{B})$ with $|S|<\infty$ be an AVQC. We have 
\begin{equation}
\overline{C}_{CET}(\conv(\cJ))\subset\overline{\cA}_{r,CET}(\cJ).
\end{equation}
	\end{lemma}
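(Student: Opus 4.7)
The plan is to invoke Ahlswede's robustification to convert the deterministic compound CET codes available for $\conv(\cJ)$ into permutation-randomized codes for the AVQC $\cJ$. The core observation is that a code performing well under every i.i.d.\ convex combination $\cN_q^{\otimes l} = \sum_{s^l} q^l(s^l)\cN_{s^l}$ can, after symmetrization over $\mathfrak{S}_l$, be made to perform well under each individual $\cN_{s^l}$.

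Fix $(R_1, R_2) \in \overline{C}_{CET}(\conv(\cJ))$ and $\delta > 0$. Theorem \ref{mainresult} applied to $\conv(\cJ)$, traced through the constructions in Lemmas \ref{piinput} and \ref{generalinput0}, provides for every sufficiently large $l$ a deterministic $(l, M_{1,l}, M_{2,l})$ CET code $\cC_{CET} := (\cP^{(m)}, \cR^{(m)})_{m \in [M_{1,l}]}$ with $\frac{1}{l}\log M_{i,l} \geq R_i - \delta$ and $\overline{P}(\cC_{CET}, \cN_q^{\otimes l}) \geq 1 - \epsilon_l$ for every $q \in \cP(S)$, with $\epsilon_l$ decaying exponentially. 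Since $\ket{m}\bra{m}\otimes \Phi^{AB}$ is pure, $F$ is linear in its second argument, so this bound is equivalent to $\sum_{s^l} q^l(s^l)\, \overline{g}_{s^l} \geq 1 - \epsilon_l$ for all $q \in \cP(S)$, where $\overline{g}_{s^l} := \frac{1}{M_{1,l}} \sum_m g_{s^l}(\cP^{(m)}, \cR^{(m)})$. Now let $\mu_l$ be the uniform distribution on the finite group $\mathfrak{S}_l$, assigning to each $\pi$ the deterministic code $(\cU_\pi \circ \cP^{(m)}, \cR^{(m)} \circ \cU_{\pi^{-1}})_{m \in [M_{1,l}]}$, with $\cU_\pi$ the unitary channel permuting the $l$ tensor factors. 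Using $\cU_{\pi^{-1}} \circ \cN_{s^l} \circ \cU_\pi = \cN_{\pi(s^l)}$, the averaged fidelity of the random code on $\cN_{s^l}$ equals $\frac{1}{l!} \sum_\pi \overline{g}_{\pi(s^l)}$. Ahlswede's robustification lemma (\cite{ahls}) then states: if $f : S^l \to [0,1]$ satisfies $\sum_{s^l} q^l(s^l) f(s^l) \geq 1 - \epsilon_l$ for every $q \in \cP(S)$, then $\frac{1}{l!} \sum_\pi f(\pi(s^l)) \geq 1 - (l+1)^{|S|} \epsilon_l$ for every $s^l \in S^l$. Applied to $f = \overline{g}$,
\begin{equation*}
\inf_{s^l \in S^l} \int \overline{g}_{s^l}\, d\mu_l \geq 1 - (l+1)^{|S|} \epsilon_l,
\end{equation*}
which tends to $1$ because $|S| < \infty$ and $\epsilon_l$ is exponentially small, while the rate pair $(R_1, R_2)$ is preserved since the encoding and decoding spaces are unchanged.

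The main obstacle is ensuring exponential (rather than merely subexponential) decay of the compound error, since robustification introduces the polynomial overhead $(l+1)^{|S|}$ that would otherwise overwhelm the bound. This exponential rate is present in the primitives of Lemma \ref{piinput} and must be tracked carefully through the concatenation in Lemma \ref{generalinput0}. Measurability with respect to $\sigma_l$ is automatic on the finite set $\mathfrak{S}_l$, and the CPTP validity of the permuted decoder $\sum_m \cR^{(m)} \circ \cU_{\pi^{-1}} = \bigl(\sum_m \cR^{(m)}\bigr) \circ \cU_{\pi^{-1}}$ follows immediately from that of $\sum_m \cR^{(m)}$.
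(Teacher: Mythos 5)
Your proposal is correct and takes essentially the same route as the paper's proof: both obtain exponentially reliable deterministic CET codes for $\conv(\cJ)$ from Lemmas \ref{dp} and \ref{generalinput0}, expand $\cN_{q}^{\otimes l}=\sum_{s^{l}\in S^{l}}q^{l}(s^{l})\cN_{s^{l}}$ to verify the hypothesis of Ahlswede's robustification (Lemma \ref{robustlemma}), and then take the uniform distribution over the permuted codes $(\cU_{A,\pi}\circ\cP^{(m)},\cR^{(m)}\circ\cU_{B,\pi^{-1}})_{\pi\in\mathfrak{S}_{l}}$, with the polynomial factor $(l+1)^{|S|}$ absorbed by the exponential error decay. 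Your explicit covariance identity $\cU_{\pi^{-1}}\circ\cN_{s^{l}}\circ\cU_{\pi}=\cN_{\pi(s^{l})}$ and the measurability and decoder-CPTP checks merely make explicit steps the paper leaves implicit.
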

	\begin{lemma}\label{Theorem}
Let $\cJ:=\{\cN_{s}\}_{s\in S}\subset\cC(\cH_{A},\cH_{B})$ with $|S|<\infty$ be an AVQC. $\overline{\cA}_{d,CET}(\cJ)\neq\{(0,0)\}$ implies $\overline{\cA}_{d,CET}(\cJ)=\overline{\cA}_{r,CET}(\cJ)$. 
	\end{lemma}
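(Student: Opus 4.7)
The inclusion $\overline{\cA}_{d,CET}(\cJ)\subset\overline{\cA}_{r,CET}(\cJ)$ is immediate, since every deterministic code is a point-mass random code. To establish the reverse inclusion I will invoke Ahlswede's elimination paradigm in the CET setting: first reduce a generic random CET code to a uniform mixture over polynomially many deterministic codes, and then communicate the resulting label through a sublinear prefix, which is possible thanks to the hypothesis $\overline{\cA}_{d,CET}(\cJ)\neq\{(0,0)\}$.

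\textbf{Step 1: random code reduction.} Fix $(R_1,R_2)\in\overline{\cA}_{r,CET}(\cJ)$ and a witnessing sequence of random $(l,M_{1,l},M_{2,l})$ CET codes $\mu_l$ whose expected average fidelity exceeds $1-\epsilon_l$ for every $s^l\in S^l$, with $\epsilon_l\to 0$. For each $l$, draw $K_l=\mathrm{poly}(l)$ independent deterministic CET codes $\{(\cP^{(m,k)},\cR^{(m,k)})_{m\in[M_{1,l}]}\}_{k\in[K_l]}$ from $\mu_l$. For fixed $s^l$, the average fidelity of the uniform mixture is an empirical mean of $K_l$ i.i.d.\ random variables in $[0,1]$, so Hoeffding's inequality provides exponential (in $K_l$) concentration around the expectation. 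With $K_l$ chosen large enough that this concentration dominates the cardinality $|S|^l$, a union bound over the jammer sequences yields a realization satisfying
\[\min_{s^l\in S^l}\frac{1}{K_l\,M_{1,l}}\sum_{k=1}^{K_l}\sum_{m=1}^{M_{1,l}}g_{s^l}(\cP^{(m,k)},\cR^{(m,k)})\geq 1-\eta_l,\]
with $\eta_l\to 0$. This reduces $\mu_l$ to a uniform mixture over $K_l$ deterministic codes, consuming only $\log K_l=O(\log l)$ bits of common randomness.

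\textbf{Step 2: prefix derandomization.} The hypothesis $\overline{\cA}_{d,CET}(\cJ)\neq\{(0,0)\}$ supplies a deterministic CET code for $\cJ$ with either positive classical rate or positive entanglement rate. In the first case the classical part already furnishes a deterministic classical transmission code of some rate $R>0$; in the second case, an entanglement transmission code with $\dim\cF_{A,l}=2^{lR_2}$ and entanglement fidelity tending to $1$ yields a deterministic classical code of rate $R_2$ by preparing an orthonormal basis of $\cF_{A,l}$ at the input and measuring in the same basis at the output, since the entanglement fidelity bound forces the corresponding basis-state average success probability to tend to $1$. Either way $\cJ$ admits a deterministic classical code of some rate $R>0$ with average error vanishing uniformly in the jammer's choice. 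Applied on a prefix of length $n_l:=\lceil(\log K_l)/R\rceil=O(\log l)$ channel uses, this code transmits the label $k\in[K_l]$ with vanishing error.

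\textbf{Step 3: concatenation and main obstacle.} Concatenating the $n_l$-use prefix with the $k$-th of the $K_l$ deterministic CET codes yields a deterministic $(l+n_l,M_{1,l},M_{2,l})$ CET code for $\cJ$ whose average error over every $s^{l+n_l}\in S^{l+n_l}$ is bounded by $\eta_l$ plus the prefix error, both tending to zero. Since $n_l/l\to 0$ the rate pair is preserved asymptotically, giving $(R_1,R_2)\in\overline{\cA}_{d,CET}(\cJ)$. The principal obstacle is the uniform concentration in Step 1: Hoeffding's exponential decay must be balanced against the exponential union bound of cardinality $|S|^l$, which is feasible precisely because $|S|<\infty$. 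A secondary technicality is verifying that each sampled code retains the decoder-sum admissibility $\sum_m\cR^{(m,k)}\in\cC(\cH_B^{\otimes l},\cF_{B,l})$; this is inherited code by code from the support of $\mu_l$.
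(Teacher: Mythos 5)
Your proposal is correct and follows essentially the same route as the paper: the paper's Lemma \ref{eliminationlemma} performs your Step 1 (sampling $l^{2}$ codes i.i.d.\ from $\mu_{l}$ and beating the $|S|^{l}$ union bound with a Chernoff-type exponential-moment estimate in place of your Hoeffding bound), your Step 2 is exactly the paper's use of Lemma \ref{m1>m2} to extract a positive-rate deterministic classical prefix code of sublinear length $r_{l}=o(l)$ from the hypothesis $\overline{\cA}_{d,CET}(\cJ)\neq\{(0,0)\}$, and your Step 3 matches the paper's concatenated code $(\cP^{(m)},\cR^{(m)})$ whose fidelity is bounded below via multiplicativity and Lemma \ref{2epsilonlemma}. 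The only differences are cosmetic (Hoeffding versus the paper's Markov argument, $K_{l}=\mathrm{poly}(l)$ versus $l^{2}$, and an additive rather than product-form error bookkeeping), none of which affects correctness.
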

	To prove the second part of Theorem \ref{avcqmainresult}, we invoke the following result from \cite{advnoise}. 
	\begin{theorem}(\cite{advnoise} Theorem 40)\label{m1=0}
		Let $\cJ=\{\cN_{s}\}_{s\in S}\subset\cC(\cH_{A},\cH_{B})$, $|S|<\infty$, be and AVQC. Then $\cJ$ is symmetrizable if and only if for all $\{\rho_{1},\dots,\rho_{M}\}\subset\cS(\cH_{A}^{\otimes l})$, $M,l\in\mathbb{N}$, $M\geq 2$, and POVMs $\{D_{m}\}_{m=1}^{M}$ on $\cH_{B}^{\otimes l}$, 
		\begin{equation*}
		\inf_{s^{l}\in S^{l}}\frac{1}{M}\sum_{m=1}^{M}(1-\tr(\cN_{s^{l}}(\rho_{m})D_{m}))\geq 1/4
		\end{equation*}
		holds. 
	\end{theorem}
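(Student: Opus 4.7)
The plan is to prove both directions of this iff by the Ericson-style symmetrizability dichotomy adapted to the quantum setting, as in the argument of \cite{advnoise}. Write $b_{ij} := \sum_{s^{l}} p(\rho_{j})(s^{l})\tr(D_{i}\cN_{s^{l}}(\rho_{i}))$ for the key expected success rates.

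For the forward direction, assume $\cJ$ is $l$-symmetrizable and fix any code $(\{\rho_{m}\}_{m=1}^{M}, \{D_{m}\}_{m=1}^{M})$ with $M \geq 2$. Applying symmetrizability to the full set $\{\rho_{1},\dots,\rho_{M}\}$ furnishes a single map $p$ for which, for every pair $i \neq j$, the state $\sigma_{ij} := \sum_{s^{l}} p(\rho_{j})(s^{l})\cN_{s^{l}}(\rho_{i})$ also equals $\sum_{s^{l}} p(\rho_{i})(s^{l})\cN_{s^{l}}(\rho_{j})$. Applying $D_{i} + D_{j} \leq \mathbbm{1}$ to $\sigma_{ij}$ and expanding the two representations separately yields the cross-inequality $b_{ij} + b_{ji} \leq 1$. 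Taking the jammer's mixed strategy $q := \tfrac{1}{M}\sum_{m=1}^{M} p(\rho_{m})$, a direct computation gives
\begin{equation*}
\sum_{s^{l}} q(s^{l}) \cdot \frac{1}{M}\sum_{m=1}^{M} (1 - \tr(\cN_{s^{l}}(\rho_{m})D_{m})) \;=\; 1 \;-\; \frac{1}{M^{2}}\sum_{i,j=1}^{M} b_{ij},
\end{equation*}
and summing the cross-inequality over ordered off-diagonal pairs together with the trivial bound $b_{ii} \leq 1$ gives $\sum_{i,j} b_{ij} \leq M(M+1)/2$, so this mixed-strategy expected average error is at least $(M-1)/(2M) \geq 1/4$. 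At least one deterministic $s^{l}$ in the support of $q$ then witnesses the bound.

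For the backward direction I argue contrapositively. By the $K = 2$ reduction of \cite{bochenozel}, failure of $l$-symmetrizability means that for some $\eta, \eta' \in \cS(\cH_{A}^{\otimes l})$ the compact convex sets $\cF_{\eta} := \{\sum_{s^{l}} q(s^{l})\cN_{s^{l}}(\eta) : q \in \cP(S^{l})\}$ and $\cF_{\eta'}$ are disjoint inside $\cS(\cH_{B}^{\otimes l})$. Hahn--Banach separation in the real vector space of Hermitian operators on $\cH_{B}^{\otimes l}$ produces a Hermitian $H$ strictly separating them, which I rescale and shift into a two-outcome POVM $\{D_{\eta}, D_{\eta'}\}$; this yields a binary code ($M = 2$) that violates the $1/4$ bound uniformly in $s^{l}$, contradicting the stated inequality.

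The main obstacle is purely quantitative and lies in the backward direction: Hahn--Banach alone is qualitative, so I must track the separation gap through the affine renormalisation into a POVM and show it survives as a strict uniform improvement over $1/4$ across all $s^{l}$. I expect this step to rely on compactness of $\cP(S^{l})$ and continuity of $q \mapsto \sum_{s^{l}} q(s^{l})\cN_{s^{l}}(\cdot)$, in direct parallel with the proof of Theorem 40 of \cite{advnoise}.
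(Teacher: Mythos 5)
Your forward direction is correct, and it is essentially the Ericson-style argument behind the cited result (note that the paper you are being compared against does not prove this statement at all --- it imports it verbatim as Theorem 40 of \cite{advnoise}, so the only meaningful comparison is with the source's proof, which your first half matches). The identity $\sum_{s^{l}}q(s^{l})\cdot\frac{1}{M}\sum_{m}(1-\tr(\cN_{s^{l}}(\rho_{m})D_{m}))=1-\frac{1}{M^{2}}\sum_{i,j}b_{ij}$, the cross-inequality $b_{ij}+b_{ji}\leq 1$ obtained by testing $D_{i}+D_{j}\leq\bbmeins$ against the two representations of $\sigma_{ij}$, and the resulting bound $(M-1)/(2M)\geq 1/4$ for $M\geq 2$ all check out. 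One quantifier remark: what your argument actually delivers is the existence of \emph{some} $s^{l}$ in the support of $q$ with average error at least $1/4$, i.e.\ a bound on $\sup_{s^{l}}$ of the error, not on $\inf_{s^{l}}$ as literally printed. As printed the claim is false (a symmetrizable family can contain a benign state sequence under which a two-word code has exponentially small error), and the paper's own later use of the theorem --- ``no CET codes with $\inf_{s^{l}}\frac{1}{M}\sum_{m}g_{s^{l}}\geq 3/4$'' --- is exactly the $\sup$-error version. So your reading is the intended one and the printed $\inf$ is a transcription slip; your proof is of the correct statement.

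The genuine gap is in your backward direction, and it is precisely at the point you dismiss as ``purely quantitative'': it is a missing idea, not bookkeeping. The $K=2$ reduction from \cite{bochenozel} and the separation of the disjoint compact convex sets $\cF_{\eta},\cF_{\eta'}$ by a Hermitian $H$ are fine, but rescaling $H$ affinely into a two-outcome POVM $\{D_{\eta},D_{\eta'}\}$ can only ever give average success at least $\frac{1}{2}+\delta$ uniformly in $s^{l}$, where $\delta>0$ is the separation gap divided by (twice) the spectral spread of $H$. No amount of tracking constants through the renormalisation forces $\delta>1/4$: the gap between the two convex sets can be arbitrarily small, so a single binary measurement need not violate the $1/4$ bound, and your contrapositive does not close. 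The missing step is amplification, which the quantifier structure of the theorem permits since the bound is required at every block length: pass from $l$ to $nl$, take codewords $\eta^{\otimes n}$ and $\eta'^{\otimes n}$, measure $\{D_{\eta},D_{\eta'}\}$ separately on each $l$-sub-block, and decode by majority vote. Because $\cN_{s^{nl}}$ factorizes into a tensor product of the $n$ sub-block channels, the $n$ measurement outcomes are independent Bernoulli variables, each with success probability at least $\frac{1}{2}+\delta$ whatever the jammer does on that sub-block, so a Hoeffding bound drives the worst-case average error below $1/4$ (indeed to zero) for large $n$. Alternatively one can use the separating measurement to induce a non-symmetrizable \emph{classical} AVC and invoke the Csisz\'ar--Narayan positivity theorem, in the spirit of \cite{advnoise}; either route supplies what your Hahn--Banach step alone cannot.
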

	This result, along with the following lemma, prove the second part of Theorem \ref{avcqmainresult}. 
	\begin{lemma}\label{m1>m2}
		Let $(\cP,\cR)$ be an $(M,l)$ entanglement transmission code for AVQC $\cJ=\{\cN_{s}\}_{s\in S}\subset\cC(\cH_{A},\cH_{B})$ with
		\begin{equation}\label{quantumpart}
		F(\Phi^{AB},\id_{\cH_{A}^{\otimes l}}\otimes\cR\circ\cN_{s^{l}}\circ\cP(\Phi^{AA}))\geq 1-\epsilon\ \ \forall s^{l}\in S^{l}.
		\end{equation}
		Then, there exist $\{\rho_{1},\dots,\rho_{M}\}\subset\cS(\cH_{A}^{\otimes l})$ and POVM $\{D_{m}\}_{m\in[M]}$ on $\cH_{B}^{\otimes l}$ such that
		\begin{equation}\label{classicalpart}
		\frac{1}{M}\sum_{m=1}^{M}\tr(D_{m}\cN_{s^{l}}(\rho_{m}))\geq 1-\epsilon\ \ \forall s^{l}\in S^{l}
		\end{equation}
		holds. 
	\end{lemma}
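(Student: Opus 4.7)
The plan is to construct the classical code directly from the given entanglement transmission code $(\cP,\cR)$ by encoding classical messages into an orthonormal basis of the code space. Fix any orthonormal basis $\{\ket{m}\}_{m=1}^{M}$ of $\cF_{A,l}$; since $\cF_{A,l}\subset\cF_{B,l}$, the same vectors sit inside $\cF_{B,l}$ as well. I set
\[
\rho_{m}:=\cP(\ket{m}\bra{m})\in\cS(\cH_{A}^{\otimes l}),\qquad \widetilde{D}_{m}:=\cR^{*}(\ket{m}\bra{m})\in\cL(\cH_{B}^{\otimes l}),
\]
for $m\in[M]$, where $\cR^{*}$ denotes the Hilbert--Schmidt adjoint of $\cR$. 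Each $\widetilde{D}_{m}$ is positive, and since $\cR$ is CPTP its adjoint is completely positive and unital, so $\sum_{m=1}^{M}\widetilde{D}_{m}=\cR^{*}\bigl(\sum_{m}\ket{m}\bra{m}\bigr)\leq\cR^{*}(I_{\cF_{B,l}})=I$. To obtain a genuine POVM I augment $D_{M}:=\widetilde{D}_{M}+\bigl(I-\sum_{m=1}^{M}\widetilde{D}_{m}\bigr)$ and keep $D_{m}:=\widetilde{D}_{m}$ for $m<M$; this enlargement is non-negative and so only increases $\tr(D_{M}\cN_{s^{l}}(\rho_{M}))$ relative to $\tr(\widetilde{D}_{M}\cN_{s^{l}}(\rho_{M}))$. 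Writing $\Lambda_{s^{l}}:=\cR\circ\cN_{s^{l}}\circ\cP$, it therefore suffices to prove
\[
\frac{1}{M}\sum_{m=1}^{M}\bra{m}\Lambda_{s^{l}}(\ket{m}\bra{m})\ket{m}\;\geq\;1-\epsilon\qquad\forall\,s^{l}\in S^{l}.
\]

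The heart of the argument is the pointwise inequality ``average classical fidelity $\geq$ entanglement fidelity'' for each fixed $s^{l}$, which I would prove by a short Kraus computation. Fix any decomposition $\Lambda_{s^{l}}(\cdot)=\sum_{k}V_{k}(\cdot)V_{k}^{\dagger}$ with $V_{k}\colon\cF_{A,l}\to\cF_{B,l}$. Expanding $\ket{\Phi^{XY}}=\frac{1}{\sqrt{M}}\sum_{m}\ket{m}_{X}\ket{m}_{Y}$ directly yields
\[
F\bigl(\Phi^{AB},(\id\otimes\Lambda_{s^{l}})(\Phi^{AA})\bigr)=\frac{1}{M^{2}}\sum_{k}|\tr V_{k}|^{2}=\sum_{k}\bigg|\frac{1}{M}\sum_{m}(V_{k})_{mm}\bigg|^{2},
\]
while
\[
\frac{1}{M}\sum_{m=1}^{M}\bra{m}\Lambda_{s^{l}}(\ket{m}\bra{m})\ket{m}=\frac{1}{M}\sum_{k}\sum_{m}|(V_{k})_{mm}|^{2}.
\]
Cauchy--Schwarz in the form $\bigl|\tfrac{1}{M}\sum_{m}(V_{k})_{mm}\bigr|^{2}\leq\tfrac{1}{M}\sum_{m}|(V_{k})_{mm}|^{2}$, applied for each $k$ and then summed over $k$, gives the desired inequality. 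Combined with the hypothesis \eqref{quantumpart}, this lower-bounds the right-hand side by $1-\epsilon$ uniformly in $s^{l}$, establishing \eqref{classicalpart}.

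I do not expect a substantive obstacle: the argument reduces to a single Kraus-level Cauchy--Schwarz step wrapped in bookkeeping. The only points needing mild care are (i) exploiting the inclusion $\cF_{A,l}\subset\cF_{B,l}$ so that the vectors $\ket{m}$ can be used both at the input of $\cP$ and at the output of $\cR^{*}$, and (ii) completing $\widetilde{D}_{m}$ into a full POVM, which, as noted above, only inflates the average classical success probability and hence leaves the bound intact.
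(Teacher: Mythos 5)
Your proposal is correct and follows essentially the same route as the paper: the paper defines $\rho_{m}:=\cP(\ket{m}\bra{m})$ and $D_{m}:=\cR_{*}(\ket{m}\bra{m})$ exactly as you do, and invokes the convexity of the entanglement fidelity $F_{e}(\cdot,\Lambda)$ in its first argument together with $F(\Phi^{AB},\id\otimes\cR\circ\cN_{s^{l}}\circ\cP(\Phi^{AA}))=F_{e}(\pi_{\cF_{A,l}},\cR\circ\cN_{s^{l}}\circ\cP)$ and $\pi_{\cF_{A,l}}=\frac{1}{M}\sum_{m}\ket{m}\bra{m}$ --- your Kraus-level Cauchy--Schwarz computation is precisely the standard proof of that convexity, just written out explicitly. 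Your completion of $\{\widetilde{D}_{m}\}$ to a genuine POVM is a point the paper's terser proof leaves implicit, and your observation that the added slack only increases the success probability correctly closes that small gap.
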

	\begin{proof}
The proof follows directly from the convexity of entanglement fidelity in its first input and that 
\begin{equation*}
F(\Phi^{AB},\id_{\cH_{A}^{\otimes l}}\otimes\cR\circ\cN_{s^{l}}\circ\cP(\Phi^{AA}))=F_{e}(\pi_{\cF_{A,l}},\cR\circ\cN_{s^{l}}\circ\cP).
\end{equation*}
Defining for each $m\in[M]$, $D_{m}:=\cR_{*}(\ket{m}\bra{m})$ and $\rho_{m}:=\cP(\ket{m}\bra{m})$ with $\cR_{*}$ the Hilbert-Schmidt adjoint of channel $\cR$ and spectral decomposition $\pi_{\cF_{A,l}}=\frac{1}{M}\sum_{m\in[M]}\ket{m}\bra{m}$, we carry the lower bound on (\ref{quantumpart}) to (\ref{classicalpart}). 
		
	\end{proof}
	Lemma \ref{m1>m2} and Theorem \ref{m1=0} show that $\cJ$ is symmetrizable if and only if there exist no CET codes $(\cP^{(m)},\cR^{(m)})_{m\in M}$ with $M\geq 2$, such that we have $\inf_{s^{l}\in S^{l}}\frac{1}{M}\sum_{m\in[M]}g_{s^{l}}(\cP^{(m)},\cR^{(m)})\geq \frac{3}{4}$. This in turn implies the second part of Theorem \ref{avcqmainresult}. 
	\begin{proof}[Proof of Lemma \ref{robustification}]
		In Section \ref{directsection} (Lemma \ref{dp}, Lemma \ref{generalinput0}), it was shown that for large enough values of $l\in\mathbb{N}$ there exists CET codes $(\tilde{\cP}^{(m)},\tilde{\cR}^{(m)})_{m\in[M_{1,}]}$ of $(l,M_{1,l},M_{2,l})$ for compound channel $\conv(\cJ)$ that achieve the optimum capacity region of this channel $\overline{C}_{CET}(\conv(\cJ))$ with
		\begin{equation}\label{compounderror}
		\inf_{q\in\cP(S)}\frac{1}{M_{1,l}}\sum_{m\in[M_{1,l}]}F(\ket{m}\bra{m}\otimes\Phi^{AB},\id_{\cH_{A}^{\otimes l}}\otimes\tilde{R}\circ\cN_{q}^{\otimes l}\circ\cP^{(m)}(\Phi^{AA}))\geq 1-\epsilon_{l}
		\end{equation}
		with $\epsilon_{l}\to 0$ exponentially. Since 
		\begin{equation*}
		\cN_{q}^{\otimes l}=(\sum_{s\in S}q(s)\cN_{s})^{\otimes l}=\sum_{s^{l}\in S^{l}}q^{l}(s^{l})\cN_{s^{l}},
		\end{equation*}
		from (\ref{compounderror}) we obtain,
		\begin{equation}\label{avcerror}
\inf_{q\in\cP(S)}\sum_{s^{l}\in S^{l}}q^{l}(s^{l})\frac{1}{M_{1,l}}\sum_{m\in[M_{1,l}]}F(\ket{m}\bra{m}\otimes\Phi^{AB},\id_{\cH_{A}^{\otimes l}}\otimes\tilde{R}\circ\cN_{s^{l}}\circ\cP^{(m)}(\Phi^{AA}))\geq 1-\epsilon_{l}.
		\end{equation}
	Defining the function $f:S^{l}\to[0,1]$ with
	\begin{equation*}
	f(s^{l}):=\frac{1}{M_{1,l}}\sum_{m\in[M_{1,l}]}g_{s^{l}}(\tilde{\cP}^{(m)},\tilde{R}^{(m)}),
	\end{equation*}
	from (\ref{avcerror}) we obtain
	\begin{equation}
	\inf_{q\in\cP(S)}\sum_{s^{l}\in S^{l}}q^{l}(s^{l})f(s^{l})\geq 1-\epsilon_{l}. 
	\end{equation}
	Therefore the hypothesis of Ahlswede's robustification (Lemma \ref{robustlemma}) is satisfied and hence
	\begin{align}
	\frac{1}{l!}\sum_{\alpha\in \mathfrak{S}_{l}}\frac{1}{M_{1,l}}\sum_{m\in[M_{1,l}]}g_{s^{l}}(\cU_{A,\alpha}\circ\tilde{\cP}^{(m)},\tilde{R}^{(m)}\circ\cU_{B,\alpha}^{-1})\geq 1-(l+1)^{|S|}\epsilon_{l},
	\end{align}
	where $\cU_{X,\alpha}(\cdot)=U_{X,\alpha}(\cdot)U^{\dagger}_{X,\alpha}$ with $U_{X,\alpha}$ is a unitary on $\cH_{A}^{\otimes l}$, permuting the tensor factors on this Hilbert space according to $\alpha$, i.e.
	\begin{equation*}
	U_{X,\alpha}x_{1}\otimes\dots\otimes x_{l}=x_{\alpha(1)}\otimes\dots x_{\alpha(l)}.
	\end{equation*}
	Therefore the uniform distribution over the set $\{(\cP_{\alpha}^{(m)},\cR_{\alpha}^{(m)})_{m\in[M_{1,l}]}:\alpha\in\mathfrak{S}_{l}\}$ with 
	\begin{equation*}
	\cP_{\alpha}^{(m)}:=\cU_{A,\alpha}\circ\tilde{\cP}^{(m)}
	\end{equation*}
	and 
	\begin{equation*}
\cR_{\alpha}^{(m)}:=\tilde{\cR}^{(m)}\circ\cU_{B,\alpha}^{-1},
	\end{equation*}
		yield the desired random CET code for arbitrarily varying channel generated by $\cJ$. Hence we conclude that $(R_{1},R_{2})\in\overline{C}_{CET}(\conv(\cJ))$ implies $(R_{1},R_{2})\in\overline{\cA}_{r,CET}(\cJ)$.
	\end{proof}
	To prove Lemma \ref{Theorem}, we need the following statement.
	\begin{lemma}\label{eliminationlemma}
		Let $\cJ:=\{\cN_{s}\}_{s\in S}$ with $|S|<\infty$ be an AVQC, $l\in\mathbb{N}$, $\mu_{l}$ an $(l,M_{1,l},M_{2,l})$ random CET code for $\cJ$ with
		\begin{equation}\label{hypothesisint}
		\inf_{s^{l}\in S^{l}}\int\frac{1}{M_{1,l}}\sum_{m\in[M_{1,l}]}g_{s^{l}}(\cP^{(m)},\cR^{(m)}) d\mu_{l}(\cP^{(m)},\cR^{(m)})_{m\in[M_{1,l}]}\geq 1-\epsilon_{l}
		\end{equation}
		for a sequence $(\epsilon_{l})_{l\in\mathbb{N}}$ such that $\epsilon_{l}\searrow 0$. Then, for $\epsilon\in(0,1)$ and sufficiently large $l\in\mathbb{N}$, there exist $l^{2}$ $(l,M_{1,l},M_{2,l})$ CET codes $\{(\cP_{i}^{(m)},\cR_{i}^{(m)})_{m\in[M_{1,l}]}\}$ with
		\begin{equation*}
		\frac{1}{l^{2}}\sum_{i=1}^{l^{2}}\frac{1}{M_{1,l}}\sum_{m\in[M_{1,l}]}g_{s^{l}}(\cP^{(m)}_{i},\cR^{(m)}_{i})\geq 1-\epsilon\ (\forall s^{l}\in S^{l}).
		\end{equation*}
		\end{lemma}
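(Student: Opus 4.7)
The plan is to apply Ahlswede's derandomization/elimination technique: draw $l^2$ deterministic codes independently from the random code $\mu_l$ and show, via a Chernoff/Hoeffding-type concentration bound combined with a union bound over $S^l$, that a joint realization exists in which the uniform mixture attains the required error uniformly over all jamming sequences. Since the cardinality of $S^l$ grows like $|S|^l$, the double-exponential decay from concentrating $l^2$ independent $[0,1]$-valued bounded random variables is needed to beat it; this is exactly the reason for choosing $l^2$ samples rather than merely polynomially many.

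Concretely, let $(\cP_i^{(m)},\cR_i^{(m)})_{m\in[M_{1,l}]}$ for $i=1,\dots,l^2$ be $l^2$ i.i.d.\ draws from $\mu_l$. For each fixed $s^l \in S^l$, set
\begin{equation*}
Y_i(s^l) := \frac{1}{M_{1,l}}\sum_{m\in[M_{1,l}]} g_{s^l}(\cP_i^{(m)},\cR_i^{(m)}) \in [0,1].
\end{equation*}
By the hypothesis (\ref{hypothesisint}) and the fact that the $Y_i(s^l)$ are i.i.d.\ copies of a single random variable whose $\mu_l$-expectation is at least $1-\epsilon_l$, we have $\expe Y_i(s^l) \geq 1-\epsilon_l$. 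Hoeffding's inequality for bounded i.i.d.\ variables then gives, for every $\epsilon > \epsilon_l$,
\begin{equation*}
\prob\Bigl(\tfrac{1}{l^2}\sum_{i=1}^{l^2} Y_i(s^l) < 1-\epsilon\Bigr) \leq \exp\bigl(-2 l^2 (\epsilon-\epsilon_l)^2\bigr).
\end{equation*}
A union bound over the $|S|^l$ possible sequences $s^l$ yields
\begin{equation*}
\prob\Bigl(\exists\, s^l \in S^l:\ \tfrac{1}{l^2}\sum_{i=1}^{l^2} Y_i(s^l) < 1-\epsilon \Bigr) \leq \exp\bigl(l\log|S| - 2l^2 (\epsilon-\epsilon_l)^2\bigr).
\end{equation*}
Since $\epsilon_l \searrow 0$ and $|S|$ is finite, the right-hand side tends to $0$ as $l\to\infty$, so for all sufficiently large $l$ it is strictly less than $1$. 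Hence a realization of the $l^2$-tuple of deterministic CET codes exists for which the desired uniform lower bound holds simultaneously for every $s^l \in S^l$, proving the claim.

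There is no genuine obstacle here — the argument is a textbook instance of Ahlswede's elimination applied to the CET fidelity functional — but two small points need attention. First, one must verify that the functions $g_{s^l}$ are $\mu_l$-measurable and bounded by $1$; this is ensured by the definition of random CET code (measurability is built into the $\sigma$-algebra $\sigma_l$, and boundedness follows from $g_{s^l}$ being a fidelity). Second, one should note that $l^2$ is the correct scaling: the exponent $-2l^2(\epsilon-\epsilon_l)^2$ must dominate $+l\log|S|$, so any superlinear number of samples works, but $l^2$ is the conventional choice since in the subsequent derandomization step (as in Lemma \ref{Theorem}) the index $i \in \{1,\dots,l^2\}$ of the chosen code is communicated with negligible rate $\frac{2\log l}{l}$, which is crucial for the deterministic capacity argument.
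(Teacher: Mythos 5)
Your proof is correct and takes essentially the same approach as the paper: draw $l^{2}$ i.i.d.\ codes from $\mu_{l}$, establish a Chernoff-type concentration bound for the $[0,1]$-valued average fidelities, and beat the $|S|^{l}$-fold union bound over jamming sequences with the resulting $2^{-\Theta(l^{2})}$ tail. The only (cosmetic) difference is that you invoke Hoeffding's inequality off the shelf, whereas the paper derives the exponential tail by hand, applying Markov's inequality to $2^{\gamma(K-\sum_{i}h_{s^{l}}(i))}$ with $K=l^{2}$, $\gamma=2$ and the convexity estimate $2^{\gamma t}\leq 1+2^{\gamma}t$ for $t\in[0,1]$.
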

		\begin{proof}
			Let for $K\in\mathbb{N}$, $(\Lambda_{i}^{(m)},\Gamma_{i}^{(m)})_{m\in[M_{1,n}]}$ for $i=1,\dots,K$ be independent random variables with values in $\cC(\cF_{A,l},\cH_{A}^{\otimes l})^{M_{1,l}}\times\Omega_{l}$ distributed according to $\mu_{l}^{\otimes K}$. We use the shorthand notation
			\begin{equation*}
h_{s^{l}}(i):=\frac{1}{M_{1,l}}\sum_{m\in[M_{1,l}]}g_{s^{l}}(\Lambda_{i}^{(m)},\Gamma_{i}^{(m)}).
			\end{equation*}
			For every $s^{l}\in S^{l}$, an application of Markov's inequality for every $\epsilon\in (0,1)$ and $\gamma>0$ yields
			\begin{align}\label{Markovineq}
			\mathbb{P}[1-\frac{1}{K}\sum_{i=1}^{K}h_{s^{l}}(i)\geq\epsilon/2]=\mathbb{P}[2^{K\gamma-\gamma\sum_{i=1}^{K}h_{s^{l}}(i)}\geq 2^{K\gamma(\epsilon/2)}]\leq 2^{-K\gamma(\epsilon/2)}\mathbb{E}[2^{\gamma(K-\sum_{i=1}^{K}h_{s^{l}}(i))}].
			\end{align}
			We now upper-bound the expectation in (\ref{Markovineq}).
			\begin{align}\label{56}
\mathbb{E}[2^{\gamma(K-\sum_{i=1}^{K}h_{s^{l}}(i))}]=(\mathbb{E}[2^{\gamma(1-h_{s^{l}}(1))}])^{K}\leq(\mathbb{E}[(1+2^{\gamma}(1-h_{s^{l}}(1)))])^{K}\leq (1+2^{\gamma}\epsilon_{l})^{K}.
			\end{align}
			The second inequality is due to the fact that $ (\Lambda_{i}^{(m)},\Gamma_{i}^{(m)})_{m\in[M_{1,n}]}$ are i.i.d for $i=1,\dots,K$, the first inequality comes from $2^{\gamma t}\leq(1-t)2^{0\cdot\gamma}+t2^{\gamma}\leq 1+2^{\gamma t}$ for $t\in[0,1]$ and last inequality comes from (\ref{hypothesisint}). For $K=l^{2}$ and $\gamma=2$ therefore, there exists $l_{0}(\epsilon)\in\mathbb{N}$ such that for $l\geq l_{0}(\epsilon)$
			\begin{equation}\label{l0}
			(1+2^{\gamma}\epsilon_{l})^{l^{2}}\leq 2^{l^{2}(\epsilon/2)}.
			\end{equation}
			Therefore we obtain from (\ref{Markovineq}), (\ref{56}) and (\ref{l0}),
			\begin{equation*}
			\mathbb{P}[1-\frac{1}{l^{2}}\sum_{i=1}^{l^{2}}h_{s^{l}}(i)\geq\epsilon/2]\leq 2^{-l^{2}(\epsilon/2)}.
			\end{equation*}
			Applying the union bound on the last inequality yields
			\begin{equation*}
	\mathbb{P}[\frac{1}{l^{2}}\sum_{i=1}^{l^{2}}h_{s^{l}}(i)>1-\epsilon/2, \forall s^{l}\in S^{l}]\geq 1-|S|^{l} 2^{-l^{2}(\epsilon/2)},
			\end{equation*}
				which implies that there is a realization $(\cP_{i}^{(m)},\cR_{i}^{(m)})_{m\in[M_{1,l}]}, i=1,\dots l^{2}$ with
				\begin{equation*}
\frac{1}{l^{2}}\sum_{i=1}^{l^{2}}\frac{1}{M_{1,l}}\sum_{m\in[M_{1,l}}g_{s^{l}}(\cP_{i}^{(m)},\cR_{i}^{(m)})>1-\epsilon/2 \ \ \forall s^{l}\in S^{l},
				\end{equation*}
				when $|S|^{l}2^{-l^{2}(\epsilon/2)}<1$ which is possible for sufficiently large values of $l$. 
		\end{proof}
	\begin{proof}[Proof of Lemma \ref{Theorem}]
		By assumption, for $\epsilon\in(0,1)$ there exists a $(r_{l},l^{2},1)$ deterministic CET code $(\tilde{\cP}^{(m)},\tilde{\cR}^{(m)})_{m=1}^{l^{2}}$ with 
		\begin{equation}\label{l2crit}
\frac{1}{l^{2}}\sum_{m=1}^{l^{2}}g_{s^{r_{l}}}(\tilde{\cP}^{(m)},\tilde{\cR}^{(m)})\geq 1-\epsilon \ \forall s^{r_{l}}\in S^{r_{l}},
		\end{equation}
		with $r_{l}=o(l)$. This is because if the capacity region is not equal to the point $(0,0)$, $R_{1}$ (intersection of the capacity region with the $x$-axis), is definitely larger than zero (see Lemma \ref{m1>m2}). On the other hand, let $(R_{1},R_{2})\in\overline{\cA}_{r,CET}$. By Lemma \ref{eliminationlemma}, this implies the existence of $l^{2}$ $(l,M_{1,l},M_{2,l})$ CET codes $\{\hat{\cP}^{(m)}_{i},\hat{\cR}^{(m)}_{i}:i\in[l^{2}]\}$ of the same rate with 
		\begin{equation}\label{maincodes}
		\frac{1}{l^{2}}\sum_{i=1}^{l^{2}}\frac{1}{M_{1,l}}\sum_{m=1}^{M_{1,l}}g_{s^{l}}(\hat{\cP}^{(m)}_{i},\hat{\cR}^{(m)}_{i})\geq 1-\epsilon\ \forall s^{l}\in S^{l}.
		\end{equation}
		Define CPTP maps
		\begin{equation*}
		\cP^{(m)}(a\otimes b):=\frac{1}{l^{2}}\sum_{i=1}^{l^{2}}\tilde{\cP}^{(i)}(a)\otimes\hat{\cP}^{(m)}_{i}(b),
		\end{equation*}
		\begin{equation*}
		\cR^{(m)}(c\otimes d):=\sum_{i=1}^{l^{2}}\tilde{\cR}^{(i)}(c)\otimes\hat{\cR}^{(m)}_{i}(d).
		\end{equation*}
		We have
		\begin{align}\label{combocodes}
		\frac{1}{M_{1,l}}\sum_{m=1}^{M_{1,l}}g_{s^{r_{l}+l}}(\cP^{(m)},\cR^{(m)})&=\nonumber\\&\frac{1}{l^{2}\cdot M_{1,l}}\sum_{m=1}^{M_{1,l}}F(\Phi^{AB},\id_{\cH_{A}^{\otimes r_{l}+l}}\otimes\sum_{i=1}^{l^{2}}\tilde{\cR}^{(i)}\otimes\hat{\cR}^{(m)}_{i}\circ\cN_{s^{r_{l}}}\otimes\cN_{s^{l}}\circ\sum_{j=1}^{l^{2}}\tilde{\cP}^{(j)}\otimes\hat{\cP}^{(m)}_{i}(\Phi^{AA}))\nonumber\\&\geq \frac{1}{l^{2}}\sum_{i=1}^{l^{2}}\frac{1}{ M_{1,l}}\sum_{m=1}^{M_{1,l}}F(\tilde{\Phi}^{AB},\id_{\cH_{A}^{\otimes r_{l}}}\otimes\tilde{\cR}^{(i)}\circ\cN_{s^{r_{l}}}\circ\tilde{\cP}^{(j)}(\tilde{\Phi}^{AA}))\nonumber\\&\times F(\hat{\Phi}^{AB},\id_{\cH_{A}^{\otimes l}}\otimes\hat{\cR}^{(m)}_{i}\circ\otimes\cN_{s^{l}}\circ\hat{\cP}^{(m)}_{i}(\hat{\Phi}^{AA})),
		\end{align}
		where $\tilde{\Phi}^{XY}$ and $\hat{\Phi}^{XY}$ are maximally entangled states. The inequality above is due to the fact that $g_{s^{r_{l}+l}}(\cP^{(m)},\cR^{(m)})$ is non-negative for all $m$ and $s^{l+r_{l}}$. Applying Lemma \ref{2epsilonlemma} on (\ref{combocodes}), given (\ref{l2crit}) and (\ref{maincodes}) we conclude
		\begin{equation*}
\frac{1}{M_{1,l}}\sum_{m=1}^{M_{1,l}}g_{s^{r_{l}+l}}(\cP^{(m)},\cR^{(m)})\geq 1-2\epsilon.
		\end{equation*}
		As $r_{l}=o(l)$, this implies $(R_{1},R_{2})\in\overline{\cA}_{d,CET}(\cJ)$. This in turn implies $\overline{\cA}_{r,CET}(\cJ)\subset\overline{\cA}_{d,CET}(\cJ)$. As the inclusion $\overline{\cA}_{d,CET}(\cJ)\subset\overline{\cA}_{r,CET}(\cJ)$ is obvious, we are done. 
	\end{proof}
	\section{Simultaneous classical message and entanglement transmission over fully quantum AVCs}\label{classicallyenhanced}
		In this section, we consider simultaneous transmission of classical messages and entanglement over an an arbitrarily varying quantum channel with a \emph{quantum jammer}.
		Let $\cN \in \cC(\cH_A \otimes \cH_J, \cH_B)$ be a quantum channel whose input space is a tensor product of a Hilbert space $\cH_A$ (the legitimate sender's space) and a Hilbert space $\cH_J$ which is under control of a quantum jammer. We consider a situation, where for each given block-length $n$, the jammer may choose any state $\eta$ on $\cH_J^{\otimes n}$ as input in order to disturb the transmission of the legitimate parties. \newline
		The \emph{Arbitrarily Varying Quantum Channel (AVQC)} generated by $\cN$ is given by the family
		\begin{align}
		\left\{\cN_{n,\sigma}(\cdot) := \cN^{\otimes n}(\cdot \otimes \sigma): \sigma \in \cS(\cH_J^{\otimes n}), n \in \bbmN \right\} \label{full_avqc_channel_description}
		\end{align}
		of CPTP maps\footnote{Although acronym "AVQC" is also used for the somewhat more restrictive channel model introduced in Section \ref{codedefinitionavcq}, it should be apparent from context, which of these models is considered.}. 
		The above channel model already has been under consideration in case of univariate transmission goals. Karumanchi et al. \cite{manchini16} utilized the postselection technique from \cite{christandl09a} to derive correlated random codes for the AVQC from good codes for the compound channel generated by $\fI := \{\cN_{\sigma}:= \cN(\cdot, \sigma) : \ \sigma \in \cS(\cH_J)\}$. This approach turned out to be successful to determine the random entanglement transmission capacity for the AVQC. In recent work  \cite{boche18}, the above mentioned techniques were used to also characterize the random classical message transmission capacity of the AVQC. Going beyond, the authors of \cite{boche18} introduced a derandomization technique to derive a dichotomy for the entanglement and classical message transmission capacities of the QAVC. 
		\emph{The deterministic capacity is zero or it equals the random capacity.} 
		We show, that the ideas of the mentioned works together with the results derived in this paper are sufficient to determine the random capacity and establish a partial characterization of the deterministic capacity in terms of a dichotomy also in case of simultaneous transmission of entanglement and classical messages. \newline 
		The definitions for the corresponding capacity regions can be easily extrapolated from the corresponding definitions in Section \ref{codedefinitionavcq} using the set of transmission maps in (\ref{full_avqc_channel_description}). We denote the \emph{random CET capacity region} of $\cN$ by $\overline{\cA}_{r,CET}(\cN)$ and the \emph{deterministic CET capacity} by $\overline{\cA}_{d,CET}(\cN)$. First, we give a characterization of the random CET capacity $\overline{\cA}_{r,CET}(\cN)$ of the AVQC with fully quantum jammer. 
		\begin{theorem} \label{thm:full_avqc_rand_cap}
			Let $\cN \in \cC(\cH_A \otimes \cH_J,  \cH_B)$, and $\fI := \{\cN_{\sigma}: \ \sigma \in \cS(\cH_J) \}$. It  holds 
			\begin{align}
			\ \overline{\cA}_{r,CET}(\cN) \ = \ \overline{C}_{CET}(\fI) \  \label{thm:full_avqc_rand_cap_1}
			\end{align}
		\end{theorem}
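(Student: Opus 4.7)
The plan is to prove the two inclusions separately, using the same overall architecture that was successful for the univariate transmission tasks in \cite{manchini16,boche18}, namely the postselection technique together with symmetrization by random permutations.

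For the converse inclusion $\overline{\cA}_{r,CET}(\cN)\subset\overline{C}_{CET}(\fI)$, observe that an arbitrary jammer may in particular choose product states $\sigma^{\otimes n}$ with $\sigma\in\cS(\cH_J)$. In that case the effective channel seen by the legitimate parties is $\cN_\sigma^{\otimes n}$ with $\cN_\sigma\in\fI$. Consequently, any random CET code for the AVQC $\cN$ is automatically a random CET code for the compound channel $\fI$ with no larger error, and by an averaging (de-randomization) argument of the type already used in the proof of Lemma \ref{eliminationlemma} one obtains a deterministic CET code for $\fI$ with an error that is at most polynomially worse. Together with Theorem \ref{mainresult} this yields the desired inclusion.

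For the direct inclusion $\overline{C}_{CET}(\fI)\subset\overline{\cA}_{r,CET}(\cN)$, let $(R_1,R_2)$ lie in the interior of $\overline{C}_{CET}(\fI)$. By Lemma \ref{dp} (and Lemma \ref{generalinput0}, which already gives exponential decay of the worst-case error over the uncertainty set), there is for each sufficiently large block-length $n$ a deterministic $(n,M_{1,n},M_{2,n})$ CET code $(\tilde{\cP}^{(m)},\tilde{\cR}^{(m)})_{m\in[M_{1,n}]}$ for the compound channel $\fI$ with rates at least $(R_1-\delta,R_2-\delta)$ and average performance
\begin{equation*}
\inf_{\sigma\in\cS(\cH_J)}\frac{1}{M_{1,n}}\sum_{m\in[M_{1,n}]} F\bigl(\ket{m}\bra{m}\otimes\Phi^{AB},\id\otimes\tilde{\cR}\circ\cN_{\sigma}^{\otimes n}\circ\tilde{\cP}^{(m)}(\Phi^{AA})\bigr)\geq 1-\epsilon_n
\end{equation*}
with $\epsilon_n\to 0$ exponentially in $n$. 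Now symmetrize this deterministic code by pre- and post-composing with a uniformly random permutation $\alpha\in\mathfrak{S}_n$ acting as $\cU_{A,\alpha}$ on $\cH_A^{\otimes n}$ and $\cU_{B,\alpha}^{-1}$ on $\cH_B^{\otimes n}$, exactly as in the robustification step of the proof of Lemma \ref{robustification}. The resulting random code has the property that, for any jammer state $\eta\in\cS(\cH_J^{\otimes n})$, its average error coincides with the average error of the unsymmetrized code evaluated on the permutation-symmetrized jammer input $\bar{\eta} := \tfrac{1}{n!}\sum_{\alpha\in\mathfrak{S}_n}U_{J,\alpha}\eta U_{J,\alpha}^{\dagger}$.

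The final step is the postselection bound of \cite{christandl09a} applied to $\bar{\eta}$: every permutation-invariant state on $\cH_J^{\otimes n}$ is dominated, in the operator ordering, by $\mathrm{poly}(n)$ times a de Finetti state $\int \sigma^{\otimes n}\,d\mu(\sigma)$ with $\mu$ a probability measure on $\cS(\cH_J)$. Because the performance function $g_{\eta}$ associated with the symmetrized code is linear and monotone in the jammer input, this domination transforms the worst-case error estimate into
\begin{equation*}
\sup_{\eta\in\cS(\cH_J^{\otimes n})} \bigl(1-\bar P(\eta)\bigr)\;\leq\; \mathrm{poly}(n)\cdot \sup_{\sigma\in\cS(\cH_J)}\bigl(1-\bar P(\sigma^{\otimes n})\bigr)\;\leq\;\mathrm{poly}(n)\cdot\epsilon_n,
\end{equation*}
which still tends to zero since $\epsilon_n$ decays exponentially. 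This realises $(R_1-\delta,R_2-\delta)$ as an achievable rate pair for the AVQC $\cN$ under random codes, and taking $\delta\to 0$ and closing yields $(R_1,R_2)\in\overline{\cA}_{r,CET}(\cN)$.

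The main obstacle is the direct part, specifically the step that lifts robustness against product jammer states $\sigma^{\otimes n}$ to robustness against arbitrary entangled jammer states $\eta\in\cS(\cH_J^{\otimes n})$. Random permutations alone only buy permutation invariance of the effective jammer input, and bridging from permutation-invariant states to convex combinations of i.i.d.\ states is precisely where the postselection technique is indispensable; the polynomial overhead it introduces is the reason we must rely on Lemma \ref{dp} in its sharper form with exponentially small error, rather than on a bare capacity statement.
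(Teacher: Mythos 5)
Your proposal is correct and takes essentially the same route as the paper: the direct inclusion is proved exactly as in Proposition \ref{prop:full_qantum_robustification} (permutation symmetrization using the covariance of $\cN^{\otimes n}$, reduction to a permutation-invariant jammer state, then the de Finetti/postselection bound of Proposition \ref{prop:post-selection}, with the polynomial overhead absorbed by the exponential error decay supplied by Lemma \ref{generalinput0}), while the reverse inclusion follows by restricting the jammer to i.i.d.\ inputs. The paper simply declares that reverse inclusion obvious, so your brief derandomization remark is just a more explicit rendering of the same step.
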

		The $\supset$ inclusion in (\ref{thm:full_avqc_rand_cap}) is obvious. To show the reverse inclusion, we will invoke the " robustification " statement in Proposition \ref{prop:full_qantum_robustification} below. In the derivations, the following representation of the permutation group $\mathfrak{S}_n$ on $n$-fold tensor product spaces plays a key role. Let for each $\pi \in \mathfrak{S}_n$, $U_\pi$ be the unitary exchanging the factors in $\cH^{\otimes n}$, i.e. 
		\begin{align*}
		U_\pi \ x_1 \otimes \cdots \otimes x_n = x_{\pi(1)} \otimes \dots \otimes x_{\pi(n)}
		\end{align*}
		for each $x_1,\dots,x_n \in \cH$. We set $\cU_\pi(\cdot) := U_\pi (\cdot) U^\ast_\pi$. In $\cU_{A,\pi}, \ \cU_{B,\pi}, \cU_{J,\pi}$ denote the corresponding maps performed on the subsystems under control of $A,B,J$ accordingly. A rather powerful result for states being invariant under permutations of the tensor factors is the following. 
		\begin{proposition}[de Finetti reduction \cite{christandl09a}] \label{prop:post-selection}
			Let $\rho \in \cS(\cH^{\otimes n})$ permutation invariant, i.e. $\cU_\pi(\rho) = \rho$ for each $\pi \in \mathfrak{S}_n$. It holds
			\begin{align*}
			\rho \ \leq \ (n+1)^{(\dim \cH)^2} \ \int \sigma^{\otimes n} d\mu(\sigma)
			\end{align*}
			with a probability measure $\mu$. 
		\end{proposition}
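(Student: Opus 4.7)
The plan is to combine a symmetric purification of $\rho$ into a doubled Hilbert space with the Schur-lemma integral representation of the symmetric-subspace projector. Concretely, I would first lift $\rho$ to a pure state on $(\cH\otimes\cH')^{\otimes n}$ that lies entirely in the symmetric subspace $\mathrm{Sym}^n(\cH\otimes\cH')$ (where $\cH'\cong\cH$), then bound this pure state above by a Haar integral over product pure states using the standard ``Schur's lemma'' identity, and finally trace out the reference system $\cH'^{\otimes n}$ to convert the integral into one over $\sigma^{\otimes n}$ with $\sigma\in\cS(\cH)$.

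For the symmetric purification, fix an orthonormal basis identification $\cH'\cong\cH$ and set $|\Omega\rangle := \sum_i |i\rangle_A|i\rangle_{A'}\in\cH^{\otimes n}\otimes\cH'^{\otimes n}$, the unnormalized maximally entangled vector over a product basis. Define $|\psi\rangle := (\sqrt{\rho}\otimes 1_{\cH'^{\otimes n}})|\Omega\rangle$; this canonical (Choi-type) purification satisfies $\mathrm{Tr}_{\cH'^{\otimes n}}|\psi\rangle\langle\psi| = \rho$. Using the ricochet identity $(X\otimes 1)|\Omega\rangle = (1\otimes X^T)|\Omega\rangle$ together with the fact that $U_\pi$ is a real orthogonal matrix in the computational basis (so that $U_\pi U_\pi^T = 1$ and $(U_\pi\otimes U_\pi)|\Omega\rangle=|\Omega\rangle$), a short calculation invoking the hypothesis $U_\pi\rho U_\pi^\ast = \rho$ gives
\begin{equation*}
(U_\pi\otimes U_\pi)|\psi\rangle = (U_\pi\sqrt{\rho}U_\pi^{-1}\otimes 1)(U_\pi\otimes U_\pi)|\Omega\rangle = (\sqrt{\rho}\otimes 1)|\Omega\rangle = |\psi\rangle,
\end{equation*}
so that $|\psi\rangle\in\mathrm{Sym}^n(\cH\otimes\cH')$ under the natural identification of $(\cH\otimes\cH')^{\otimes n}$ with $\cH^{\otimes n}\otimes\cH'^{\otimes n}$.

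For the integral step, invoke the standard identity $P_{\mathrm{sym}} = d_n\int|\phi\rangle\langle\phi|^{\otimes n}\, d\phi$, where $P_{\mathrm{sym}}$ projects onto $\mathrm{Sym}^n(\cH\otimes\cH')$, the integral runs over the Haar-uniform unit sphere of $\cH\otimes\cH'$, and $d_n = \dim\mathrm{Sym}^n(\cH\otimes\cH') = \binom{n+d^2-1}{d^2-1}\leq(n+1)^{d^2}$ with $d = \dim\cH$; this identity itself follows from Schur's lemma applied to the irreducible action of $\cU(\cH\otimes\cH')$ on the symmetric subspace, with the scalar fixed by taking a trace. Since $|\psi\rangle\langle\psi|\leq P_{\mathrm{sym}}$ (as $|\psi\rangle$ is a unit vector in its range), applying $\mathrm{Tr}_{\cH'^{\otimes n}}$ to both sides of the resulting inequality, using that partial trace preserves operator inequalities together with the factorization $\mathrm{Tr}_{\cH'^{\otimes n}}|\phi\rangle\langle\phi|^{\otimes n} = \sigma_\phi^{\otimes n}$ for $\sigma_\phi := \mathrm{Tr}_{\cH'}|\phi\rangle\langle\phi|\in\cS(\cH)$, yields
\begin{equation*}
\rho\leq(n+1)^{d^2}\int\sigma_\phi^{\otimes n}\, d\phi = (n+1)^{d^2}\int_{\cS(\cH)}\sigma^{\otimes n}\,d\mu(\sigma),
\end{equation*}
where $\mu$ is the pushforward of the Haar probability measure under $\phi\mapsto\sigma_\phi$. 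The only subtle ingredient is the verification that the canonical purification really lies in the symmetric subspace of the doubled space; the remaining work is careful bookkeeping with the identification $(\cH\otimes\cH')^{\otimes n}\cong\cH^{\otimes n}\otimes\cH'^{\otimes n}$ and standard properties of $\mathrm{Sym}^n$.
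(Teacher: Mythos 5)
Your proof is correct: the symmetric purification $(\sqrt{\rho}\otimes 1)\ket{\Omega}$ does lie in $\mathrm{Sym}^n(\cH\otimes\cH')$ (invariance under all $U_\pi\otimes U_\pi$ is exactly membership, and $U_\pi\sqrt{\rho}\,U_\pi^{-1}=\sqrt{\rho}$ follows from uniqueness of the positive square root), and the Schur-lemma identity $P_{\mathrm{sym}}=\binom{n+d^2-1}{d^2-1}\int\ket{\phi}\bra{\phi}^{\otimes n}d\phi$ together with positivity of the partial trace yields the claimed bound with $\binom{n+d^2-1}{d^2-1}\leq(n+1)^{d^2}$. The paper itself gives no proof --- it imports the proposition by citation to \cite{christandl09a} --- and your argument is precisely the standard proof from that source, so there is nothing to compare beyond noting the agreement.
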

		\begin{proposition}\label{prop:full_qantum_robustification}
			Let $\cC:=(\cP_m, \cR_m)_{m=1}^{M_1}$ be an $(n,M_1,M_2)$-CET code such that with $\lambda \in (0,1)$ 
			\begin{align*}
			\underset{\sigma \in \cS(\cH_J)}{\inf} \ \overline{P}_{CET}(\cC, \cN_{\sigma }^{\otimes n})) \ \geq 1 - \lambda
			\end{align*}
			holds. With $\cC_{\pi} := (\cU_{A,\pi}\circ \cP_m, \cR_m \circ \cU_{B,\pi^{-1}})$ for each $\pi \in \mathfrak{S}_n$, it holds
			\begin{align*}
			\underset{\tau \in \cS(\cH_J^{\otimes n})}{\inf} \ \frac{1}{n!} \sum_{\pi \in \mathfrak{S}_n} \ \overline{P}_{CET}(\cC_\pi, \cN_{n,\tau}) \ \geq 1-(n+1)^{(\dim \cH_J)^2} \cdot \lambda.
			\end{align*}
		\end{proposition}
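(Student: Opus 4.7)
The plan is to combine the de Finetti reduction of Proposition~\ref{prop:post-selection} with a twirling of the jammer's input state. Two observations drive the argument: the performance $P_\pi(\tau):=\overline{P}_{CET}(\cC_\pi,\cN_{n,\tau})$ is linear in $\tau$ (since the fidelity of a pure state with any $\rho$ is linear in $\rho$ and $\tau\mapsto \cN_{n,\tau}$ is linear), and the family $\{P_\pi\}_{\pi\in\mathfrak{S}_n}$ is compatible with permutations of the jammer input.

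First I would establish the permutation-invariance identity
\begin{equation*}
\sum_{\pi\in\mathfrak{S}_n} P_\pi(\cU_{J,\rho}(\tau)) \;=\; \sum_{\pi\in\mathfrak{S}_n} P_\pi(\tau)\qquad(\rho\in\mathfrak{S}_n).
\end{equation*}
The key input is the commutation $\cN^{\otimes n}\circ(\cU_{A,\pi}\otimes\cU_{J,\pi})=\cU_{B,\pi}\circ\cN^{\otimes n}$, which rearranges to $\cN_{n,\cU_{J,\rho}(\tau)} = \cU_{B,\rho}\circ\cN_{n,\tau}\circ\cU_{A,\rho^{-1}}$. Absorbing the extra $\cU_{B,\rho}$ and $\cU_{A,\rho^{-1}}$ factors into the code $\cC_\pi$ turns it into $\cC_{\rho^{-1}\pi}$, so $P_\pi(\cU_{J,\rho}(\tau))=P_{\rho^{-1}\pi}(\tau)$; a reindexing of the sum establishes the identity. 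Averaging over $\rho$ then lets me replace $\tau$ by its twirl $\bar\tau:=\tfrac{1}{n!}\sum_{\rho\in\mathfrak{S}_n} \cU_{J,\rho}(\tau)$, which is permutation-invariant.

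Proposition~\ref{prop:post-selection} now yields $\bar\tau \leq A$, where $A:=K\int \sigma^{\otimes n}\,d\mu(\sigma)$ and $K:=(n+1)^{(\dim\cH_J)^2}$. To turn this operator inequality into a bound on the averaged performance I would introduce the functional $E_\pi(\tau):=\mathrm{tr}(\tau)-P_\pi(\tau)$. This is linear in $\tau$ and non-negative on states, hence non-negative on the whole PSD cone by scaling, so applying it to $A-\bar\tau\geq 0$ gives $E_\pi(\bar\tau)\leq E_\pi(A)$.

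To finish, I would use that $\cN_\sigma^{\otimes n}$ is permutation-symmetric, so $P_\pi(\sigma^{\otimes n})=P_{\id}(\sigma^{\otimes n})=\overline{P}_{CET}(\cC,\cN_\sigma^{\otimes n})\geq 1-\lambda$ for every $\pi\in\mathfrak{S}_n$ and every $\sigma\in\cS(\cH_J)$. Since $\mathrm{tr}(A)=K$, this gives $E_\pi(A) = K - K\int P_\pi(\sigma^{\otimes n})\,d\mu(\sigma) \leq K\lambda$, whence
\begin{equation*}
1-\frac{1}{n!}\sum_{\pi\in\mathfrak{S}_n}P_\pi(\tau) \;=\; \frac{1}{n!}\sum_{\pi\in\mathfrak{S}_n} E_\pi(\bar\tau) \;\leq\; K\lambda
\end{equation*}
uniformly in $\tau$, which is exactly the claim. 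The subtle point to watch is that $1-P_\pi$ is only affine, not linear, so the de~Finetti operator bound does not transfer to it directly; passing through $E_\pi=\mathrm{tr}(\cdot)-P_\pi$, which is genuinely linear and positive on the PSD cone, is the clean workaround.
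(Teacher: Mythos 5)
Your proposal is correct and takes essentially the same route as the paper's proof: you use the permutation covariance $\cU_{B,\pi^{-1}}\circ\cN^{\otimes n}\circ(\cU_{A,\pi}\otimes\id)=\cN^{\otimes n}\circ(\id\otimes\cU_{J,\pi^{-1}})$ to reduce the code average to the twirled jammer state $\overline{\tau}$, and then apply the de Finetti reduction exactly as the paper does. Your functional $E_\pi=\tr(\cdot)-P_\pi$, positive on the PSD cone, is simply the dual phrasing of the paper's matrix $X=\bbmeins-\cT_\ast(\Phi)$ with $0\leq X\leq\bbmeins$, so the two arguments coincide step for step.
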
 
		\begin{proof}
			The proof closely follows the lines of \cite{manchini16}. Set $d_J := \dim \cH_J$. By permutation invariance of $\cN^{\otimes n}$, the equality 
			\begin{align}
			\cU_{B,\pi^{-1}} \circ \cN^{\otimes n} \circ (\cU_{A,\pi} \otimes \id_{\cH_J}^{\otimes n}) = \cN^{\otimes n}\circ (\id_{\cH_A}^{\otimes n} \otimes \cU_{J,\pi^{-1}}) \label{prop:full_quantum_robustification_1}
			\end{align}
			holds for each permutation $\pi \in \mathfrak{S}_n$. Using (\ref{prop:full_quantum_robustification_1}) together with the fact, that $\overline{P}_{CET}$ is an affine function of the channel, we obtain 
			\begin{align*}
			\frac{1}{n!} \sum_{\pi \in \mathfrak{S}_n} \ \overline{P}_{CET}(\cC_\pi, \cN_{n,\tau}) \ = \ \overline{P}_{CET}(\cC, \cN_{n,\overline{\tau}})
			\end{align*} 
			for each $\tau \in \cS(\cH_J^{\otimes n})$, where $\overline{\tau} := 1/n!\ \sum_{\pi \in \mathfrak{S}_n} \cU_{\pi}(\tau)$. Define 
			$\cT_\ast$ to be the Hilbert-Schmidt adjoint of the map
			\begin{align*}
			\sigma \ \mapsto \ \frac{1}{M_1} \sum_{m=1}^{M_1}\id \otimes \cR_m \circ \cN_{n,\sigma} \circ \cP_m(\Phi). 
			\end{align*}
			We write 
			\begin{align}
			1 - \overline{P}_{CET}(\cC, \cN_{n,\tau}) \ = \ \tr X \tau, \label{prop:full_quantum_robustification_2} 
			\end{align}
			with the matrix $X := \bbmeins - \cT_\ast(\Phi)$ (note that $0 \leq X \leq \bbmeins$ holds.) Using Proposition \ref{prop:post-selection} together with linearity and monotonicity of the integral, we have
			\begin{align*}
			\tr X \overline{\tau} \ 
			& \leq \ (n+1)^{d_J^2} \int \tr X\sigma^{\otimes n} \ d\mu(\sigma) \\
			& \leq  (n+1)^{d_J^2} \underset{\sigma \in \cS(\cH)}{\sup} \tr X\sigma^{\otimes n}. \\
			& \leq  (n+1)^{d_J^2} \cdot \lambda.
			\end{align*}
			Which is, by (\ref{prop:full_quantum_robustification_2}), the desired bound. 
		\end{proof} 
		\begin{proof}[Proof of Theorem \ref{thm:full_avqc_rand_cap} (Direct part)]
			The statement $\overline{C}_{CET}(\fI) \ \subset \ \overline{\cA}_{r,CET}(\cN)$ directly follows from combining the results from Section \ref{directsection} (Lemma \ref{dp} and Lemma \ref{generalinput0}) with Proposition \ref{prop:full_qantum_robustification}. Let $(\cC_n)_{n=1}^{\infty}$ be a 
			sequence of $(n, M_{1,n},M_{2,n})$-CET codes with 
			\begin{align*}
			\underset{\sigma \in \cS(\cH_J)}{\inf} \ \overline{P}_{CET}(\cC_n, \cN_\sigma^{\otimes n}) \geq 1 - 2^{-n c}
			\end{align*}
			with a constant $c > 0$	for each large enough $n$. Let $\tilde{\mu}_n$ be the uniform distribution on $\mathfrak{S}_n$, and $f(\pi) := \cC_{n,\pi}$. Then $\tilde{\mu}_n \circ f^{-1}$ is an $(n,M_{1,n}, M_{2,n})$ random CET code, such that 
			\begin{align*}
			\bbmE \left[\underset{\sigma \in \cS(\cH_J)}{\inf}\overline{P}_{CET}(\cdot, \cN_{n,\sigma})\right] \ \geq 1 - (n+1)^{d_J^2} \cdot 2^{-nc}.
			\end{align*}
			Since the right hand side of the above inequality tends to one for $n \rightarrow \infty$, every rate pair $(R_1,R_2)$ being achievable for the compound channel $\fI$, is also achievable by random codes for the AVQC $\cN$.
		\end{proof}
		Next we show, using a derandomization technique introduced in \cite{boche18}, the following statement. 
		\begin{theorem}[Dichotomy for $\overline{\cA}_{d,CET}$]\label{thm:full_avqc_dichotomy}
			$\overline{\cA}_{d,CET}(\cN)$ equals $\{(0,0)\}$ or $\overline{\cA}_{r,CET}(\cN)$
		\end{theorem}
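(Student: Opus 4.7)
The plan is to adapt the derandomization argument of Lemma \ref{Theorem} to the fully quantum jammer setting. The inclusion $\overline{\cA}_{d,CET}(\cN)\subset\overline{\cA}_{r,CET}(\cN)$ being trivial, I aim under the assumption $\overline{\cA}_{d,CET}(\cN)\neq\{(0,0)\}$ to establish the reverse inclusion. The structure mirrors the proof of Lemma \ref{Theorem}: extract a classical message transmission seed code of sublinear block length $r_{l}=o(l)$ from the non-triviality assumption, derandomize the random CET codes down to a uniform mixture of polynomially many deterministic realizations, and use the seed code to transmit which realization is to be employed.

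First, the analog of Lemma \ref{m1>m2} for the fully quantum jammer holds with the same proof, since it uses only the identity $F(\Phi^{AB},\id\otimes\cR\circ\cN_{n,\tau}\circ\cP(\Phi^{AA}))=F_{e}(\pi_{\cF_{A,l}},\cR\circ\cN_{n,\tau}\circ\cP)$ together with convexity of entanglement fidelity. This yields, for each $\epsilon>0$ and sufficiently large $l$, a deterministic $(r_{l},l^{2})$ classical message transmission code for the AVQC $\cN$ with $r_{l}=o(l)$ and average error at most $\epsilon$ uniformly over $\tau\in\cS(\cH_{J}^{\otimes r_{l}})$.

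Second, I need a fully quantum analog of Lemma \ref{eliminationlemma}: from any random $(l,M_{1,l},M_{2,l})$ CET code $\mu_{l}$ with exponentially decaying error, extract $l^{2}$ deterministic realizations whose uniform mixture has error at most $\epsilon$ against every $\tau\in\cS(\cH_{J}^{\otimes l})$. This is the main obstacle, since the classical proof union-bounds over the finite set $S^{l}$, which is unavailable when the jammer state space is continuous. My strategy is to first invoke Proposition \ref{prop:full_qantum_robustification} to reduce to permutation-symmetric code ensembles whose error, being affine in $\tau$, depends only on the symmetrized jammer state $\bar{\tau}$. By Proposition \ref{prop:post-selection}, $\bar{\tau}$ is dominated by $(l+1)^{(\dim\cH_{J})^{2}}\int\sigma^{\otimes l}\,d\mu(\sigma)$, so it suffices to bound the error against product states $\sigma^{\otimes l}$ with $\sigma\in\cS(\cH_{J})$. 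The Markov/Chernoff bound from Lemma \ref{eliminationlemma}, applied to a fixed $\sigma$, gives failure probability at most $2^{-l^{2}(\epsilon/2)}$ for the i.i.d. random selection of $l^{2}$ codes. A union bound over a $\delta$-net in $\cS(\cH_{J})$ (of cardinality polynomial in $l$, since $\dim\cH_{J}$ is fixed), combined with the Lipschitz behaviour $\|\sigma^{\otimes l}-\sigma'^{\otimes l}\|_{1}\leq l\|\sigma-\sigma'\|_{1}$, then delivers $l^{2}$ deterministic realizations valid against every product state; the polynomial factor from Proposition \ref{prop:post-selection} is absorbed by the exponential decay.

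Third, I concatenate exactly as in the proof of Lemma \ref{Theorem}, defining
\begin{equation*}
\cP^{(m)}(a\otimes b):=\frac{1}{l^{2}}\sum_{i=1}^{l^{2}}\tilde{\cP}^{(i)}(a)\otimes\hat{\cP}_{i}^{(m)}(b),\qquad\cR^{(m)}(c\otimes d):=\sum_{i=1}^{l^{2}}\tilde{\cR}^{(i)}(c)\otimes\hat{\cR}_{i}^{(m)}(d),
\end{equation*}
with $(\tilde{\cP}^{(i)},\tilde{\cR}^{(i)})$ the seed code from the first step and $(\hat{\cP}_{i}^{(m)},\hat{\cR}_{i}^{(m)})$ the derandomized codes from the second. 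For an arbitrary joint jammer state $\tau\in\cS(\cH_{J}^{\otimes(l+r_{l})})$, the induced marginals on the two sub-channels are valid states for the respective sub-codes, and the multiplicative fidelity bound used in Lemma \ref{Theorem} followed by Lemma \ref{2epsilonlemma} yields error at most $2\epsilon$ uniformly over all jammer strategies. Since $r_{l}/l\to 0$, every rate pair in $\overline{\cA}_{r,CET}(\cN)$ is achieved deterministically, establishing the claimed dichotomy.
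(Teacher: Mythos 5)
Your overall skeleton (sublinear classical seed code, derandomization of the random code to polynomially many realizations, concatenation as in Lemma \ref{Theorem}) matches the paper's strategy, but your derandomization step has a genuine gap. The reduction to product jammer states via Proposition \ref{prop:post-selection} is only available while the code ensemble is the \emph{full} uniform mixture over $\mathfrak{S}_n$: writing $1-\overline{P}_{CET}(\cC_\pi,\cN_{n,\tau})=\tr E\,\cU_{J,\pi^{-1}}(\tau)$, averaging over \emph{all} permutations replaces $\tau$ by the permutation-invariant $\overline{\tau}$, to which de Finetti applies. Once you subsample $l^{2}$ realizations $\pi_1,\dots,\pi_{l^2}$, the effective jammer state $\frac{1}{l^2}\sum_i \cU_{J,\pi_i^{-1}}(\tau)$ is no longer permutation invariant, and the adversary — who knows the selected codes — can choose an entangled, non-symmetric $\tau$ adapted to those specific permutations. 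Your scalar Chernoff bound plus a $\delta$-net then certifies smallness of $\tr \bar{E}\,\sigma^{\otimes l}$ only on product states, and control on product states does not bound $\sup_{\tau}\tr\bar{E}\tau=\|\bar{E}\|_{\infty}$ for a general effect $\bar{E}$; any attempt to quantify the deviation of the subsampled average from the symmetrized one leads to an operator-norm concentration statement, i.e., back to matrix concentration. (There is also a bookkeeping issue: the exponential decay in your argument sits in the \emph{failure probability} of the selection, not in the \emph{error} of the selected mixture, which is a constant $\epsilon/2$; the polynomial post-selection factor $(l+1)^{(\dim\cH_J)^2}$ multiplies the error, not the failure probability, so it is not absorbed as claimed.) The paper circumvents all of this by working in the Loewner order from the start: it writes $1-\overline{P}_{CET}(\cC_t,\cN_{b_n,\sigma})=\tr E_t\sigma$, converts the uniform-over-$\sigma$ guarantee into $\overline{E}\leq 2^{-b_n c}\bbmeins$ via Fact \ref{fact:trace_loewner}, and applies the Ahlswede--Winter operator Chernoff bound (Proposition \ref{prop:matrix_chernoff}) to get $\frac{1}{\tilde{M}_1}\sum_t E_t\leq(2^{-nc}+\tfrac1n)\bbmeins$ with positive probability --- a bound valid \emph{simultaneously for every} jammer state, with the dimension factor $d_J^{b_n}$ beaten by the seed code's message count $\tilde{M}_1$; no de Finetti reduction appears in the derandomization at all.

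A second, smaller gap is in your concatenation step: you invoke the multiplicative fidelity bound of Lemma \ref{Theorem} followed by Lemma \ref{2epsilonlemma}, but that factorization requires the channel to be a tensor product across the seed/payload cut, which fails when $\tau$ is entangled across the $a_n/b_n$ blocks --- then $\cN_{n,\tau}\neq\cN_{a_n,\sigma_1}\otimes\cN_{b_n,\sigma_2}$ and the fidelity of the concatenated code is not a product of sub-code fidelities. The paper instead exploits that the relevant performance is affine in $\tau$ with tensor-product effects, $\tr\bigl(F^{(1)}_t\otimes F^{(2)}_{t,m}\bigr)\tau$, and uses the operator inequality $A\otimes B\geq\bbmeins\otimes\bbmeins-(\bbmeins-A)\otimes\bbmeins-\bbmeins\otimes(\bbmeins-B)$ to obtain the \emph{additive} bound $1-\tr(\bbmeins-F^{(1)}_t)\sigma_1-\tr(\bbmeins-F^{(2)}_{t,m})\sigma_2$ in terms of the marginals; your correct intuition that "the induced marginals are valid states for the respective sub-codes" is salvaged exactly by this operator argument, not by multiplicativity. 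Your first step (transferring Lemma \ref{m1>m2} to extract a positive-rate classical seed code from $\overline{\cA}_{d,CET}(\cN)\neq\{(0,0)\}$) is sound and consistent with the paper.
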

		\begin{remark}
			The above statement quantifies the deterministic capacity region of the AVQC up to a blind spot. It is an open question whether or not there are channels for which $\overline{\cA}_{d,CET}(\cN) = \{(0,0)\}$ and $\{(0,0) \} \subsetneq \overline{\cA}_{r, CET}(\cN)$ does happen. 
		\end{remark}
		\begin{fact} \label{fact:trace_loewner}
			For a Hermitian matrix $A \in \cL(\cH)$, and $\alpha > 0$, it holds $A \leq \alpha \bbmeins$ if and only if $\tr \sigma A \ \leq 
			\alpha$ for all $\sigma \in \ \cS(\cH)$.
		\end{fact}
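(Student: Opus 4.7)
The plan is to prove the stated equivalence by treating the two implications separately, both of which reduce to elementary facts about positive semidefinite operators once the correct reformulation is chosen. Throughout, I will work with the observation that a Hermitian matrix $B$ is positive semidefinite if and only if $\braket{\psi|B|\psi}\geq 0$ for every unit vector $\psi\in\cH$, and equivalently if and only if $\tr(\sigma B)\geq 0$ for every $\sigma\in\cS(\cH)$. The fact to prove is essentially the self-duality of the positive semidefinite cone under the Hilbert-Schmidt inner product, specialized to the trace-one slice.

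For the forward direction, I would assume $A\leq \alpha\bbmeins$, which by definition means $B:=\alpha\bbmeins-A\geq 0$. For any $\sigma\in\cS(\cH)$, both $\sigma$ and $B$ are positive semidefinite, so $\sqrt{\sigma}B\sqrt{\sigma}\geq 0$ and hence
\begin{equation*}
0\ \leq\ \tr(\sqrt{\sigma}B\sqrt{\sigma})\ =\ \tr(\sigma B)\ =\ \alpha\tr(\sigma)-\tr(\sigma A)\ =\ \alpha-\tr(\sigma A),
\end{equation*}
where in the last step I used $\tr(\sigma)=1$. Rearranging gives $\tr(\sigma A)\leq\alpha$, uniformly in $\sigma$.

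For the reverse direction, I would assume $\tr(\sigma A)\leq\alpha$ for all $\sigma\in\cS(\cH)$, and specialize to pure states. For an arbitrary unit vector $\psi\in\cH$, the rank-one projector $\ket{\psi}\bra{\psi}$ is a state, so
\begin{equation*}
\braket{\psi|A|\psi}\ =\ \tr(\ket{\psi}\bra{\psi}\,A)\ \leq\ \alpha\ =\ \braket{\psi|\alpha\bbmeins|\psi}.
\end{equation*}
Hence $\braket{\psi|(\alpha\bbmeins-A)|\psi}\geq 0$ for every unit vector $\psi$. Since $\alpha\bbmeins-A$ is Hermitian, this nonnegativity of all diagonal matrix elements in every orthonormal basis forces $\alpha\bbmeins-A\geq 0$, i.e.\ $A\leq\alpha\bbmeins$.

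There is essentially no obstacle: both implications are one-line consequences of (i) $\tr(XY)\geq 0$ whenever $X,Y\geq 0$, and (ii) the variational characterization of the Loewner order by expectation values on pure states. The only choice to make is to test against pure states in the harder direction, which is legitimate because pure states are themselves elements of $\cS(\cH)$; alternatively, one could invoke a spectral decomposition of $\alpha\bbmeins-A$ and test against the normalized projectors onto eigenspaces with negative eigenvalue to derive a contradiction. Either route yields the claim immediately.
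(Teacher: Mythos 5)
Your proof is correct and complete. The paper states this result as a Fact without supplying any proof, so there is nothing to compare against; your argument --- deducing $\tr(\sigma A)\leq\alpha$ from $\tr\bigl(\sqrt{\sigma}(\alpha\bbmeins-A)\sqrt{\sigma}\bigr)\geq 0$ in one direction, and testing against the pure states $\ket{\psi}\bra{\psi}\in\cS(\cH)$ in the other, which recovers $\braket{\psi|(\alpha\bbmeins-A)|\psi}\geq 0$ for all unit vectors and hence $A\leq\alpha\bbmeins$ --- is the standard one and fills this gap correctly.
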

		\begin{proposition}[\cite{ahlswede02}, Theorem 19] \label{prop:matrix_chernoff}
			Let $X_1,\dots, X_T$ be i.i.d. hermitian random matrices with 
			$0 \leq X_i \leq \bbmeins$ a.s. for all $i \in [T]$, and $\bbmE X_1 \ \leq \ m \bbmeins \leq a \bbmeins \leq A$. Then 
			\begin{align*}
			\mathbb{P}\left(\frac{1}{T} \sum_{t=1}^T X_t \ \geq a \bbmeins_{\cH} \right) \ \leq \ \dim \cH \cdot \exp(- T 2 (a-m)^2)
			\end{align*}
		\end{proposition}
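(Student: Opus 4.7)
The plan is to follow the standard Ahlswede--Winter proof: pass from the operator-valued tail event to a scalar one via an exponential Markov inequality applied to the trace, then decouple the i.i.d. summands by iterating Golden--Thompson, and finally optimize the Chernoff parameter.

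First, I would note that $\frac{1}{T}\sum_t X_t \geq a\bbmeins$ (meaning the maximal eigenvalue exceeds $a$) is equivalent to $\lambda_{\max}(S_T) \geq Ta$ with $S_T := \sum_{t=1}^T X_t$. For any $\lambda > 0$, passing to the trace of the matrix exponential and applying scalar Markov yields the classical bound
\begin{equation*}
\mathbb{P}\bigl(\lambda_{\max}(S_T) \geq Ta\bigr) \ \leq \ e^{-\lambda Ta}\,\bbmE\tr e^{\lambda S_T},
\end{equation*}
since $e^{\lambda S_T} \geq e^{\lambda Ta} P$ on the relevant spectral projection $P$ and $\tr e^{\lambda S_T} \geq e^{\lambda Ta}$ on the event.

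Next, the moment-generating trace must be decoupled. The obstacle is non-commutativity: the $X_t$ need not commute, so one cannot simply factor exponentials. Here I would apply the Golden--Thompson inequality $\tr e^{A+B} \leq \tr(e^A e^B)$ together with independence step by step. Conditioning on $X_1,\dots,X_{T-1}$,
\begin{equation*}
\bbmE\tr e^{\lambda S_T} \ \leq \ \bbmE\tr\bigl[e^{\lambda S_{T-1}}\,\bbmE e^{\lambda X_T}\bigr].
\end{equation*}
If $\bbmE e^{\lambda X_T} \leq c(\lambda)\,\bbmeins$ holds as an operator inequality, this scalar factor can be pulled out of the trace, and iterating $T$ times yields $\bbmE\tr e^{\lambda S_T} \leq \dim\cH \cdot c(\lambda)^T$.

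The single-variable MGF bound comes from the scalar convexity estimate $e^{\lambda x} \leq 1 + (e^\lambda - 1)x$ valid for $x \in [0,1]$, which lifts to an operator inequality on the spectrum of $X$ since $0 \leq X \leq \bbmeins$. Taking expectations and using $\bbmE X \leq m\bbmeins$ gives $\bbmE e^{\lambda X} \leq (1+(e^\lambda-1)m)\bbmeins \leq e^{(e^\lambda-1)m}\bbmeins$. Plugging this back,
\begin{equation*}
\mathbb{P}\bigl(\lambda_{\max}(S_T) \geq Ta\bigr) \ \leq \ \dim\cH\,\exp\!\bigl(-T[\lambda a - (e^\lambda-1)m]\bigr),
\end{equation*}
and the optimal $\lambda = \ln\!\bigl(\tfrac{a(1-m)}{m(1-a)}\bigr)$ produces the binary Kullback--Leibler exponent $D(a\|m)$. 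Invoking Pinsker's inequality $D(a\|m) \geq 2(a-m)^2$ (valid for $a,m \in [0,1]$) yields the advertised $\exp(-2T(a-m)^2)$ bound.

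The delicate point is the inductive Golden--Thompson step: each application only handles two exponentials, so one must alternate conditioning on already-processed variables with a single factorization. Once the MGF is known to be bounded by a scalar multiple of $\bbmeins$ the induction becomes routine, because scalars commute with the remaining matrix factor and can be extracted from the trace without further loss.
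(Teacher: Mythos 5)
Your approach is the standard Ahlswede--Winter argument, which is exactly how the cited source establishes this proposition (the paper itself gives no proof, only the reference to \cite{ahlswede02}): trace-exponential Markov, iterated Golden--Thompson with independence, the operator MGF bound from the scalar convexity estimate $e^{\lambda x}\leq 1+(e^\lambda-1)x$ on $[0,1]$, and the factor $\dim\cH$ from $\tr\,\bbmeins$. One cosmetic remark: the event $\frac{1}{T}\sum_t X_t\geq a\bbmeins$ is \emph{not} equivalent to $\lambda_{\max}(S_T)\geq Ta$ --- in the L\"owner order it says the \emph{minimal} eigenvalue of the average is at least $a$ --- but the implication runs in the direction you need, so bounding the larger event $\{\lambda_{\max}(S_T)\geq Ta\}$ is sufficient.

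There is, however, a genuine inconsistency in your final step. After relaxing $\bbmE e^{\lambda X}\leq\bigl(1+(e^\lambda-1)m\bigr)\bbmeins\leq e^{(e^\lambda-1)m}\bbmeins$, your displayed tail bound carries the exponent $\lambda a-(e^\lambda-1)m$. Its supremum over $\lambda>0$ (attained at $e^\lambda=a/m$) is the Bennett/Poisson exponent $a\ln(a/m)-a+m$, which is strictly smaller than $D(a\|m)$ for $a,m\in(0,1)$ and in general does \emph{not} dominate $2(a-m)^2$: for $m=0.9$, $a=1$ it equals $\ln(10/9)-0.1\approx 0.0054$, while $2(a-m)^2=0.02$. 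So, as written, no choice of $\lambda$ --- in particular not your $\lambda=\ln\frac{a(1-m)}{m(1-a)}$ --- recovers the advertised bound from that display. The repair is one line: skip the relaxation $1+u\leq e^u$ and optimize $\lambda a-\ln\bigl(1+(e^\lambda-1)m\bigr)$ directly. With your stated optimizer one finds $1+(e^\lambda-1)m=\frac{1-m}{1-a}$, and the exponent evaluates exactly to $D(a\|m)=a\ln\frac{a}{m}+(1-a)\ln\frac{1-a}{1-m}$; Pinsker's inequality in nats then gives $D(a\|m)\geq 2(a-m)^2$, completing the proof as claimed.
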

		\begin{proof}[Proof of Theorem \ref{thm:full_avqc_dichotomy}]
			We consider the non-trivial case $\overline{\cA}_{d,CET} \ \neq \{(0,0)\}$. \newline Let $(R_1, R_2) \in \overline{\cA}_{r,CET}(\cN) \setminus \{(0,0)\}$. We aim to show that $(R_1,R_2)$ is also achievable with deterministic codes. Since $\overline{\cA}_{d,CET} \neq \{(0,0)\}$, we find, for each large enough blocklength $n$ an $(n,\tilde{M}_1,\tilde{M}_2)$-CET code $\cC^{(1)} := (\cP^{(1)}_m, \cR^{(1)}_m)_{m=1}^{\tilde{M}_1}$ with $\tilde{M}_1 \geq 2^{l \tilde{R}}$, where $\tilde{R}>0$ is a constant, and 
			\begin{align}
			\underset{\sigma \in \cS(\cH_J^{\otimes l})}{\inf}\overline{P}_{CET}(\cC^{(1)}, \cN_{n,\sigma}) \geq 1 - \epsilon_l
			\end{align}
			with $\epsilon_n \rightarrow 0$ for $n \rightarrow \infty$. Set for each $n$, $a_n := \lceil 2\log n / \tilde{R} \rceil$, and $b_n := n - a_n$, i.e. $n = a_n + b_n$. If $n$ is large enough, 
			we have a random $(n,M_1, M_2)$-CET code  $\mu_{b_n}$ such that 
			$\frac{1}{b_n}\log M_i \geq R_i - \delta$, for $i = 1,2$, and 
			\begin{align}
			\bbmE_{\mu_{b_n}} \underset{\sigma \in \cS(\cH_J^{\otimes b_n})}{\inf} \overline{P}_{CET}(\cdot, \cN_{n,\sigma}) \ \geq 1 - 2^{-b_nc}. \label{thm:full_avqc_dichotomy_2} 
			\end{align}
			For simplicity, we assume $\mu_{b_n}$ to be finitely supported on $\{\cC_1,\dots,\cC_{T'}\}$ (which is possible by the explicit construction of a finite random code in Proposition \ref{prop:full_qantum_robustification}). Note, that we can write 
			\begin{align}
			1 - \overline{P}_{CET}(\cC_t, \cN_{b_n,\sigma}) \ = \ \tr E_t \sigma \label{thm:full_avqc_dichotomy_3}
			\end{align}
			with a matrix $0\leq E_t\leq\mathbbm{1}$ for each $t \in [T']$. By (\ref{thm:full_avqc_dichotomy_2}) and (\ref{thm:full_avqc_dichotomy_3}), together with linearity of expectation, it holds
			\begin{align}
			\underset{\sigma \in \cH_J^{\otimes b_n}}{\inf}\tr \overline{E}\sigma \
			= \ \bbmE_{\mu_{b_n}}\left[ 1 - \underset{\sigma \in \cH_J^{\otimes b_n}}{\inf} \overline{P}_{CET}(\cdot, \cN_{n,\sigma}) \right]  \ 
			\leq 2^{-b_nc}, \label{thm:full_avqc_dichotomy_3_a}
			\end{align}
			where we defined $\overline{E} := \bbmE_{\mu_{b_n}} E_t$. By Fact \ref{fact:trace_loewner}, combined with the bound in (\ref{thm:full_avqc_dichotomy_3_a}), $\overline{E} \leq 2^{-nc}\bbmeins$. Let $X^{(1)},\dots X^{(\tilde{M}_1)}$ be i.i.d. random matrices, each distributed according to $\mu_{b_n}$. By Proposition \ref{prop:matrix_chernoff}, It holds
			\begin{align}
			\mathbb{P} \left(\frac{1}{\tilde{M}_1}\sum_{t=1}^{\tilde{M}_1} E_{t} \geq \left(2^{-nc} + \frac{1}{n}\right) \bbmeins \right) \ \leq \ d_J^{b_n} \exp \left(- \tilde{M}_1/n^2 \right). \label{thm:full_avqc_dichotomy_4}
			\end{align}
			By our choice of $a_n$, the RHS of (\ref{thm:full_avqc_dichotomy_4}) is strictly smaller than one for each large enough $n$. Therefore, we find $\cC_1,\dots, \cC_{\tilde{M}_1}$ such that 
			\begin{align*}
			1 - \frac{1}{\tilde{M}_1} \sum_{t=1}^{\tilde{M}_1}\overline{P}_{CET}(\cC_t, \cN_{n,\sigma}) \ \leq \ 2^{-b_n c} + \frac{1}{n} := \gamma_n
			\end{align*}
		holds. Let $\cC_t = (\cP_{t,m}^{(2)}, \cR_{t,m}^{(2)})_{m=1}^{M_1}$. We define an $(n,M_1,M_2)$ deterministic CET code $\cC = (\cP_m, \cR_m)_{m=1}^{M_1}$ with
			\begin{align*}
			\cP_m := \frac{1}{\tilde{M}_1} \sum_{t=1}^{\tilde{M}_1}  \cP_t^{(1)}(\pi_1) \otimes \cP_{t,m}^{(2)}, \hspace{.5cm} \text{and} \hspace{1cm} \cR_m \ := \ \tr_{\cH^{\otimes a_n}}\circ \sum_{t=1}^{\tilde{M_1}} \cR^{(1)}_{t} \otimes \cR^{(2)}_{t,m}
			\end{align*}
			To evaluate the fidelity of the above code, we notice, that for each $\sigma \in \cS(\cH_J^{\otimes n}), t \in [\tilde{M}_1], m \in [M_1]$
			\begin{align}
				F(\Phi_1 \otimes \Phi_2, \id \otimes \circ \cR_{t}^{(1)} \otimes \cR_{t,m}^{(2)} \circ \cN_{n,\sigma}\circ \cP_t^{(1)} \otimes
				\cP_{t,m}^{(2)}(\Phi_1 \otimes \Phi_2)) = \tr F_t^{(1)} \otimes F_{t,m}^{(2)} \sigma \label{av_gleichung}
			\end{align}
			holds with effects $F^{(1)}_t$, $F^{(2)}_{t,m}$. This is advantageous, since    \begin{align*}
				\frac{1}{\tilde{M}_1}\sum_{t=1}^{\tilde{M}_1}\tr F^{(1)}_{t} \tau \ = \ \overline{P}_{CET}(\cC^{(1)}, \cN_{a_n, \tau}) \ \geq \ 1 - \epsilon_n,
			\end{align*}
			and
			\begin{align*}
				\frac{1}{M_1}\frac{1}{\tilde{M}_1}\sum_{t=1}^{\tilde{M}_1}\sum_{m=1}^{M_1}\tr
				F^{(2)}_{t,m} \tau \ = \ \overline{P}_{CET}(\cC_t^{(2)}, \cN_{b_n,
					\tau})  \ \geq 1 - \gamma_n.
			\end{align*}
			We have for each $\sigma \in \cS(\cH_J^{\otimes n})$
			\begin{align}
				&\overline{P}_{CET}(\cC, \cN_{n,\sigma}) \nonumber \\
				&= \frac{1}{M_1 \tilde{M}_1} \sum_{t,t' =1}^{\tilde{M}_1} \sum_{m=1}^{M_1}
				F\left(\Phi_2, \id_{\cH_J^{\otimes b_n}} \otimes \tr_{\cH_J^{\otimes a_n}} \circ \cR^{(1)}_{t'}\otimes \cR^{(2)}_{t',m} \circ \cN_{n,\sigma} \circ \cP_t^{(1)}(\pi_1)\otimes
				\cP_{t,m}^{(2)}(\Phi_2)\right) \nonumber \\
				&\geq \frac{1}{M_1 \tilde{M}_1} \sum_{t,t' =1}^{\tilde{M}_1} \sum_{m=1}^{M_1}
				F\left(\Phi_1 \otimes \Phi_2, \id_{\cH_J^{\otimes b_n}} \otimes  \cR^{(1)}_{t'}\otimes \cR^{(2)}_{t',m} \circ \cN_{n,\sigma} \circ \cP_t^{(1)}\otimes \cP_{t,m}^{(2)}(\Phi_1 \otimes \Phi_2)\right) \nonumber  \\
				&\geq \frac{1}{M_1 \tilde{M}_1} \sum_{t =1}^{\tilde{M}_1} \sum_{m=1}^{M_1}
				F\left(\Phi_1 \otimes \Phi_2, \id_{\cH_J^{\otimes b_n}} \otimes  \cR^{(1)}_{t}\otimes \cR^{(2)}_{t,m} \circ \cN_{n,\sigma} \circ \cP_t^{(1)}\otimes \cP_{t,m}^{(2)}(\Phi_1 \otimes \Phi_2)\right) \nonumber \\
				& = \frac{1}{M_1 \tilde{M}_1} \sum_{t =1}^{\tilde{M}_1} \sum_{m=1}^{M_1}
				\tr F^{(1)}_t \otimes F^{(2)}_{t,m} \sigma. 
				\label{thm:full_avqc_dichotomy_5}
			\end{align}
			The first inequality above is by monotonicity of the fidelity under CPTP maps. The last equality is from (\ref{av_gleichung}). Now, let $\sigma_1$ be the marginal of $\sigma$ on the first $a_n$ tensor factors of $\cH_J^{\otimes n}$, and $\sigma_2$ the marginal on the last $b_n$ tensor factors. 	\begin{align*} 
			A \otimes B  \ \geq \  \bbmeins \otimes \bbmeins - \bbmeins \otimes (\bbmeins - B) - (\bbmeins - A) \otimes \bbmeins 
			\end{align*}
			which holds for any two matrices $0 \leq A,B \leq \bbmeins$. We have
			\begin{align}
			\tr F^{(1)}_t \otimes F^{(2)}_{t,m} \sigma \ \geq 1 - \ \tr  (\bbmeins - F^{(1)}_t)   \sigma_1 -  \tr (\bbmeins - F^{(2)}_{t,m}) \sigma_2 \label{thm:full_avqc_dichotomy_6}
			\end{align}
			Combining (\ref{thm:full_avqc_dichotomy_5}), and (\ref{thm:full_avqc_dichotomy_6}), we can bound
			\begin{align*}
			\overline{P}_{SET}(\cC, \cN_{n,\sigma}) \ 
			& \geq  \ P(\cC^{(1)}, \cN_{a_n,\sigma_1}) +  \frac{1}{\tilde{M}_1} \ \sum_{t=1}^{\tilde{M}1}P(\cC^{(2)}_t, \cN_{b_n,\sigma_2}) - 1 
			\end{align*}
			Minimizing over all states on $\cH_J^{\otimes n}$, we obtain
			\begin{align*}
			\underset{\sigma \in \cS(\cH_J^{\otimes n})}{\inf} \overline{P}_{SET}(\cC, \cN_{n,\sigma}) \ \geq 1 - \epsilon_n - \gamma_n.
			\end{align*}
			The right hand side approaches one for $n \rightarrow \infty$. Since also $\frac{a_n}{n} \rightarrow 0$ and $\frac{b_n}{n} \rightarrow 1$ for $n \rightarrow \infty$, it is clear, that we achieve $(R_1,R_2)$ with the codes defined.
			
		\end{proof}
			
	\section{Concluding remarks and future work}
	We have developed universal codes for simultaneous transmission of classical information and entanglement under possible jamming attacks by a third malignant party. In the compound channel model, the quantum part of information transmission was done under two important scenarios of entanglement transmission and entanglement generation. The present random codes differ from those used for the perfectly known channel in \cite{idev}. We therefore did not need to approximate our input random codes by an i.i.d state (one with tensor product structure). Also, we evaded BSST type lemmas used in \cite{boche17} by using basic concavity properties of von Neumann entropy. Future work will hopefully include another important scenario under which quantum information is transmitted, namely subspace transmission. An equivalence statement between the strong subspace transmission and entanglement transmission has been proven in \cite{advnoise}. Recently in \cite{gisbert}, an instance of the present classically enhanced codes was used for universal coding of multiple access quantum channels, where one of the senders shares classical messages with the receiver while the other sends quantum information.\\
	Theorem \ref{avcqmainresult} does not make a positive statement about the structure of $\overline{\cA}_{r,CET}$ in the case where $\overline{\cA}_{d,CET}=\{(0,0)\}$. In \cite{bochenozel}, the authors have constructed an example of a channel where the intersection of  $\overline{\cA}_{r,CET}$ with the $x$-axis is positive and $\overline{\cA}_{d,CET}=\{(0,0)\}$. Future work will consider the structure of the non-zero region in this case along both axes.\\
	The capacity region characterized in Theorem \ref{mainresult} is of a multi-letter nature (requiring a
	limit over many uses of the channel) but might reduce to a single-letter formula for specific cases of compound channels, which is in itself an interesting question to be considered in future work. Ensuing this question, one might suggest  formulas for these capacity regions that offer a more useful characterization. This means that the alternative characterization could entail larger one-shot regions compared to our $\hat{C}(\mathcal{N}_{s},p,\Psi)$. An instance of such a characterization in the case of perfectly known quantum channels exists in \cite{e2} Theorem 5. Therein however, the authors note that their one-shot trapezoids is the same as rectangular regions offered in \cite{devetak}, when one considers the union over all the one-shot, one-state regions. The converse statement for compound channels implies that other such characterizations, must also reduce to ours. Reduction to single-letter formulas is nevertheless an important criterion when comparing different characterizations. \\
	Today, in classical systems, secure communication is obtained by applying cryptographic methods upon available reliable- communication schemes. Security of the resulting protocol, that can hence be separated into two protocols (one responsible for reliability and the other for security), relies on assumptions such as non-feasibility of certain tasks or the limited computational
	capabilities of illegal receivers. For the next generation of classical communication systems, it is expected that different applications (e.g. secure message transmission, broadcasting of common messages and message transmission), are all implemented by physical coding or " physical layer service integration " schemes (see \cite{
		physlayer}). For quantum systems that offer a larger variety of services, \cite{devetak, e2, e3} were  the first papers in this line of research. The present paper develops solutions for different models of
	channel uncertainty that are unavoidable when implementing such integrated services in real-world communication. Following up on the results of \cite{gisbert}, an interesting direction for future work is towards finding the solution to the arbitrarily varying model for multiple access and broadcast channels as a key step in development of quantum networks.
	 
	\section*{Acknowledgments}
	This work was supported by the BMBF via grant 16KIS0118K.
	\appendix
\section{Approximation of compound channels using nets}\label{mosonyicodes}
	
	\begin{definition}\label{netdef}
		A $\tau$-net in $\cC(\cH, \cK)$ is a finite set $\{\cN_{i}\}_{i=1}^{T}$ with the property that for each $\cN \in\cC(\cH, \cK)$ there is at least
		one $i \in \{1,...,T\}$ with $\parallel\cN -\cN_{i}\parallel_{\diamond} <\tau$. 
	\end{definition}
	Existence of $\tau$-nets in $\cC(\cH, \cK)$ is guaranteed by the compactness
	of $\cC(\cH, \cK)$. The next lemma contains an upper bound on the minimal cardinality of $\tau$-nets. 
	\begin{lemma}(see e.g. \cite{boche18} Lemma 7)\label{lemma14}
		For any $\tau \in (0, 1]$, there is a $\tau$-net $\{\cN_{i}\}_{i=1}^{T}$ in $\cC(\cH, \cK)$ with $T \leq 
		(\frac{3}{\tau})^{2(d\cdot d')^{2}}$, where $d = \dim \cH$
		and $d' = \dim \cK$.
	\end{lemma}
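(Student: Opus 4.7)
The proof will be a standard volumetric covering argument on a finite-dimensional real normed space. First I observe that the space of linear maps $\cL(\cH) \to \cL(\cK)$ is a complex vector space of complex dimension $d^2 \cdot d'^2 = (dd')^2$, hence a real vector space of dimension $n := 2(dd')^2$. On this space $\|\cdot\|_{\diamond}$ is a norm, and since the text already records that $\|\cN\|_{\diamond} = 1$ for every $\cN \in \cC(\cH,\cK)$, the set $\cC(\cH,\cK)$ lies in the closed unit ball $B := \{T : \|T\|_{\diamond} \leq 1\}$.

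My plan is to produce a $\tau$-covering of $B$; by inclusion this automatically yields a $\tau$-net for $\cC(\cH,\cK)$ of the same cardinality, and the definition of $\tau$-net in the text imposes no requirement that the centres themselves lie in $\cC(\cH,\cK)$. The key tool is the classical packing bound for the unit ball of a real $n$-dimensional normed space: there exists a $\tau$-covering of $B$ with at most $(1 + 2/\tau)^{n}$ centres. I intend to prove this in the standard one-line way: choose a maximal $\tau$-separated subset $\{\cN_1,\dots,\cN_T\}$ of $B$ (maximality automatically makes it a $\tau$-net for $B$); then the open $\|\cdot\|_{\diamond}$-balls of radius $\tau/2$ around the $\cN_i$ are pairwise disjoint and all contained in $(1 + \tau/2) B$, so a Lebesgue volume comparison on the underlying $n$-dimensional real vector space gives $T \cdot (\tau/2)^{n} \leq (1 + \tau/2)^{n}$ and hence $T \leq (1 + 2/\tau)^{n}$.

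To finish, I use the elementary estimate $1 + 2/\tau \leq 3/\tau$ valid for $\tau \in (0,1]$, which with $n = 2(dd')^{2}$ produces the advertised bound $T \leq (3/\tau)^{2(dd')^{2}}$. There is no substantial obstacle here; the only bookkeeping point worth flagging is the correct identification of the real dimension of the ambient space, and the tacit but standard fact that Lebesgue volume on a finite-dimensional real normed space is well-defined up to a normalisation that cancels in the ratio used in the volume comparison.
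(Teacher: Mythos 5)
Your proposal is correct and follows essentially the same route as the paper's source for this statement: the paper gives no proof of its own but cites \cite{boche18}, where Lemma 7 is established by exactly this standard volumetric argument, viewing $\cC(\cH,\cK)$ inside the diamond-norm unit ball of the $2(d\cdot d')^{2}$-dimensional real vector space of linear maps and bounding a maximal $\tau$-separated set by Lebesgue-volume comparison to get $T\leq(1+2/\tau)^{2(d\cdot d')^{2}}\leq(3/\tau)^{2(d\cdot d')^{2}}$ for $\tau\in(0,1]$. Your bookkeeping points (real dimension $2(dd')^{2}$, net centres not required to lie in $\cC(\cH,\cK)$ under Definition \ref{netdef}, and the normalisation-independence of the volume ratio) are all accurate and suffice.
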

Given a net in $\cC(\cH,\cK)$, any compound channel generated by $\cJ\subset\cC(\cH,\cK)$ can be approximated by one of its finite subsets. This is the subject of the following lemma. 
	
	\begin{lemma}\label{net}(see e.g Lemma 13 \cite{boche17})
		Given any compound channel generated by $\cJ\subset\cC(\cH,\cK)$, one can construct a finite set $\cJ_{\tau}$ with the following properties:
		\begin{enumerate}
			\item $\cJ_{\tau}\subset \cJ$,
			\item $|\cJ_{\tau}|\leq(\frac{6}{\tau})^{2(d\cdot d')^{2}}$ with $d,d'$ the dimensions of $\cH,\cK$ respectively and
			\item for all $\cN\in\cJ,\exists\cN'\in\cJ_{\tau}$ such that $\parallel\cN-\cN'\parallel_{\diamond}\leq 2\tau$.
		\end{enumerate}
	\end{lemma}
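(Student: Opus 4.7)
The plan is to derive Lemma \ref{net} from the ambient net statement in Lemma \ref{lemma14} via a standard selection-plus-triangle-inequality argument. The subtlety to handle is that a $\tau$-net sitting in $\cC(\cH,\cK)$ is not automatically a subset of $\cJ$, so I would sacrifice a factor of two in precision to replace each net center by a nearby representative drawn from $\cJ$ itself.

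Concretely, I would first apply Lemma \ref{lemma14} with parameter $\tau$ to obtain a $\tau$-net $\{\cN_i\}_{i=1}^T \subset \cC(\cH,\cK)$ with $T \leq (3/\tau)^{2(d\cdot d')^2}$. For each index $i \in \{1,\dots,T\}$ I would then check whether the open diamond ball $B_\tau(\cN_i) := \{\cM \in \cC(\cH,\cK) : \|\cM - \cN_i\|_\diamond < \tau\}$ intersects $\cJ$; whenever this intersection is nonempty, I pick (via the axiom of choice) a single representative $\cN_i' \in B_\tau(\cN_i) \cap \cJ$, and otherwise discard the index. Setting $\cJ_\tau := \{\cN_i' : i \text{ selected}\}$, property 1 ($\cJ_\tau \subset \cJ$) holds by construction, and property 2 follows from $|\cJ_\tau| \leq T \leq (3/\tau)^{2(d\cdot d')^2} \leq (6/\tau)^{2(d\cdot d')^2}$.

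For property 3, I would fix an arbitrary $\cN \in \cJ$. By the net property of $\{\cN_i\}_{i=1}^T$, there exists some index $i$ with $\|\cN - \cN_i\|_\diamond < \tau$, so in particular $B_\tau(\cN_i) \cap \cJ \ne \emptyset$ (it contains $\cN$ itself). Hence this index was among those selected, and the corresponding representative $\cN_i' \in \cJ_\tau$ satisfies, by the triangle inequality in $\|\cdot\|_\diamond$,
\begin{equation*}
\|\cN - \cN_i'\|_\diamond \;\leq\; \|\cN - \cN_i\|_\diamond + \|\cN_i - \cN_i'\|_\diamond \;<\; \tau + \tau \;=\; 2\tau,
\end{equation*}
which is exactly the asserted approximation.

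There is essentially no main obstacle: the argument is routine once Lemma \ref{lemma14} is in hand. The only design choice is cosmetic, namely whether to instead apply Lemma \ref{lemma14} at scale $\tau/2$ to obtain a tighter $\tau$-approximation (matching the sharper cardinality $(6/\tau)^{2(d\cdot d')^2}$ tightly) or to accept the factor-of-two loss in precision as in the statement. Either way, the only ingredients used are Lemma \ref{lemma14}, the triangle inequality for the diamond norm, and the elementary observation that a ball intersecting $\cJ$ contains at least one element of $\cJ$ to serve as the selected representative.
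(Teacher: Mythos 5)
Your proposal is correct, and it coincides with the standard argument behind this lemma: the paper itself gives no proof, citing Lemma~13 of \cite{boche17}, whose proof is exactly this selection-plus-triangle-inequality construction from the ambient $\tau$-net of Lemma~\ref{lemma14} (with the cardinality bound $(3/\tau)^{2(d\cdot d')^{2}}\leq(6/\tau)^{2(d\cdot d')^{2}}$ absorbed as you note). Your observation that one could alternatively run the argument at scale $\tau/2$ to trade the constant in the radius against the one in the cardinality is accurate but immaterial to the statement as given.
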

	\section{Auxiliary results}\label{auxresults}
	In this section we state some results for reader's convenience. 
	\begin{lemma}\label{appendixlemma}(\cite{fidelitydev})
		Let $\Psi,\rho,\sigma\in\cS(\cK)$ and let $\Psi$ be pure. Then
		\begin{equation*}
		F(\Psi,\rho)\geq F(\Psi,\sigma)-\frac{1}{2}\parallel\rho-\sigma\parallel_{1}
		\end{equation*}
	\end{lemma}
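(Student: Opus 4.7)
The plan is to exploit purity of $\Psi$ to linearize the fidelity in its second argument, and then apply the variational characterization of the trace distance (i.e.\ the Helstrom bound) to bound the linear functional $\rho \mapsto \tr(\Psi \rho)$.

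First, since $\Psi$ is pure, write $\Psi = \ket{\psi}\bra{\psi}$. The fidelity between a pure state and an arbitrary state reduces to an overlap: for every $\rho \in \cS(\cK)$ one has $F(\Psi,\rho) = \braket{\psi,\rho\psi} = \tr(\Psi\rho)$, and likewise $F(\Psi,\sigma) = \tr(\Psi\sigma)$. Consequently the difference is a trace of a single operator against the Hermitian, traceless difference $\rho - \sigma$:
\begin{equation*}
F(\Psi,\rho) - F(\Psi,\sigma) \;=\; \tr\bigl(\Psi(\rho-\sigma)\bigr).
\end{equation*}

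Second, I would invoke the standard variational formula for the trace distance (Helstrom's bound): for any Hermitian operator $X$ with $\tr X = 0$ and any effect $0 \leq M \leq \bbmeins$, we have
\begin{equation*}
\bigl|\tr(M X)\bigr| \;\leq\; \tfrac{1}{2}\|X\|_{1}.
\end{equation*}
This follows by decomposing $X = X_+ - X_-$ into its positive and negative parts and noting that $\tr X = 0$ forces $\tr X_+ = \tr X_- = \tfrac{1}{2}\|X\|_{1}$, so $\tr(MX) = \tr(M X_+) - \tr(M X_-)$ lies in $[-\tfrac{1}{2}\|X\|_1,\tfrac{1}{2}\|X\|_1]$. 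The rank-one projection $\Psi$ satisfies $0 \leq \Psi \leq \bbmeins$, so applying this bound with $X = \rho - \sigma$ and $M = \Psi$ yields
\begin{equation*}
\tr\bigl(\Psi(\rho-\sigma)\bigr) \;\geq\; -\tfrac{1}{2}\|\rho-\sigma\|_{1}.
\end{equation*}
Combining the two displayed inequalities gives the claim $F(\Psi,\rho) \geq F(\Psi,\sigma) - \tfrac{1}{2}\|\rho-\sigma\|_{1}$.

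There is essentially no obstacle here; the only subtlety is avoiding the naive Hölder bound $|\tr(\Psi(\rho-\sigma))| \leq \|\Psi\|_\infty\,\|\rho-\sigma\|_1 = \|\rho-\sigma\|_1$, which is off by a factor of two. Using the tracelessness of $\rho - \sigma$ through the Helstrom bound (equivalently, the Jordan decomposition of $\rho-\sigma$) recovers the sharp constant $\tfrac{1}{2}$ that appears in the statement.
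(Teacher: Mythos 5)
Your proof is correct, and it is complete where the paper is not: the paper states this lemma only with a citation to \cite{fidelitydev} and gives no proof of its own, and your argument is the standard one underlying that reference. Both key steps are sound — the linearization $F(\Psi,\cdot)=\tr(\Psi\,\cdot)$ for pure $\Psi$ is exactly the identity the paper records in its notation section, and your use of tracelessness of $\rho-\sigma$ via the Jordan decomposition (rather than the naive H\"older estimate) is precisely what recovers the sharp constant $\tfrac{1}{2}$, since $0\leq\Psi\leq\bbmeins$ makes the Helstrom-type bound $|\tr(\Psi(\rho-\sigma))|\leq\tfrac{1}{2}\|\rho-\sigma\|_{1}$ applicable. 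Nothing further is needed.
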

	\begin{lemma}(\cite{boche18})\label{entriwise}
		Let $L$ and $D$ be $N\times N$ matrices with non-negative entries which satisfy
		\begin{equation*}
		L_{jl}\leq L_{jj}, L_{jl}\leq L_{ll}
		\end{equation*}
		and 
		\begin{equation*}
		D_{jl}\leq\max\{D_{jj},D_{ll}\}
		\end{equation*}
		for all $j,l\in\{1,...,N\}$. Then
		\begin{equation*}
		\sum_{j,l=1}^{N}\frac{1}{N}\sqrt{L_{jl}D_{jl}}\leq 2\sum_{j=1}^{N}\sqrt{L_{jj}D_{jj}}.
		\end{equation*}
	\end{lemma}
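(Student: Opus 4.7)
The plan is to reduce the bound to a pointwise estimate on the entries $\sqrt{L_{jl}D_{jl}}$ in terms of the two diagonal quantities $\sqrt{L_{jj}D_{jj}}$ and $\sqrt{L_{ll}D_{ll}}$, after which the double sum collapses by symmetry. The hypotheses are tight enough that no Cauchy--Schwarz or AM--GM reasoning is needed; everything will follow from a short case split.

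Fix a pair $(j,l)$ and, without loss of generality, assume $L_{jj}\leq L_{ll}$. The hypothesis then gives $L_{jl}\leq L_{jj}$. For the $D$ factor, I would split into two cases. If $D_{jj}\geq D_{ll}$, the hypothesis $D_{jl}\leq\max\{D_{jj},D_{ll}\}=D_{jj}$ combined with $L_{jl}\leq L_{jj}$ yields $L_{jl}D_{jl}\leq L_{jj}D_{jj}$. If instead $D_{jj}<D_{ll}$, then $D_{jl}\leq D_{ll}$ and $L_{jl}\leq L_{jj}\leq L_{ll}$, so $L_{jl}D_{jl}\leq L_{ll}D_{ll}$. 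In either case, taking square roots,
\begin{equation*}
\sqrt{L_{jl}D_{jl}}\;\leq\;\max\!\bigl\{\sqrt{L_{jj}D_{jj}},\sqrt{L_{ll}D_{ll}}\bigr\}\;\leq\;\sqrt{L_{jj}D_{jj}}+\sqrt{L_{ll}D_{ll}}.
\end{equation*}

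With this pointwise bound in hand, I would sum over $j,l\in\{1,\dots,N\}$ and multiply by $1/N$:
\begin{equation*}
\sum_{j,l=1}^{N}\frac{1}{N}\sqrt{L_{jl}D_{jl}}\;\leq\;\frac{1}{N}\sum_{j,l=1}^{N}\bigl(\sqrt{L_{jj}D_{jj}}+\sqrt{L_{ll}D_{ll}}\bigr).
\end{equation*}
Each of the two terms in the right-hand double sum depends on only one index, so summing the free index contributes a factor $N$ that cancels the prefactor $1/N$, giving exactly $\sum_{j}\sqrt{L_{jj}D_{jj}}+\sum_{l}\sqrt{L_{ll}D_{ll}}=2\sum_{j=1}^{N}\sqrt{L_{jj}D_{jj}}$.

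The only real content is the case analysis in the first paragraph, so I do not anticipate any substantial obstacle. The one subtlety worth flagging is that the two hypotheses act in opposite directions (lower control on $L_{jl}$ via the minimum, upper control on $D_{jl}$ via the maximum), and it is precisely the combination $L_{jl}\leq\min\{L_{jj},L_{ll}\}$ together with $D_{jl}\leq\max\{D_{jj},D_{ll}\}$ that forces the product $L_{jl}D_{jl}$ to be dominated by one of the two diagonal products rather than merely by their average; this is what makes the simple bound with constant $2$ possible.
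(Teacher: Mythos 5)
Your proof is correct, and since the paper states Lemma \ref{entriwise} without proof (citing \cite{boche18}), the relevant comparison is with the argument in that reference, which is essentially the same: a pointwise case split showing $L_{jl}D_{jl}\leq\max\{L_{jj}D_{jj},\,L_{ll}D_{ll}\}$, then $\sqrt{L_{jl}D_{jl}}\leq\sqrt{L_{jj}D_{jj}}+\sqrt{L_{ll}D_{ll}}$, and collapsing the double sum to get the constant $2$. Your WLOG reduction $L_{jj}\leq L_{ll}$ is legitimate because the hypotheses and the target bound are symmetric in $j$ and $l$, so there is no gap.
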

	
	\begin{lemma}\label{projectionlemma}(\cite{boche17} Lemma 3)
Let $\rho\in\cS(\cH)$ for some Hilbert space $\cH$. Let, for some other Hilbert space $\cK$, $\cA\in\cC(\cH,\cK)$, $\cD\in\cC(\cK,\cH)$, $q\in\cL(\cK)$ be an orthogonal projection. If for some $\epsilon>0$, $F_{e}(\rho,\cD\circ Q\circ\cA)\geq 1-\epsilon$ holds, then we have 
\begin{equation*}
F_{e}(\rho,\cD\circ\cA)\geq 1-3\epsilon.
\end{equation*}
	\end{lemma}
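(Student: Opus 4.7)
The plan is to reduce the claim to a polarization identity for entanglement fidelities. Fix Kraus representations $\cA(a) = \sum_i A_i a A_i^{\ast}$ and $\cD(b) = \sum_j D_j b D_j^{\ast}$, and recall that for any purification $\psi\in\cH\otimes\cH$ of $\rho$ and any completely positive map $\cN$ with Kraus family $\{N_k\}$ the entanglement fidelity has the Schumacher expansion $F_{e}(\rho,\cN) = \sum_k |\tr(\rho\, N_k)|^2$. Introducing
\begin{equation*}
\alpha_{ij} := \tr(\rho\, D_j\, q\, A_i), \qquad \beta_{ij} := \tr(\rho\, D_j\, q^\perp\, A_i),
\end{equation*}
where $q^\perp := \bbmeins_{\cK} - q$, one immediately reads off
$F_{e}(\rho,\cD\circ Q\circ\cA) = \sum_{ij}|\alpha_{ij}|^2$, $F_{e}(\rho,\cD\circ Q^\perp\circ\cA) = \sum_{ij}|\beta_{ij}|^2$ with $Q^\perp(a):=q^\perp a q^\perp$, $F_{e}(\rho,\cD\circ\cA) = \sum_{ij}|\alpha_{ij}+\beta_{ij}|^2$, and $F_{e}(\rho,\cD\circ\cU\circ\cA) = \sum_{ij}|\alpha_{ij}-\beta_{ij}|^2$, where $\cU(b) := U b U^{\ast}$ is conjugation by the Hermitian unitary $U:=q-q^\perp$.

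The parallelogram identity $|\alpha+\beta|^2+|\alpha-\beta|^2=2|\alpha|^2+2|\beta|^2$, applied entrywise and summed, then rearranges to
\begin{equation*}
F_{e}(\rho,\cD\circ\cA) \;=\; 2\,F_{e}(\rho,\cD\circ Q\circ\cA) + 2\,F_{e}(\rho,\cD\circ Q^\perp\circ\cA) - F_{e}(\rho,\cD\circ\cU\circ\cA).
\end{equation*}
The map $\cD\circ\cU\circ\cA$ is CPTP, so its entanglement fidelity is at most $1$; the middle summand on the right is nonnegative; and the hypothesis supplies $F_{e}(\rho,\cD\circ Q\circ\cA)\geq 1-\epsilon$. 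Plugging in these three inputs yields $F_{e}(\rho,\cD\circ\cA) \geq 2(1-\epsilon) - 1 = 1-2\epsilon$, which is in fact tighter than the stated $1-3\epsilon$.

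There is no hidden obstacle once the correct CPTP "completion" of the trace-decreasing map $Q$ is chosen. Replacing $Q$ by the pinching channel $Q+Q^\perp$ alone does not give enough, but adjoining the alternating-sign partner $\cU$ provides exactly the polarization identity that converts Kraus-level control of $\cD\circ Q\circ\cA$ into Kraus-level control of $\cD\circ\cA$. Everything else reduces to the standard Schumacher formula for $F_{e}$ and the upper bound $F_{e}\leq 1$ for CPTP maps, so the argument sidesteps any gentle-measurement detour and the $\sqrt{\epsilon}$-losses it would incur.
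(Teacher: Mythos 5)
Your proof is correct, and it in fact establishes the sharper bound $F_{e}(\rho,\cD\circ\cA)\geq 1-2\epsilon$, which implies the stated $1-3\epsilon$. Every step checks: the Schumacher expansion $F_{e}(\rho,\cN)=\sum_{k}|\tr(\rho N_{k})|^{2}$ is valid for any completely positive, possibly trace-decreasing map on $\cL(\cH)$ and is independent of the chosen Kraus family; the product families $\{D_{j}qA_{i}\}$, $\{D_{j}q^{\perp}A_{i}\}$, $\{D_{j}A_{i}\}$ and $\{D_{j}UA_{i}\}$ are legitimate Kraus representations of the four maps involved (here it matters that $\cD\in\cC(\cK,\cH)$ maps back to $\cH$, so all traces are well defined); $U=q-q^{\perp}$ is a Hermitian unitary, so $\cD\circ\cU\circ\cA$ is CPTP and $F_{e}(\rho,\cD\circ\cU\circ\cA)\leq 1$; and summing the parallelogram identity entrywise gives exactly your rearrangement, whence $F_{e}(\rho,\cD\circ\cA)\geq 2(1-\epsilon)-1=1-2\epsilon$.

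As for comparison: the paper does not prove this lemma at all — it is imported verbatim from \cite{boche17} (Lemma 3 there), so yours is necessarily an independent argument. Measured against the standard ways such projection-removal statements are obtained — expanding in Kraus operators, splitting $\bbmeins_{\cK}=q+q^{\perp}$, and \emph{estimating} the interference terms, or invoking gentle measurement — your route is genuinely cleaner: any Cauchy--Schwarz bound on the cross terms $\sum_{ij}\overline{\alpha}_{ij}\beta_{ij}$ only yields an error of order $\sqrt{\epsilon}$ (indeed $F_{e}(\rho,\cD\circ Q^{\perp}\circ\cA)\leq\tr(q^{\perp}\cA(\rho))\leq 1-F_{e}(\rho,\cD\circ Q\circ\cA)\leq\epsilon$, so that route gives $1-\epsilon-2\sqrt{\epsilon}$, which is strictly weaker than $1-3\epsilon$ for small $\epsilon$), whereas adjoining the sign-flipped partner $\cU$ cancels the cross terms identically instead of estimating them. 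Since the lemma is used in the paper only to pass from (\ref{Q}) to (\ref{Q1}), where any loss linear in the error preserves the exponential decay, your version is a drop-in replacement with a better constant.
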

	\begin{lemma}\label{5boche}(see e.g. \cite{boche17} Lemma 5)
		There is a real number $\bar{c}>0$ such that for every Hilbert space $\cH$, there exist functions $h':\mathbb{N}\to\mathbb{R}^{+}$, $\phi:(0,1/2)\to\mathbb{R}^{+}$ with $\lim_{l\to\infty}h'(l)=0$ and $\lim_{\delta\to 0}\phi(\delta)=0$, such that for $\rho\in\cS(\cH)$, $\delta\in(0,1/2)$, $l\in\mathbb{N}$, there is an orthogonal projection $q_{\delta,l}$ called the frequency typical projection satisfying
		\begin{enumerate}
			\item $\tr(\rho^{\otimes l}q_{\delta,l})\geq 1-2^{-l(\bar{c}\delta^{2}-h'(l))}$
			\item $q_{\delta,l}\rho^{\otimes l}q_{\delta,l}\leq 2^{-(S(\rho^{\otimes l})-l\phi(\delta))}q_{\delta,l}$.
		\end{enumerate}
	\end{lemma}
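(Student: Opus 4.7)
The plan is to diagonalize $\rho$ and reduce both claims to classical concentration statements about empirical frequencies of i.i.d.\ samples. First I would take a spectral decomposition $\rho = \sum_{x \in \cX} p(x) \ket{x}\bra{x}$ in an orthonormal basis indexed by a finite alphabet $\cX$ of cardinality $|\cX| = \dim \cH$, so that
\[
\rho^{\otimes l} \;=\; \sum_{x^l \in \cX^l} p^l(x^l)\, \ket{x^l}\bra{x^l}, \qquad p^l(x^l) := \prod_{i=1}^l p(x_i).
\]
Following the notation of (\ref{defoftypset}), I would then define the frequency typical projection
\[
q_{\delta,l} \;:=\; \sum_{x^l \in T_{p,\delta}^l} \ket{x^l}\bra{x^l},
\]
which commutes with $\rho^{\otimes l}$ and projects onto the span of the $p^l$-typical product basis states. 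Both desired inequalities then reduce to statements about the indices $x^l \in T_{p,\delta}^l$.

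For the first property, $\tr(\rho^{\otimes l} q_{\delta,l})$ equals $\mathbb{P}_{X^l \sim p^l}(X^l \in T_{p,\delta}^l)$, the probability that the empirical frequencies of an i.i.d.\ $l$-sample from $p$ all lie within $\delta$ of the true probabilities. For each $x \in \cX$ with $p(x) > 0$, $N(x|X^l)$ is a sum of $l$ independent Bernoulli$(p(x))$ variables, so Hoeffding's inequality yields $\mathbb{P}(|N(x|X^l)/l - p(x)| > \delta) \leq 2 \cdot 2^{-2l\delta^2/\ln 2}$; letters with $p(x) = 0$ contribute $N(x|X^l) = 0$ almost surely and require no bound. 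A union bound over the $|\cX|$ letters then gives $\mathbb{P}(X^l \notin T_{p,\delta}^l) \leq 2|\cX| \cdot 2^{-2l\delta^2/\ln 2}$, which I would rewrite as $2^{-l(\bar{c}\delta^2 - h'(l))}$ with a universal constant $\bar{c} := 2/\ln 2$ and $h'(l) := (1 + \log|\cX|)/l \to 0$, yielding the first claim.

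For the second property, I would exploit the diagonal form to note that on the range of $q_{\delta,l}$ every eigenvalue of $\rho^{\otimes l}$ equals $p^l(x^l)$ for some $x^l \in T_{p,\delta}^l$. For any such $x^l$,
\[
-\frac{1}{l}\log p^l(x^l) \;=\; -\sum_{x :\, p(x) > 0} \frac{N(x|x^l)}{l}\, \log p(x),
\]
and substituting $|N(x|x^l)/l - p(x)| \leq \delta$ yields
\[
\Bigl| -\frac{1}{l}\log p^l(x^l) - H(p) \Bigr| \;\leq\; \delta \sum_{x :\, p(x) > 0} |\log p(x)| \;=:\; \phi(\delta),
\]
with $\phi(\delta) \to 0$ as $\delta \to 0$. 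Since $S(\rho^{\otimes l}) = l H(p)$, this gives $p^l(x^l) \leq 2^{-(S(\rho^{\otimes l}) - l\phi(\delta))}$ for every typical $x^l$, hence the operator inequality $q_{\delta,l} \rho^{\otimes l} q_{\delta,l} \leq 2^{-(S(\rho^{\otimes l}) - l\phi(\delta))}\, q_{\delta,l}$.

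The one genuine subtlety is ensuring the logarithms above remain well-defined when $\rho$ has eigenvalues equal to zero; this is precisely the purpose of the second clause in the definition of $T_{p,\delta}^l$ in (\ref{defoftypset}), which forces $N(x|x^l) = 0$ exactly when $p(x) = 0$ and removes vanishing letters from every sum. A secondary point is that the constant $\sum_{x : p(x)>0}|\log p(x)|$ absorbed into $\phi(\delta)$ depends on the smallest non-zero eigenvalue of $\rho$; this is consistent with the lemma's downstream invocations in Lemma~\ref{5} and Lemma~\ref{6}, where only $\lim_{\delta \to 0}\phi(\delta) = 0$ is used. With this bookkeeping settled, both properties follow directly from Hoeffding's inequality and the explicit log-likelihood estimate on typical sequences.
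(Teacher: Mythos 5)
Your overall route --- diagonalize $\rho$, take the frequency-typical projection, Hoeffding plus a union bound for the trace property, and a log-likelihood estimate on typical words for the operator bound --- is the standard argument behind this lemma (the paper itself offers no proof and simply cites \cite{boche17}), and your treatment of property 1 is correct provided the support clause in (\ref{defoftypset}) is read in the one direction you actually use, namely $p(x)=0 \Rightarrow N(x|x^{l})=0$. Under the literal ``$\iff$'' no uniform bound of the stated form could hold anyway: a letter with $0<p(x)\leq\delta$ fails to occur in $X^{l}$ with probability $(1-p(x))^{l}$, which is not $O(2^{-l\bar{c}\delta^{2}})$ uniformly in $\rho$, so your union bound would miss a typicality-violating event of non-negligible probability. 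The one-directional reading is the intended one (and the lemma only asserts existence of \emph{some} projection, so one may simply take the one-directional typical set), but since you explicitly invoke the clause as ``exactly when'', you should have flagged that your estimate covers only the forward direction.

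The genuine gap is in property 2: your $\phi(\delta)=\delta\sum_{x:p(x)>0}|\log p(x)|$ depends on $\rho$, blowing up as the smallest nonzero eigenvalue of $\rho$ tends to zero, whereas the lemma quantifies $\phi$ \emph{before} $\rho$: one fixed pair $h',\phi$ per Hilbert space must serve all states $\rho\in\cS(\cH)$. You dismiss this as harmless on the grounds that downstream only $\lim_{\delta\to 0}\phi(\delta)=0$ is used, but that is not accurate: in the proof of Lemma \ref{premlemma} the parameters $\delta'$ and $l_{0}$ are fixed through the condition $\gamma(\delta')+\phi(\delta')+\check{c}h(l_{0})\leq\delta/2$ \emph{before} the net parameter $\tau$ is tuned (there $\tau$ depends on $l$), and the typical projections are taken for output states of channels varying over the growing net $\cJ_{\tau}$ of Lemma \ref{net}; a state-dependent $\phi$ would destroy exactly the uniformity that this order of choices requires. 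The repair stays entirely within your framework: for $x^{l}\in T_{p,\delta}^{l}$ with type $\lambda$ (so $\lambda\ll p$ by the support clause and $\parallel\lambda-p\parallel_{1}\leq\delta\dim\cH$), write
\begin{equation*}
-\frac{1}{l}\log p^{l}(x^{l}) \;=\; -\sum_{x}\lambda(x)\log p(x) \;=\; H(\lambda)+D(\lambda\|p)\;\geq\; H(\lambda),
\end{equation*}
where $H$ is the Shannon entropy and $D$ the relative entropy, and then bound $H(\lambda)\geq H(p)-\phi(\delta)$ by continuity of entropy (Fannes--Audenaert, \cite{fannes73}), which yields a $\phi$ with $\lim_{\delta\to 0}\phi(\delta)=0$ depending only on $d=\dim\cH$ (for instance $\phi(\delta)=\frac{d\delta}{2}\log d + h_{2}(\min\{d\delta/2,1/2\})$ with $h_{2}$ the binary entropy). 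Since $S(\rho^{\otimes l})=lH(p)$, the operator inequality $q_{\delta,l}\rho^{\otimes l}q_{\delta,l}\leq 2^{-(S(\rho^{\otimes l})-l\phi(\delta))}q_{\delta,l}$ then follows exactly as in your diagonal argument, now with the quantifiers in the order the lemma claims.
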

	\begin{lemma}(see e.g. \cite{boche17} Lemma 6)\label{6boche}
		Let $\mathcal{H}$ and $\mathcal{K}$ be finite dimensional Hilbert spaces. There are functions $\gamma:(0,1/2)\to\mathbb{R}^{+}$ and $h':\mathbb{N}\to\mathbb{R}^{+}$ satisfying $\lim_{\delta\to 0}\gamma(\delta)=0$ and $h'(l)\searrow 0$, such that for each $\mathcal{N}\in \cC(\mathcal{H},\mathcal{K})$, $\delta\in (0,1/2)$, $l\in\mathbb{N}$ and maximally mixed state $\pi_{\cG}$ on some $\mathcal{G}\subset\mathcal{H}$, there is an operation $\mathcal{N}_{\delta,l}\in \cC^{\downarrow}(\mathcal{H}^{\otimes l},\mathcal{K}^{\otimes l})$, called the reduced operation with respect to $\mathcal{N}$ and $\pi_{\cG}$, satisfying
		\begin{enumerate}
			\item $\text{tr}(\mathcal{N}_{\delta,l}(\pi_{\cG}^{\otimes l}))\geq 1-2^{-l(c'\delta^{2}-h'(l))}$, with universal constant $c'>0$.
			\item $\mathcal{N}_{\delta,l}$ has a Kraus representation with at most $n_{\delta,l}\leq 2^{S_{e}(\pi_{\cG}^{\otimes l},\mathcal{N}^{\otimes l})+l(\gamma(\delta)+h'(l))}$ Kraus operators.
			\item For every state $\rho\in\cS(\mathcal{H}^{\otimes l})$ and every two channels $\mathcal{M}\in \cC^{\downarrow}(\mathcal{H}^{\otimes l},\mathcal{H}^{\otimes l})$ and $\mathcal{L}\in \cC^{\downarrow}(\mathcal{K}^{\otimes l},\mathcal{H}^{\otimes l})$, the inequality
			\begin{equation*}
			F_{e}(\rho,\mathcal{L}\circ\mathcal{N}_{\delta,l}\circ\mathcal{M})\leq F_{e}(\rho,\mathcal{L}\circ\mathcal{N}^{\otimes l}\circ\mathcal{M})
			\end{equation*}
			is fulfilled.
		\end{enumerate}
	\end{lemma}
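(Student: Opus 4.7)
The plan is to construct $\cN_{\delta,l}$ by restricting the Kraus operators of $\cN^{\otimes l}$ (in a Stinespring basis adapted to the state $\hat{\cN}(\pi_\cG)$) to those indexed by classically $\delta$-frequency-typical sequences on the environment, and then to verify the three listed properties by direct computation.

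First I would fix a Stinespring isometry $v:\cH\to\cK\otimes\cH_e$ with $\cN(\rho)=\tr_{\cH_e}(v\rho v^{\ast})$, and diagonalize the state induced on the environment by $\pi_\cG$,
\begin{equation*}
\sigma:=\hat{\cN}(\pi_\cG)=\sum_k \lambda_k \ket{e_k}\bra{e_k}.
\end{equation*}
In the eigenbasis $\{\ket{e_k}\}$ of $\cH_e$, define Kraus operators $A_k$ of $\cN$ by $A_k\psi:=\braket{e_k|v\psi}$. A direct calculation gives $\tr(A_k\pi_\cG A_k^{\ast})=\braket{e_k|\hat{\cN}(\pi_\cG)|e_k}=\lambda_k$. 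Tensoring, the operators $A_{k^l}:=A_{k_1}\otimes\cdots\otimes A_{k_l}$ form a Kraus decomposition of $\cN^{\otimes l}$ and inherit the product weights $\tr(A_{k^l}\pi_\cG^{\otimes l}A_{k^l}^{\ast})=\prod_i\lambda_{k_i}=:\lambda_{k^l}$. Thus the distribution induced on Kraus labels by the input $\pi_\cG^{\otimes l}$ is exactly the i.i.d.\ product law $\lambda^{\otimes l}$; this reduction to a classical typicality problem on the environment is the hinge of the whole argument.

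Next I would let $T_{\delta,l}$ denote the set of $\delta$-frequency-typical sequences for $\lambda$ (in the sense of Section \ref{preliminaryresults}) and define
\begin{equation*}
\cN_{\delta,l}(\rho):=\sum_{k^l\in T_{\delta,l}} A_{k^l}\,\rho\,A_{k^l}^{\ast}.
\end{equation*}
This is CP, and its Kraus operators are literally a subset of those of $\cN^{\otimes l}$, so $\cN_{\delta,l}\in\cC^{\downarrow}(\cH^{\otimes l},\cK^{\otimes l})$. Property (1) is then immediate: $\tr(\cN_{\delta,l}(\pi_\cG^{\otimes l}))=\sum_{k^l\in T_{\delta,l}}\lambda_{k^l}\geq 1-2^{-l(c'\delta^2-h'(l))}$ by the standard Chernoff bound for frequency-typical sets, with $c'>0$ universal and the alphabet-size prefactor absorbed into $h'(l)$. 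Property (2) follows from $|T_{\delta,l}|\leq 2^{lH(\lambda)+l\gamma(\delta)+lh'(l)}$ together with the identification $H(\lambda)=S(\sigma)=S_e(\pi_\cG,\cN)$ and additivity of entropy exchange under tensor products, which gives $lH(\lambda)=S_e(\pi_\cG^{\otimes l},\cN^{\otimes l})$. Property (3) is the cleanest: decompose $\cN^{\otimes l}=\cN_{\delta,l}+\Delta$, where $\Delta(\rho):=\sum_{k^l\notin T_{\delta,l}}A_{k^l}\rho A_{k^l}^{\ast}$ is CP; linearity of $F_e$ in the channel slot then yields
\begin{equation*}
F_e(\rho,\cL\circ\cN^{\otimes l}\circ\cM)-F_e(\rho,\cL\circ\cN_{\delta,l}\circ\cM)=F_e(\rho,\cL\circ\Delta\circ\cM)\geq 0,
\end{equation*}
since the entanglement fidelity of any CP map is non-negative.

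The main obstacle is really a bookkeeping one: the functions $\gamma$ and $h'$ are required to be universal, independent of $\cN$, $\cG$, and $\dim\cH_e$. Since $\dim\cH_e\leq\dim\cH\cdot\dim\cK$ is bounded in terms of the ambient spaces, the alphabet-size constants appearing in the typicality estimates can be absorbed into the vanishing slack $h'(l)$; no deeper difficulty arises once the environment basis has been diagonalized as above. This matches the universal form of the typicality functions $\phi$, $h'$ extracted in Lemma \ref{5boche}, from which the present statement can be viewed as the Kraus-side analogue.
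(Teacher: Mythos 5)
Your proposal is correct and follows essentially the same route as the proof in \cite{boche17} (Lemma 6, going back to Klesse's reduced-operation technique), which the present paper cites without reproving: diagonalize the environment state $\hat{\cN}(\pi_{\cG})$, retain exactly the Kraus operators of $\cN^{\otimes l}$ indexed by frequency-typical environment sequences, and obtain property (1) from the Hoeffding bound for typical sets, property (2) from the typical-set cardinality bound together with $S_{e}(\pi_{\cG},\cN)=S(\hat{\cN}(\pi_{\cG}))$ and additivity of the entropy exchange, and property (3) from writing $\cN^{\otimes l}=\cN_{\delta,l}+\Delta$ with $\Delta$ completely positive and using linearity and positivity of $F_{e}$ in the channel slot. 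Your closing remark on uniformity is also the right one: choosing $\dim\cH_{e}\leq\dim\cH\cdot\dim\cK$ and using type counting with Fannes-type continuity makes $\gamma$ and $h'$ depend only on $\cH$ and $\cK$, as the statement requires.
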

	\begin{lemma}[Gentle measurement]\label{gmeasurement} (see e.g.\cite{wilde13})
		Let $\rho\in\cS(\cH)$ and $0\leq\Lambda\leq\mathbbm{I}$ with
		\begin{equation*}
		\tr(\Lambda\rho)\geq 1-\epsilon
		\end{equation*}
		for some $0\leq\epsilon< 1$. Then for $\rho':=\frac{\sqrt{\Lambda}\rho\sqrt{\Lambda}}{\tr(\Lambda\rho)}$ we have
		\begin{equation*}
		\parallel\rho-\rho'\parallel_{1}\leq 2\sqrt{\epsilon}.
		\end{equation*}
	\end{lemma}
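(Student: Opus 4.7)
The plan is to split the bound via the triangle inequality into a "measurement" contribution and a "renormalization" contribution, namely
\begin{equation*}
\|\rho-\rho'\|_1 \ \leq \ \|\rho - \tilde{\rho}\|_1 + \|\tilde{\rho} - \rho'\|_1,
\end{equation*}
where $\tilde{\rho}:=\sqrt{\Lambda}\rho\sqrt{\Lambda}$ is the unnormalized post-measurement state. The bulk of the work will be in the first term; the second will be a trivial rescaling estimate.

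For the first term I would use the operator decomposition $\rho - \sqrt{\Lambda}\rho\sqrt{\Lambda} = (\mathbbm{1}-\sqrt{\Lambda})\rho + \sqrt{\Lambda}\rho(\mathbbm{1}-\sqrt{\Lambda})$, apply the triangle inequality in $\|\cdot\|_1$, and control each resulting term by the matrix Cauchy--Schwarz inequality $\|AB\|_1 \leq \|A\|_2\|B\|_2$ applied to the factorization $\rho = \sqrt{\rho}\cdot\sqrt{\rho}$. That reduces the task to bounding $\|(\mathbbm{1}-\sqrt{\Lambda})\sqrt{\rho}\|_2^2 = \tr((\mathbbm{1}-\sqrt{\Lambda})^2 \rho)$. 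Here the key operator inequality is $(\mathbbm{1}-\sqrt{\Lambda})^2 \leq \mathbbm{1}-\Lambda$, which follows from the scalar estimate $(1-\sqrt{x})^2 \leq 1-x$ for $x\in[0,1]$ together with the spectral calculus on $0\leq\Lambda\leq\mathbbm{1}$. The hypothesis $\tr(\Lambda\rho)\geq 1-\epsilon$ then yields $\|(\mathbbm{1}-\sqrt{\Lambda})\sqrt{\rho}\|_2 \leq \sqrt{\epsilon}$, so each of the two terms in the decomposition contributes at most $\sqrt{\epsilon}$ in trace norm, giving $\|\rho-\tilde{\rho}\|_1 \leq 2\sqrt{\epsilon}$.

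For the second term, note that $\rho' = \tilde{\rho}/\tr(\Lambda\rho)$ is just a positive scalar multiple of $\tilde{\rho}$, so
\begin{equation*}
\|\tilde{\rho}-\rho'\|_1 \ = \ \bigl|1 - \tfrac{1}{\tr(\Lambda\rho)}\bigr|\cdot\tr(\Lambda\rho) \ = \ 1 - \tr(\Lambda\rho) \ \leq \ \epsilon.
\end{equation*}
Combining via the triangle inequality I obtain $\|\rho-\rho'\|_1 \leq 2\sqrt{\epsilon}+\epsilon$. The main obstacle is that this elementary split produces the slightly weaker constant $2\sqrt{\epsilon}+\epsilon$ rather than the tight $2\sqrt{\epsilon}$ claimed in the statement; to recover the latter one must either invoke a sharper joint analysis of the renormalization and measurement perturbation (e.g. exploiting that the rescaling only shifts the state along the direction $\tilde{\rho}$, which partially cancels the leading term in $\rho-\tilde{\rho}$), or simply accept that the constant is slightly loose, which is harmless in the applications where this lemma is used in the paper.
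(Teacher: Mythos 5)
Your elementary argument is sound line by line: the decomposition $\rho-\sqrt{\Lambda}\rho\sqrt{\Lambda}=(\mathbbm{1}-\sqrt{\Lambda})\rho+\sqrt{\Lambda}\rho(\mathbbm{1}-\sqrt{\Lambda})$, the Cauchy--Schwarz reduction to $\tr((\mathbbm{1}-\sqrt{\Lambda})^{2}\rho)\leq\tr((\mathbbm{1}-\Lambda)\rho)\leq\epsilon$ via the scalar inequality $(1-\sqrt{x})^{2}\leq 1-x$ on $[0,1]$, and the rescaling estimate $\|\tilde{\rho}-\rho'\|_{1}=1-\tr(\Lambda\rho)\leq\epsilon$ are all correct. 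But what you have proven is $\|\rho-\rho'\|_{1}\leq 2\sqrt{\epsilon}+\epsilon$, and the lemma as stated claims the tight constant $2\sqrt{\epsilon}$. Your closing remark concedes this without closing it, so as a proof of the statement there is a genuine (if small) gap: the triangle-inequality split forces you to pay for the renormalization separately, and no refinement of the two individual bounds can remove the extra $\epsilon$, since each is tight on its own.

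The missing idea is to treat measurement and renormalization jointly through the fidelity; this is the route taken in the reference the paper cites (the paper itself states the lemma without proof, pointing to \cite{wilde13}). Let $\ket{\phi}$ purify $\rho$. Then $(\sqrt{\Lambda}\otimes\mathbbm{1})\ket{\phi}/\sqrt{\tr(\Lambda\rho)}$ purifies $\rho'$, so Uhlmann's theorem gives
\begin{equation*}
F(\rho,\rho')\ \geq\ \frac{|\braket{\phi,(\sqrt{\Lambda}\otimes\mathbbm{1})\phi}|^{2}}{\tr(\Lambda\rho)}\ =\ \frac{(\tr(\sqrt{\Lambda}\rho))^{2}}{\tr(\Lambda\rho)}\ \geq\ \tr(\Lambda\rho)\ \geq\ 1-\epsilon,
\end{equation*}
where the middle inequality uses $\sqrt{\Lambda}\geq\Lambda$, valid because $0\leq\Lambda\leq\mathbbm{1}$. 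The Fuchs--van de Graaf inequality $\frac{1}{2}\|\rho-\rho'\|_{1}\leq\sqrt{1-F(\rho,\rho')}$ then yields $\|\rho-\rho'\|_{1}\leq 2\sqrt{\epsilon}$. Note how the normalization $1/\tr(\Lambda\rho)$ is absorbed into the fidelity bound rather than paid as a separate trace-norm term --- this is exactly the ``partial cancellation'' you gestured at in your final sentence. You are right, however, that your weaker constant would be harmless in this paper: in (\ref{closeness}) inside the proof of Lemma \ref{piinput} the relevant error terms vanish exponentially, so replacing $2\sqrt{\epsilon}$ by $2\sqrt{\epsilon}+\epsilon\leq 3\sqrt{\epsilon}$ changes nothing asymptotically; it just does not prove the lemma as stated.
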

		\begin{lemma}\label{decouplemma}(\cite{boche18} proof of Theorem 3.2 equation (16) )
			Let $\mathcal{F}\subset\mathcal{G}\subset\mathcal{H}$ with $\dim(\cF)=k$ be given. Also let any member of the set $\{\mathcal{N}_{1},\dots,\cN_{|S|}\}\subset\cC^{\downarrow}(\mathcal{H},\mathcal{K})$ have a Kraus representation with $n_{j}$ operators for $j\in\{1,\dots,S\}$ and set
			\begin{equation*}
			\overline{\cN}:=\frac{1}{|S|}\sum_{j=1}^{|S|}\cN_{j}.
			\end{equation*} 
			Then there exists a recovery operation $\cR\in\cC(\cK,\cH)$  such that
			\begin{equation}\label{lb}
			F_{e}(\pi_{\mathcal{F}},\mathcal{R}\circ\overline{\mathcal{N}})\geq w-\parallel D(p)\parallel_{1},
			\end{equation}
			where $w:=\text{tr}(\overline{\mathcal{N}}(\pi_{\mathcal{F}}))$,
			$p:=k\pi_{\mathcal{F}}$
			and
			\begin{equation*}
			D(p):=\sum_{j,l=1}^{|S|}\frac{1}{|S|}\sum_{i,r=1}^{n_{j},n_{l}}D_{(ij)(rl)}(p)\otimes\ket{e_{i}}\bra{e_{r}}\otimes\ket{f_{j}}\bra{f_{l}}.
			\end{equation*}
			
			In the above
			\begin{equation*}
			D_{(ij)(rl)}(p):=\frac{1}{k}(p a_{j,i}a^{\dagger}_{l,r}p-\frac{1}{k}tr(pa_{j,i}^{\dagger}a_{l,r}p)p).
			\end{equation*}
			
			where $\{\ket{f_{1}},...,\ket{f_{|S|}}\}$ and $\{\ket{e_{1}},...,\ket{e_{n|S|}}\}$ are arbitrary orthonormal bases for $\mathbb{C}^{|S|}$ and $\mathbb{C}^{n|S|}$, and where $\{a_{j,i}\}_{i=1}^{n_{j}}$ is the set of Kraus operators for $\mathcal{N}_{j}$.
		\end{lemma}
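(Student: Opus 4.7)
\textbf{Proof plan for Lemma \ref{decouplemma}.}
The strategy is to adapt the decoupling argument of Klesse \cite{klesse} to the averaged channel $\overline{\cN}$. First I would package the internal Kraus index $i$ and the averaging index $j$ into a single environment register. Set $E:=\mathbb{C}^{n|S|}\otimes\mathbb{C}^{|S|}$ and define the isometry $V:\cH\to\cK\otimes E$ by $V x:=\sum_{j,i}\tfrac{1}{\sqrt{|S|}}(a_{j,i}x)\otimes\ket{e_i}\otimes\ket{f_j}$, so that $\overline{\cN}(\rho)=\tr_E(V\rho V^{\dagger})$. Purify $\pi_{\cF}$ to a maximally entangled state $\ket{\Phi}\in\cF\otimes\cF'$ and consider the tripartite pure state $\ket{\Psi}:=(V\otimes I_{\cF'})\ket{\Phi}$ on $\cK\otimes E\otimes\cF'$.

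Second, I would identify the ideal \emph{decoupled} configuration: $\overline{\cN}$ admits an exact entanglement-preserving recovery on $\cF$ if and only if the reduced state $\Psi_{EF'}$ factorises as $\omega_E\otimes\pi_{\cF'}$ for some state $\omega_E$. Writing $p=q_{\cF}=k\pi_{\cF}$ and expanding $\Psi_{EF'}$ in the product basis $\{\ket{e_i f_j}\}$, a direct block-by-block calculation gives
\begin{equation*}
\Psi_{EF'}-\omega_E\otimes\pi_{\cF'}\;=\;\sum_{j,l,i,r}\tfrac{1}{|S|}\,D_{(ij)(rl)}(p)^{T}\otimes\ket{e_i f_j}\bra{e_r f_l},
\end{equation*}
up to the harmless transpose induced by the purification, for the natural choice $\omega_E$ coming from the diagonal blocks. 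Inspecting the right-hand side identifies the trace norm of the deviation with $\|D(p)\|_1$.

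Third, the recovery $\cR$ would be constructed explicitly as the channel that coherently measures the environment register and applies the corresponding isometric inverse --- this would invert $\overline{\cN}$ on $\cF$ if the decoupling were exact. Expanding
\begin{equation*}
F_e(\pi_{\cF},\cR\circ\overline{\cN})=\langle\Phi|(\id_{\cF'}\otimes \cR\circ\overline{\cN})(\ket{\Phi}\bra{\Phi})|\Phi\rangle
\end{equation*}
in the Kraus operators $\{a_{j,i}\}$ and the explicit form of $\cR$, the diagonal contribution reproduces $w=\tr(\overline{\cN}(\pi_{\cF}))$, while the off-diagonal contributions coincide (up to signs and the bookkeeping tensor factor $\ket{e_i f_j}\bra{e_r f_l}$) with the blocks $D_{(ij)(rl)}(p)$. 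A single application of the triangle inequality in trace norm then yields $F_e(\pi_{\cF},\cR\circ\overline{\cN})\geq w-\|D(p)\|_1$.

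The main obstacle will be this third step: keeping the bound \emph{linear} in $\|D(p)\|_1$. A generic Uhlmann/Fuchs--van-de-Graaf route from the decoupling estimate of step two would only give $F_e\geq w-O(\sqrt{\|D(p)\|_1})$, which is too weak for the downstream application in Lemma \ref{entfid}, where the right-hand side is later averaged over a unitary design and paired with $\sqrt{k n_j}\,\|\cN_j(\pi_{\cG})\|_2$. The linear bound, as in \cite{klesse,boche18}, is obtained by not invoking Uhlmann at all: one fixes the recovery explicitly and exploits the bilinear cancellation between the quadratic form defining $F_e$ and the off-diagonal blocks of $D(p)$, so that the first-order deviation from the ideal value $w$ is precisely $-\|D(p)\|_1$ rather than its square root.
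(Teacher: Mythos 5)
You have reconstructed the right architecture, but note first that the paper contains no proof of this lemma at all: it is imported verbatim from \cite{boche18} (proof of Theorem 3.2, equation (16)), which is Klesse's one-shot recovery lemma \cite{klesse} applied to the averaged channel $\overline{\cN}$. Your steps 1 and 2 match that source. The dilation $Vx=\sum_{j,i}|S|^{-1/2}(a_{j,i}x)\otimes\ket{e_{i}}\otimes\ket{f_{j}}$ is correct, and the block computation is sound: the $(ij),(rl)$ block of $\Psi_{EF'}$ is $\frac{1}{|S|k}\,(p\,a^{\dagger}_{l,r}a_{j,i}\,p)^{T}$, so the deviation from $\omega_{E}\otimes\pi_{\cF'}$ is indeed the $D$-operator up to a transpose and an index swap, both of which leave the trace norm unchanged --- provided two readings you make implicitly are made explicit: $\omega_{E}$ must be the \emph{full} matrix with entries $\frac{1}{|S|k}\tr(p\,a^{\dagger}_{l,r}a_{j,i}\,p)$ for all index pairs (if ``diagonal blocks'' means only $(ij)=(rl)$, the identity fails), and the paper's definition of $D_{(ij)(rl)}(p)$ carries a typo ($p\,a_{j,i}a^{\dagger}_{l,r}\,p$ should be $p\,a^{\dagger}_{j,i}a_{l,r}\,p$, as the 2-norm formula used in the proof of Lemma \ref{entfid} confirms); your calculation silently corrects it.

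Step 3, however, contains the genuine gap. As literally described, a recovery that ``coherently measures the environment register'' is not an element of $\cC(\cK,\cH)$ --- Bob has no access to $E$. What \cite{klesse} actually does is: first exploit the unitary freedom of Kraus representations to diagonalize the positive semidefinite matrix $\Lambda$ with entries $\frac{1}{k}\tr(p\,a^{\dagger}_{(ij)}a_{(rl)}\,p)$ (a transformation under which $\parallel D(p)\parallel_{1}$ is invariant), obtaining remixed Kraus operators $b_{m}$ with $\frac{1}{k}\tr(p\,b_{m}^{\dagger}b_{m'}\,p)=\delta_{mm'}\mu_{m}$; then build the recovery \emph{on the output space} from the polar decompositions $b_{m}p=v_{m}|b_{m}p|$, the branch measurement being the projections onto the ranges $v_{m}\cF\subset\cK$. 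Without the diagonalization the branch projections are not even defined, and the final estimate is not ``a single application of the triangle inequality'': controlling $\frac{1}{k}\tr|b_{m}p|$ against $\mu_{m}$ and the cross terms against the off-diagonal blocks of $D(p)$ with overall constant $1$ is the entire technical content of the lemma, which your plan asserts rather than supplies. Finally, your stated rationale for avoiding Uhlmann is based on the wrong direction of Fuchs--van de Graaf: converting a trace-norm decoupling estimate into fidelity is linear, $F(\rho,\sigma)\geq 1-\parallel\rho-\sigma\parallel_{1}$; the square root appears only when passing from fidelity to trace norm. So an Uhlmann route from your step 2 would also yield $F_{e}\geq w-c\parallel D(p)\parallel_{1}$ (with some care for subnormalization, since $\tr\Psi_{EF'}=w$); Klesse's explicit construction is a choice, not a necessity, and the ``main obstacle'' you identify is spurious.
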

		\begin{lemma}\label{robustlemma}(see \cite{ahls})
			If a function $f:S^{l}\to[0,1]$, satisfies 
			\begin{equation}
			\sum_{s^{l}\in S^{l}}f(s^{l})q^{l}(s^{l})\geq 1-\gamma
			\end{equation}
			with $q^{l}(s^{l}):=\prod_{i=1}^{l}q(s_{i})$, for all $q\in\cT(l,S)$ and some $\gamma\in[0,1]$, then
			\begin{equation}
			\frac{1}{l!}\sum_{\sigma\in \mathfrak{S}_{l}}f(\sigma(s^{l}))\geq 1-(l+1)^{|S|}\cdot\gamma\ \ \forall{s^{l}\in S^{l}}. 
			\end{equation}
		\end{lemma}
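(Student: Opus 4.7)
The plan is to exploit two standard properties from the method of types: first, the i.i.d.\ distribution $q^{l}$ is constant on each type class $T_{q}^{l}$; and second, the orbit of any sequence $s^{l}$ under the coordinate-permutation action of $\mathfrak{S}_{l}$ is exactly the type class $T_{q_{s^{l}}}^{l}$, with every element visited the same number of times. Together these observations reduce the symmetrized average $\frac{1}{l!}\sum_{\sigma}f(\sigma(s^{l}))$ to the uniform average of $f$ over one single type class, which can then be compared against the hypothesis applied at the matching type $q$.

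Concretely, fix $s^{l}\in S^{l}$ and let $q:=q_{s^{l}}\in\cT(l,S)$ denote its empirical type. By the orbit-stabilizer theorem, each element of $T_{q}^{l}$ is hit exactly $l!/|T_{q}^{l}|$ times as $\sigma$ ranges over $\mathfrak{S}_{l}$, so
\[
\frac{1}{l!}\sum_{\sigma\in\mathfrak{S}_{l}}f(\sigma(s^{l}))\;=\;\frac{1}{|T_{q}^{l}|}\sum_{\tilde{s}^{l}\in T_{q}^{l}}f(\tilde{s}^{l}).
\]
Since $q^{l}(\tilde{s}^{l})$ depends only on the type of $\tilde{s}^{l}$, it takes the constant value $q^{l}(T_{q}^{l})/|T_{q}^{l}|$ on $T_{q}^{l}$, which lets me rewrite the right-hand side as $q^{l}(T_{q}^{l})^{-1}\sum_{\tilde{s}^{l}\in T_{q}^{l}}f(\tilde{s}^{l})q^{l}(\tilde{s}^{l})$. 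The hypothesis, applied at this specific type $q$, together with $0\leq f\leq 1$ on the complement, yields
\[
\sum_{\tilde{s}^{l}\in T_{q}^{l}}f(\tilde{s}^{l})q^{l}(\tilde{s}^{l})\;\geq\;(1-\gamma)-q^{l}(S^{l}\setminus T_{q}^{l})\;=\;q^{l}(T_{q}^{l})-\gamma,
\]
so that $\frac{1}{l!}\sum_{\sigma}f(\sigma(s^{l}))\geq 1-\gamma/q^{l}(T_{q}^{l})$.

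To finish, I would invoke the classical type-counting estimate $q^{l}(T_{q}^{l})\geq(l+1)^{-|S|}$, which follows from the observations that there are at most $(l+1)^{|S|}$ types on $S^{l}$ and that, among all types on $l$-sequences, $q$ itself maximises the probability $q^{l}(T_{\,\cdot\,}^{l})$. Combining this with the bound from the previous paragraph delivers $\frac{1}{l!}\sum_{\sigma}f(\sigma(s^{l}))\geq 1-(l+1)^{|S|}\gamma$ uniformly in $s^{l}$. There is no real technical obstacle here: the argument is purely combinatorial, and the only steps requiring any care are the orbit-stabilizer counting on $\mathfrak{S}_{l}$ and the standard type-volume lower bound, both of which are routine facts from the method of types.
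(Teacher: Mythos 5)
Your proof is correct, and it is precisely the standard argument for Ahlswede's robustification: symmetrization collapses to a uniform average over the type class of $s^{l}$ via orbit counting, $q^{l}$ is constant on that class, and the type-counting bound $q^{l}(T_{q}^{l})\geq (l+1)^{-|S|}$ (the empirical type maximizes the class probability among at most $(l+1)^{|S|}$ types) finishes it. Note that the paper itself states Lemma \ref{robustlemma} without proof, deferring to \cite{ahls}, and your derivation reproduces that cited argument, so there is nothing to reconcile.
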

		\begin{lemma}\label{2epsilonlemma}(see \cite{ahls})
Let $K\in\mathbb{N}$ and numbers $a_{1},\dots,a_{K},b_{1},\dots,b_{K}\in[0,1]$ be given. Assume that
\begin{equation*}
\frac{1}{K}\sum_{i=1}^{K}a_{i}\geq 1-\epsilon
\end{equation*}
and
\begin{equation*}
\frac{1}{K}\sum_{i=1}^{K}b_{i}\geq 1-\epsilon
\end{equation*}
hold. Then
\begin{equation}
\frac{1}{K}\sum_{i=1}^{K}a_{i}b_{i}\geq 1-2\epsilon.
\end{equation}
		\end{lemma}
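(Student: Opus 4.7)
The plan is to reduce the averaged inequality to an elementary termwise bound that exploits the hypothesis that every $a_i$ and every $b_i$ lies in the unit interval. The pointwise bound I will use is
\begin{equation*}
a_i b_i \;\geq\; a_i + b_i - 1 \qquad (i = 1,\dots,K),
\end{equation*}
which is immediate from $(1-a_i)(1-b_i) \geq 0$: expanding the left-hand side gives $1 - a_i - b_i + a_i b_i \geq 0$, and this nonnegativity holds precisely because $a_i, b_i \in [0,1]$. Hence the bound $a_i b_i \geq a_i + b_i - 1$ is nothing but an unpacking of $(1-a_i)(1-b_i) \geq 0$.

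Once this termwise inequality is in hand, averaging over $i \in \{1,\dots,K\}$ and then applying the two hypotheses directly finishes the proof:
\begin{equation*}
\frac{1}{K}\sum_{i=1}^{K} a_i b_i \;\geq\; \frac{1}{K}\sum_{i=1}^{K} a_i \;+\; \frac{1}{K}\sum_{i=1}^{K} b_i \;-\; 1 \;\geq\; (1-\epsilon) + (1-\epsilon) - 1 \;=\; 1 - 2\epsilon.
\end{equation*}
The first inequality is just the average of the pointwise bound, and the second uses both inequalities assumed in the hypothesis.

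There is essentially no obstacle to overcome here; the only subtlety worth flagging is that the interval constraint $a_i,b_i \in [0,1]$ is genuinely needed for the pointwise step, since without it $(1-a_i)(1-b_i)$ could be negative and the reduction fails. The constant $2$ in the conclusion is sharp, as one sees by taking $a_i = b_i = 1-\epsilon$ for every $i$, in which case the two averages in the hypothesis are exactly $1-\epsilon$ and the average of $a_i b_i$ is $(1-\epsilon)^2 = 1 - 2\epsilon + \epsilon^2$, matching the claimed lower bound up to the quadratic correction.
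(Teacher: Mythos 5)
Your proof is correct, and it is the standard argument behind this lemma: the paper itself gives no proof (it only cites Ahlswede), and the cited argument is exactly your termwise bound $1-a_ib_i\leq(1-a_i)+(1-b_i)$, equivalently $(1-a_i)(1-b_i)\geq 0$, followed by averaging. Nothing further is needed.
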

	
\end{document}